\newcommand{\nat}{\mathbb{N}}
\newcommand{\bool}{\mathbb{B}}
\newcommand{\ldot}{\mathpunct{.}}
\newcommand{\traceVars}{\mathcal{V}}
\newcommand{\trajVars}{\mathcal{B}}
\newcommand{\traceSet}{\mathbb{T}}
\newcommand{\traces}[1]{\mathit{Traces}(#1)}
\newcommand{\paths}[1]{\mathit{Paths}(#1)}
\newcommand{\calT}{\mathcal{T}}
\newcommand{\calG}{\mathcal{G}}
\newcommand{\calA}{\mathcal{A}}
\newcommand{\calL}{\mathcal{L}}
\newcommand{\calX}{\mathcal{X}}
\newcommand{\calF}{\mathcal{F}}
\newcommand{\calW}{\mathcal{W}}
\newcommand{\quant}{\mathds{Q}}
\newcommand{\cmark}{\ding{51}}%
\newcommand{\xmark}{\ding{55}}%
\newcommand{\moved}{\mathit{moved}}
\newcommand{\game}[3]{\mathcal{G}_{#1, #2, #3}}
\newcommand{\refuter}{\mathfrak{R}}
\newcommand{\verifier}{\mathfrak{V}}
\newcommand{\trajE}{\text{\normalfont \sffamily E}}
\newcommand{\trajA}{\text{\normalfont\sffamily A}}
\newcommand{\trajQ}{\text{\normalfont\sffamily Q}}
\newcommand*\circled[1]{\tikz[baseline=(char.base)]{
		\node[shape=circle,draw,inner sep=0.7pt] (char) {#1};}}
\newcommand{\stage}{\mathit{stage}}
\newcommand{\fstage}{\protect\circled{\scriptsize$\forall$}}
\newcommand{\estage}{\protect\circled{\scriptsize$\exists$}}
\newcommand{\ustage}{\protect\circled{\scriptsize \normalfont \textsf{U}}}
\newcommand{\tool}[0]{\texttt{HyMCA}}
\newcommand{\hyperqb}[0]{\texttt{HyperQB}}
\providecommand{\ltlN}{\operatorname{%
		\tikz[baseline]{
			\draw[line width=.12ex]
			(0,.6ex) circle (.8ex);
}}}{}
\providecommand{\ltlF}{\operatorname{%
		\tikz[baseline]{
			\draw[line width=.12ex,line join=round]
			(0ex,.6ex) -- (.95ex,1.55ex) -- (1.9ex,.6ex) -- (.95ex,-.35ex) -- cycle;
}}}{}
\providecommand{\ltlG}{\operatorname{%
		\tikz[baseline]{
			\draw[line width=.12ex,line join=round]
			(0ex,-.2ex) -- (0ex,1.3ex) -- (1.5ex,1.3ex) -- (1.5ex,.-.2ex) -- cycle;
}}}{}
\DeclareMathOperator{\ltlU}{\mathcal{U}}
\DeclareMathOperator{\stutter}{\blacktriangleright\!}
\newcommand\xqed[1]{%
	\leavevmode\unskip\penalty9999 \hbox{}\nobreak\hfill
	\quad\hbox{#1}}
\newcommand\demo{\xqed{$\triangle$}}
\newcommand{\theory}{\mathfrak{T}}
\newcommand{\values}{\mathbb{V}}
\definecolor{myOrange}{RGB}{151, 64,7}
\definecolor{myGrey}{RGB}{80, 80,80}
\definecolor{dkblue}{rgb}{0,0.1,0.5}
\colorlet{comment-color}{black!50}
\newtheorem{remark}{Remark}
\lstdefinelanguage{custom-lang}{
	keywords={while, if, then, else,readInput,repeat},
	keywordstyle=[1]\color{myOrange},
	morekeywords=[2]{true,false},
	keywordstyle=[2]\color{myGrey},
	morekeywords=[3]{product,nestedStateFormulas,LTLtoAPA},
	keywordstyle=[3]\color{dkblue},
	comment=[l][\color{comment-color}]{//},
	literate=%
	{=}{{\color{myOrange}=}}1
	{+}{{\color{myOrange}+}}1
	{/}{{\color{myOrange}/}}1
	{!}{{\color{myOrange}!}}1
	{xor}{{\color{myOrange}xor}}1
	{not}{{\color{myOrange}not}}1
	{*}{{\color{myOrange}*}}1
	{&}{{\color{myOrange}\&}}1
	{//}{{\color{black!50!white}/\!/}}1
	{|}{{{|}}}1
	{:}{{{:}}}1
	{:=}{{{:=}}}1
	{@}{ }1
}
\lstdefinestyle{default}{
	escapeinside={(*}{*)},
	basicstyle=\ttfamily\fontsize{10}{12},
	columns=fullflexible,
	commentstyle=\rmfamily\color{black!50!white},
	framexleftmargin=1em,
	framexrightmargin=1ex,
	keepspaces=true,
	keywordstyle=,
	mathescape,
	numbers=left,
	numberblanklines=false,
	numbersep=0.5em,
	numberstyle=\relscale{0.75}\color{gray}\ttfamily,
	showstringspaces=true,
	stepnumber=1,
	xleftmargin=1.2em,
	xrightmargin=-1em,
}
\newif\iffullversion
\newcommand{\ifFull}[2]{\iffullversion#1\else#2\fi}
\begin{document}

\title{Verifying Asynchronous Hyperproperties in Reactive Systems}

\author{Raven Beutner}
\email{raven.beutner@cispa.de}
\orcid{0000-0001-6234-5651}
\affiliation{%
	\institution{CISPA Helmholtz Center for Information Security}
	\country{Germany}
}

\author{Bernd Finkbeiner}
\email{finkbeiner@cispa.de}
\orcid{0000-0002-4280-8441}
\affiliation{%
  \institution{CISPA Helmholtz Center for Information Security}
  \country{Germany}
}


\begin{abstract}
Hyperproperties are system properties that relate multiple execution traces and commonly occur when specifying information-flow and security policies.
Logics like HyperLTL utilize explicit quantification over execution traces to express temporal hyperproperties in reactive systems, i.e., hyperproperties that reason about the temporal behavior along infinite executions.
An often unwanted side-effect of such logics is that they compare the quantified traces \emph{synchronously}.
This prohibits the logics from expressing properties that compare multiple traces asynchronously, such as Zdancewic and Myers's \emph{observational determinism}, McLean's \emph{non-inference}, or \emph{stuttering refinement}. 
We study the model-checking problem for a variant of \emph{asynchronous HyperLTL} (A-HLTL), a temporal logic that can express hyperproperties where multiple traces are compared across timesteps. 
In addition to quantifying over system traces, A-HLTL features secondary quantification over stutterings  of these traces. 
Consequently, A-HLTL allows for a succinct specification of many widely used asynchronous hyperproperties. 
Model-checking A-HLTL requires finding suitable stutterings, which, thus far, has been only possible for very restricted fragments or \emph{terminating} systems. 
In this paper, we propose a novel game-based approach for the verification of arbitrary $\forall^*\exists^*$  A-HLTL formulas in \emph{reactive} systems. 
In our method, we consider the verification as a game played between a verifier and a refuter, who challenge each other by controlling parts of the underlying traces and stutterings. 
A winning strategy for the verifier then corresponds to concrete witnesses for existentially quantified traces and asynchronous alignments for existentially quantified stutterings.
We identify fragments for which our game-based interpretation is complete and thus constitutes a finite-state decision procedure. 
We contribute a prototype implementation for finite-state systems and report on encouraging experimental results. 
\end{abstract}

\begin{CCSXML}
	<ccs2012>
	<concept>
	<concept_id>10003752.10003790.10003793</concept_id>
	<concept_desc>Theory of computation~Modal and temporal logics</concept_desc>
	<concept_significance>500</concept_significance>
	</concept>
	<concept>
	<concept_id>10003752.10003790.10002990</concept_id>
	<concept_desc>Theory of computation~Logic and verification</concept_desc>
	<concept_significance>500</concept_significance>
	</concept>
	<concept>
	<concept_id>10003752.10003790.10011192</concept_id>
	<concept_desc>Theory of computation~Verification by model checking</concept_desc>
	<concept_significance>500</concept_significance>
	</concept>
	<concept>
	<concept_id>10002978.10002986.10002990</concept_id>
	<concept_desc>Security and privacy~Logic and verification</concept_desc>
	<concept_significance>500</concept_significance>
	</concept>
	</ccs2012>
\end{CCSXML}

\ccsdesc[500]{Theory of computation~Modal and temporal logics}
\ccsdesc[500]{Theory of computation~Logic and verification}
\ccsdesc[500]{Theory of computation~Verification by model checking}
\ccsdesc[500]{Security and privacy~Logic and verification}

\keywords{Temporal Logics, HyperLTL, Asynchronous HyperLTL, Model-Checking, Game-based Semantics, Observational Determinism, Refinement, Hyperliveness}


\maketitle

\section{Introduction}\label{sec:intro}

In 2008, \citet{ClarksonS08} coined the term \emph{hyperproperties} for the rich class of system requirements that relate multiple computations. 
Contrary to traditional trace properties (i.e., properties that reason about individual executions, formally defined as \emph{sets of traces}), hyperproperties (formally defined as \emph{sets of sets of traces}) capture the interaction of multiple computations.
This covers a wide range of requirements, including information-flow policies \cite{GoguenM82a,GuarnieriKMRS20}, robustness \cite{BiewerDFGHHM22}, continuity \cite{ChaudhuriGL12}, knowledge \cite{BozzelliMP15,BeutnerFFM23}, and linearizability \cite{HerlihyW90}.
For example, \citet{ZdancewicM03}'s seminal definition of \emph{observational determinism} (OD) requires that all \emph{pairs} of executions with the same low-security input have the same sequence of low-security observations.
Likewise, \citet{McLean94}'s \emph{non-inference} requires that for every execution of the system, there exists a matching execution that has the same (sequence of) low-security observations despite having a fixed ``dummy'' high-security input; both are hyperproperties.

Missing from \citet{ClarksonS08}'s general definition was, however, a concrete specification language for hyperproperties. 
The introduction of HyperLTL \cite{ClarksonFKMRS14} sparked an extensive development of \emph{temporal} logics for expressing and reasoning about hyperproperties in reactive systems, i.e., systems that continuously interact with an environment and thus produce \emph{infinite} execution traces. 
HyperLTL extends linear-time temporal logic (LTL) \cite{Pnueli77} with explicit quantification over traces in a system.
For example, we can express a simplified form of observational determinism as $\varphi^\mathit{syn}_{\mathit{OD}} := \forall \pi_1. \forall \pi_2. (l_{\pi_1} = l_{\pi_2}) \to \ltlG (o_{\pi_1} = o_{\pi_2})$, stating that all pairs of executions $\pi_1, \pi_2$ with initially the same low-security input (modeled via variable $l$), should globally (expressed using LTL's globally operator $\ltlG$) produce the same output (modeled by $o$). 

Crucially, the semantics of HyperLTL -- and that of most other logics for temporal hyperproperties \cite{GutsfeldMO20,FinkbeinerHHT20,GiacomoFMP21,CoenenFHH19,Rabe16,BajwaZCV23,BeutnerF23,BeutnerFFM23,DimitrovaFT20,AbrahamBBD20,AbrahamB18} -- is \emph{synchronous}.
That is, the logic can relate multiple traces in a system, but -- during the evaluation of the LTL body -- time progresses in lock-step on all traces. 
This is too restrictive for many properties.
As a simple example, consider the program in \Cref{fig:insec}.
The program branches on the (Boolean) high-security input $h$ and updates $o$ in either branch to \lstinline[style=default, language=custom-lang]|$o$ xor $l$|.
Initially equal values of $l$ thus imply that the sequence of outputs is the same; the program satisfies \citet{ZdancewicM03}'s original definition of OD. 
The program does, however, \emph{violate} the synchronous HyperLTL property expressed in $\varphi^\mathit{syn}_{\mathit{OD}}$; the update of $o$ is delayed by one step in the first branch (due to the intermediate write to $t$ in line \ref{line:true1}), so the LTL body is violated during its synchronous evaluation. 
\Cref{fig:traces} depicts two example executions of \Cref{fig:insec}, illustrating how different executions delay the time point where the output changes.

\begin{figure}[!t]
	
	\begin{subfigure}{0.3\linewidth}
		\newsavebox{\myboxi}
		\begin{lrbox}{\myboxi}
\begin{code}
$o$ = $\bot$ (*\label{line:init}*)
repeat (*\label{line:loop}*)
@@$h$ = readInput$_H$() (*\label{line:read}*)
@@if $h$ then (*\label{line:branch}*)
@@@@$\mathit{t}$ = $o$ xor $l$ (*\label{line:true1}*)
@@@@$o$ = $\mathit{t}$ (*\label{line:true2}*)
@@else 
@@@@$o$ = $o$ xor $l$ (*\label{line:false}*)
\end{code}
		\end{lrbox}
		\centering
		\scalebox{0.9}{\usebox{\myboxi}}
		\subcaption{}\label{fig:insec}
	\end{subfigure}%
	\begin{subfigure}{0.7\linewidth}
		\centering
		\small
		\scalebox{0.8}{
			\begin{tikzpicture}
				\node[] at (-2.5, 0) () {\large $\pi_1$:};
				\node[] at (-2.5, -2.75) () {\large $\pi_2$:};
				
				\node[rectangle,align=center, draw, thick,minimum width=9mm] at (-1.5,0) (n0) {\ref{line:init}\\[1mm] $\begin{aligned}
						&l \mapsto \top \\
						&o \mapsto \_\\ 
						&h \mapsto \_ \\
						&t \mapsto \_
					\end{aligned}$};
				
				\node[rectangle,align=center, draw, thick,minimum width=9mm] at (0,0) (n1) {\ref{line:loop}\\[1mm] $\begin{aligned}
						&l \mapsto \top \\
						&o \mapsto \bot\\ 
						&h \mapsto \_ \\
						&t \mapsto \_
					\end{aligned}$};
				
				\node[rectangle,align=center, draw, thick,minimum width=9mm] at (1.5,0) (n2) {\ref{line:read}\\[1mm] $\begin{aligned}
						&l \mapsto \top\\
						&o \mapsto \bot\\ 
						&h \mapsto \_\\
						&t \mapsto \_
					\end{aligned}$};
				
				\node[rectangle,align=center, draw, thick,minimum width=9mm] at (3,0) (n3) {\ref{line:branch}\\[1mm] $\begin{aligned}
						&l \mapsto \top\\
						 &o \mapsto \bot\\ 
						&h \mapsto \bot\\
						&t \mapsto \_
					\end{aligned}$};
				
				\node[rectangle,align=center, draw, thick,minimum width=9mm] at (4.5,0) (n4) {\ref{line:false}\\[1mm] $\begin{aligned}
						&l \mapsto \top\\
						&o \mapsto \top\\ 
						&h \mapsto \bot\\
						&t \mapsto \_
					\end{aligned}$};
				
				\node[rectangle,align=center, draw, thick,minimum width=9mm] at (6,0) (n5) {\ref{line:loop}\\[1mm] $\begin{aligned}
						&l \mapsto \top\\
						&o \mapsto \top\\ 
						&h \mapsto \bot\\
						&t \mapsto \_
					\end{aligned}$};
				
				\node[rectangle,align=center, draw, thick,minimum width=9mm] at (-1.5,-2.75) (m0) {\ref{line:init}\\[1mm] $\begin{aligned}
						&l \mapsto \top\\
						&o \mapsto \_\\ 
						&h \mapsto \_\\
						&t \mapsto \_
					\end{aligned}$};

				\node[rectangle,align=center, draw, thick,minimum width=9mm] at (0,-2.75) (m1) {\ref{line:loop}\\[1mm] $\begin{aligned}
						&l \mapsto \top\\
						&o \mapsto \bot\\ 
						&h \mapsto \_\\
						&t \mapsto \_
					\end{aligned}$};
				
				\node[rectangle,align=center, draw, thick,minimum width=9mm] at (1.5,-2.75) (m2) {\ref{line:read}\\[1mm] $\begin{aligned}
						&l \mapsto \top\\
						&o \mapsto \bot\\ 
						&h \mapsto \_\\
						&t \mapsto \_
					\end{aligned}$};
				
				\node[rectangle,align=center, draw, thick,minimum width=9mm] at (3,-2.75) (m3) {\ref{line:branch}\\[1mm] $\begin{aligned}
						&l \mapsto \top\\
						&o \mapsto \bot,\\ 
						&h \mapsto \top\\
						&t \mapsto \_
					\end{aligned}$};
				
				\node[rectangle,align=center, draw, thick,minimum width=9mm] at (4.5,-2.75) (m4) {\ref{line:true1}\\[1mm] $\begin{aligned}
						&l \mapsto \top\\
						&o \mapsto \top,\\ 
						&h \mapsto \top\\
						&t \mapsto \_
					\end{aligned}$};
				
				\node[rectangle,align=center, draw, thick,minimum width=9mm] at (6,-2.75) (m5) {\ref{line:true2}\\[1mm] $\begin{aligned}
						&l \mapsto \top\\
						&o \mapsto \top,\\ 
						&h \mapsto \top\\
						&t \mapsto \bot
					\end{aligned}$};
				
				\node[rectangle,align=center, draw, thick,minimum width=9mm] at (7.5,-2.75) (m6) {\ref{line:loop}\\[1mm] $\begin{aligned}
						&l \mapsto \top\\
						&o \mapsto \bot,\\ 
						&h \mapsto \top\\
						&t \mapsto \bot
					\end{aligned}$};
				
				\draw[-, thick] (n0) -- (n1);
				\draw[-, thick] (n1) -- (n2);
				\draw[-, thick] (n2) -- (n3);
				\draw[-, thick] (n3) -- (n4);
				\draw[-, thick] (n4) -- (n5);
				\draw[-, thick,dotted] (n5) -- (7,0);
				
				\draw[-, thick] (m0) -- (m1);
				\draw[-, thick] (m1) -- (m2);
				\draw[-, thick] (m2) -- (m3);
				\draw[-, thick] (m3) -- (m4);
				\draw[-, thick] (m4) -- (m5);
				\draw[-, thick] (m5) -- (m6);
				\draw[-, thick,dotted] (m6) -- (8.5,-2.75);
				
				\draw[-, very thick, dashed,black!50] (n0) -- (m0);
				\draw[-, very thick, dashed,black!50] (n1) -- (m1);
				\draw[-, very thick, dashed, black!50] (n2) -- (m2);
				\draw[-, very thick, dashed, black!50] (n3) -- (m3);
				\draw[-, very thick, dashed, black!50] (n4) -- (m4);
				\draw[-, very thick, dashed, black!50] (n4) -- (m5);
				\draw[-, very thick, dashed, black!50] (n5) -- (m6);
				
			\end{tikzpicture}
		}
		\subcaption{}\label{fig:traces}
	\end{subfigure}
	\vspace{-4mm}
	\caption{
		\Cref{fig:insec} depicts a Boolean program over variables $o, l$, $h$, and $t$.
		In \Cref{fig:traces}, we depict two executions $\pi_1, \pi_2$.
		Each state contains the current program line and the current value of all variables (initially, we set the low-security input $l$ to $\top$). 
		On $\pi_1$, the read in line \ref{line:read} sets $h$ to $\bot$, and on $\pi_2$ it assigns $h$ to $\top$. 
	}
\end{figure}
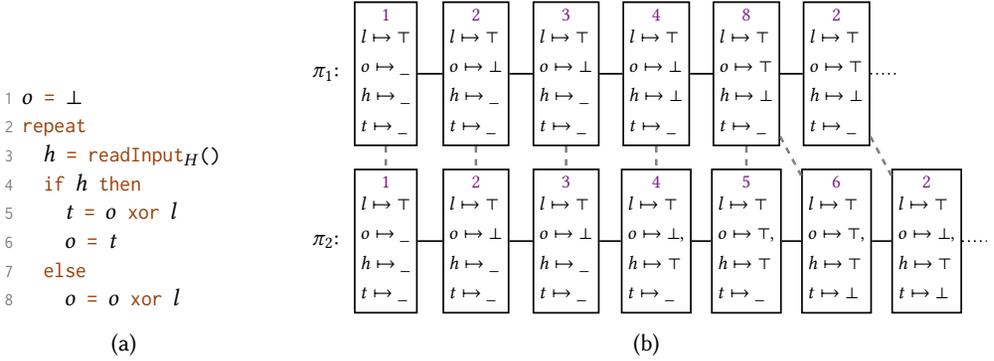

\paragraph{Asynchronous Hyperproperties}

Many security and information-flow properties -- particularly in the study of \emph{distributed} programs -- thus cannot be expressed in HyperLTL's rigid synchronous semantics.
This points to a sharp contrast: Hyperproperties, per \citet{ClarksonS08}'s definition, were never ``synchronous'' and, instead, loosely defined as \emph{sets of sets of traces} and prominent properties \cite{ZdancewicM03,McLean94,McCullough88} only reason about sequences of events, without any form of synchronous timesteps. 
Yet, most existing \emph{logics} for expressing temporal hyperproperties enforce a synchronous traversal of all traces.

\paragraph{Asynchronous HyperLTL}

It turns out that we can significantly extend the expressiveness of HyperLTL -- allowing us to precisely capture many security properties out of reach for synchronous hyperlogics -- using a simple idea: \emph{stuttering}. 
Formally, Asynchronous HyperLTL (A-HLTL for short) extends HyperLTL with explicit quantification over stutterings of system traces \cite{BaumeisterCBFS21}.
In this paper, we use a novel (yet equivalent) variant of A-HLTL that directly quantifies over stuttered traces instead of using the so-called trajectories used by \citet{BaumeisterCBFS21} (see \Cref{sec:ahltl} for details).
Given a trace $\pi$ in the system, we write $\beta \stutter \pi$ to denote that the trace $\beta$ is a fair stuttering of trace $\pi$.
In A-HLTL, we can then state
\begin{align}\label{eq:asyn-od}
	\forall \pi_1. \forall \pi_2. \exists \beta_1 \stutter \pi_1. \exists \beta_2 \stutter \pi_2 \ldot (l_{\beta_1} = l_{\beta_2}) \to  \ltlG (o_{\beta_1} = o_{\beta_2}) \tag{$\varphi_{\mathit{OD}}$}
\end{align}
requiring that for \emph{every} pair of traces $\pi_1, \pi_2$ in the system, there \emph{exist} stutterings $\beta_1$ of $\pi_1$ and $\beta_2$ of $\pi_2$, such that the stuttered traces $\beta_1, \beta_2$ satisfy $(l_{\beta_1} = l_{\beta_2}) \to  \ltlG (o_{\beta_1} = o_{\beta_2})$.
\ref{eq:asyn-od} thus requires that all pairs of traces $\pi_1, \pi_2$ with an initially equal value of $l$, are \emph{stutter-equivalent} on $o$, \emph{precisely} capturing \citet{ZdancewicM03}'s original definition of observational determinism. 
The program in \Cref{fig:insec} satisfies \ref{eq:asyn-od}: any two traces traverse the same sequence of output values and can, therefore, be aligned by stuttering appropriately. 
In \Cref{fig:traces}, we visualize a possible stuttering that aligns the outputs using the dashed lines. 

\paragraph{Verifying Asynchronous HyperLTL}

\citet{BaumeisterCBFS21} and \citet{HsuBFS23} demonstrate that A-HLTL is an expressive logic that allows for succinct high-level specifications of, e.g.,  many commonly used information-flow policies and asynchronous \emph{refinement} properties (see \Cref{sec:related-work} for more details).
In particular, we can use A-HLTL to directly \emph{quantify} over stutterings without manually identifying alignment points (such as positions where the output of a system changes, or non-visible steps in a refinement property); see \Cref{sec:related-work} for an extensive discussion.
While A-HLTL seems like a simple extension of HyperLTL, the quantification over stutterings significantly complicates model-checking, so we cannot apply existing algorithms developed for HyperLTL (which often heavily rely on HyperLTL's synchronous evaluation by, e.g., using automata to summarize trace combinations \cite{FinkbeinerRS15,BeutnerF23b}).
Instead, existing approaches for A-HLTL either employ a bounded unrolling of the system and are thus limited to \emph{terminating} systems (i.e., systems that reach a final state after a \emph{fixed} number of steps) \cite{HsuBFS23} or consider restricted fragments of A-HLTL that can be manually reduced to HyperLTL \cite{BaumeisterCBFS21}.

\paragraph{Game-Based Verification for A-HLTL}

In this paper, we propose a novel method for the verification of A-HLTL in reactive systems that can verify relevant properties well beyond the fragments supported by previous methods. 
In our approach, we interpret the verification of an A-HLTL formula as a game between a verifier and a refuter; similar to successful approximations for synchronous HyperLTL \cite{CoenenFST19,BeutnerF22CSF}.
In our game, we construct the traces and stutterings \emph{step-wise} and yield the control of traces -- \emph{and their stutterings} -- to the players. 
Intuitively, in each step, the refuter can extend all universally quantified traces and decide on whether universally quantified stutterings should be progressed or stuttered.
The verifier can make similar decisions for all existentially quantified traces and stutterings. 
We provide an overview in \Cref{sec:overview}.
We show that our game-based approximation is sound: If the verifier wins the game, the A-HLTL formula is satisfied.
As model-checking of A-HLTL is, in general, undecidable (even in finite-state systems!) \cite{BaumeisterCBFS21}, our approach is necessarily incomplete.
We identify fragments of A-HLTL for which our game-based interpretation is complete and thus constitutes a finite-state \emph{decision procedure}.
In particular, we prove that our method  is complete for the fragments supported by previous approaches \cite{BaumeisterCBFS21,HsuBFS23}. 

\begin{wrapfigure}{R}{0.25\linewidth}
	\vspace{-2mm}
	\centering
		\begin{tikzpicture}
			\node[circle,draw, thick,label=below:{\small$\{a = 0\}$}] at (0, 0) (n0) { $s_0$};
			
			\node[circle,draw,thick,label=above:{\small$\{a = 1\}$}] at (1.2, 0.75) (n1) { $s_1$};
			
			\node[circle,draw,thick,label=below:{\small$\{a = 1\}$}] at (1.2, -0.75) (n2) { $s_2$};
			
			\draw[->, thick] (n0) -- (n1); 
			
			\draw[->, thick] (-0.6,0) -- (n0); 
			
			\draw[->, thick] (n1) edge[bend right] (n2);
			\draw[->, thick] (n2) edge[bend right] (n1);
			
			\draw[->, thick] (n2) -- (n0); 
			
			\draw [->, thick] (n0) edge[loop above] (n0);
		\end{tikzpicture}
	
	\caption{Example system}\label{fig:system-asyn}
\end{wrapfigure}
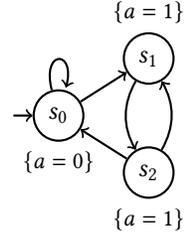

\paragraph{Implementation.}
Our game-based verification method applies to finite and infinite-state systems (see \Cref{sec:related-work}).
To compare with existing approaches and tools, we implement our verification approach for finite-state transition systems in a prototype tool and compare it against a bounded model-checking approach \cite{HsuBFS23}.

\paragraph{Structure and Contributions.}

In short, our contributions include the following: 
\textbf{(1)} We introduce a simpler (yet equivalent) variant of A-HLTL that directly quantifies over stuttering (\Cref{sec:ahltl}); \textbf{(2)} We propose a novel game-based semantics for the verification of asynchronous hyperproperties specified in A-HLTL (\Cref{sec:verification}); \textbf{(3)} We identify fragments for which our game-based interpretation is complete and thus yields a decision procedure for finite-state systems. Most notably, this includes all admissible formulas \cite{BaumeisterCBFS21}, for which we prove that the verifier can follow a canonical (``maximal'') stuttering strategy (\Cref{sec:complete});  \textbf{(4)} We implement our approach for finite-state systems and evaluate it on existing benchmarks (\Cref{sec:eval}).

\section{Overview}\label{sec:overview}

In this section, we illustrate how we can use games to verify hyperproperties that quantify over stutterings. 
Consider the $3$-state transition system $\calT$ in \Cref{fig:system-asyn}, where each state assigns variable $a$ to some integer value, and the following A-HLTL formula
\begin{align}\label{eq:fair}
	\forall \pi_1. \exists \pi_2. \exists\beta_1 \stutter \pi_1. \exists \beta_2 \stutter \pi_2\ldot \ltlN \ltlG (a_{\beta_1} \neq a_{\beta_2}).\tag{$\varphi_{\mathit{fair}}$}
\end{align}
This formula  requires that for any execution $\pi_1$, there exists some execution $\pi_2$ and some stutterings $\beta_1, \beta_2$ (of $\pi_1$ and $\pi_2$, respectively) such that, starting from the second step (using LTL's next operator $\ltlN$), the value of $a$ differs between $\beta_1$ and $\beta_2$.\footnote{The A-HLTL formula we consider here is meant as a simple example that demonstrates the key ideas underlying our verification method. 
Still, if we interpret $a = v$ (for $v \in \{0,1\}$) as ``a grant was given to process $v$'', we can view \ref{eq:fair} as a simplified \emph{fairness} property: it requires that for each execution, some other execution results in the exact opposite grant assignment (up to stuttering), i.e., both processes can receive the grant \emph{equally often}.
}
This property holds on $\calT$. 
For example, take the trace $\pi_1 = \{a=0\}\{a=0\}\big(\{a=1\}\big)^\omega$. 
We can match this execution by choosing $\pi_2 = \{a=0\}\{a=1\}\{a=1\}\big(\{a=0\}\big)^\omega$, and align both traces by picking stutterings $\beta_1 = \{a=0\}\{a=0\}\{a=0\}\big(\{a=1\}\big)^\omega$, and $\beta_2 = \pi_2$ (note that the ``obvious'' choice $\pi_2 = \{a=0\}\{a=1\}\big(\{a=0\}\big)^\omega$ is \emph{not} a trace in $\calT$).

When verifying this property directly, we would need to, for each choice of $\pi_1$ (where quantification ranges over infinite traces in $\calT$), provide a witnessing trace $\pi_2$ \emph{and} find an appropriate stuttering for each combination of $\pi_1, \pi_2$.
Instead of tackling this problem directly, we \emph{approximate} the verification as a game between a verifier and a refuter.\footnote{Such approximations have proven successful in the verification of synchronous HyperLTL \cite{BeutnerF22CSF,CoenenFST19,BeutnerF25b,BeutnerF24a}, where finite-state model-checking is decidable, and games offer a computationally cheaper (incomplete) approximation.  
	In our setting of A-HLTL, model-checking is \emph{undecidable}, and we design a novel game as a sound \emph{approximation} of A-HLTL.  }
The main idea of our game is to view paths for $\pi_1$ and $\pi_2$ and the stutterings $\beta_1, \beta_2$ as the outcome of an infinite-duration two-player game between a refuter (who controls \emph{universally} quantified traces and stutterings) and a verifier (who controls \emph{existentially} quantified traces and stutterings). 
For our example formula \ref{eq:fair}, the refuter constructs the (universally quantified) trace $\pi_1$ step-wise by moving along the transitions of $\calT$. 
The verifier can react by moving through a separate copy of $\calT$, thereby generating a trace for $\pi_2$, and controls the stutterings $\beta_1, \beta_2$.
Each infinite game-play thus generates infinite traces for $\pi_1$ and $\pi_2$ and stutterings $\beta_1, \beta_2$ of $\pi_1, \pi_2$, respectively; the objective of the verifier is then to ensure that the stutterings $\beta_1, \beta_2$ (together) satisfy $\ltlN \ltlG (a_{\beta_1} \neq a_{\beta_2})$, the LTL body of \ref{eq:fair}.

\paragraph{Graph-based Games.}
To formalize the interaction of refuter and verifier, we construct a graph-based game. 
The game consists of a set of vertices connected via edges, where each vertex is assigned to either the refuter or the verifier.
The game progresses by moving a single token from vertex to vertex, and the player controlling the current vertex can decide which outgoing edge the token should take.
The outgoing edges of a given vertex thus determine the possible decision a player can make in the given situation (we give more details on graph-based games in \Cref{sec:prelim}).

\paragraph{Verification Game.}

There are two principled ideas underlying our verification game for A-HLTL:  Windows of states, and relative pointers.
The refuter (resp.~verifier) controls trace $\pi_1$ (resp.~$\pi_2$) by moving through $\calT$.
To accommodate the fact that $\beta_1$ and $\beta_2$ are stutterings of $\pi_1$ and $\pi_2$ (i.e., in the $i$th step of stuttering $\beta_1$ can equal the $j$th step of $\pi_1$ for any $j \leq i$, depending on the speed of the stuttering), we do not maintain a single state but a \emph{window} of states, which we can think of a \emph{sliding window} of a trace.
We then identify the stutterings $\beta_1, \beta_2$ as \emph{relative pointers} into the state-windows of $\pi_1$ and $\pi_2$.
Intuitively, this pointer identifies the current position of the stuttering, i.e., $\beta_1$ (resp.~$\beta_2$) points to one position of (the state-window of) $\pi_1$ (resp.~$\pi_2$).
Whenever we require the current state of $\beta_1$, i.e., we fetch the state in $\pi_1$'s window that is pointed to by the $\beta_1$ pointer.

Each vertex then belongs to one of three game states: the $\ustage$-stage (update-stage), the $\fstage$-stage, or $\estage$-stage.
The verifier controls all vertices in the $\estage$-stage, and the refuter those in the $\fstage$-stage (vertices in the $\ustage$-stage have a unique successor so it does not matter which player controls them).
Whenever the game is in the $\ustage$-stage, we consider the current states of $\beta_1, \beta_2$, and evaluate one step of the LTL formula $\ltlN \ltlG (a_{\beta_1} \neq a_{\beta_2})$. 
In order to win, the verifier thus needs to make sure that, whenever the game is in the $\ustage$-stage (except on the first visit), $a_{\beta_1} \neq a_{\beta_2}$ holds.\footnote{
	To track this temporal property, we translate the LTL formula to a deterministic automaton and track the state of this automaton within each $\ustage$-stage vertex.
	For simplicity, we omit this automaton in this overview section.}
After the $\ustage$-stage, we progress to the $\fstage$-stage.
In the $\fstage$-stage, the refuter can progress the (universally quantified) $\pi_1$; the state window for $\pi_2$ remains unchanged.
As we track a window of states, this corresponds to \emph{appending} a state to that window along $\calT$'s transitions. 
In the $\estage$-stage, the verifier can respond by appending a state to $\pi_2$'s window.
Additionally, the verifier can decide if the (\emph{existentially quantified}) stutterings $\beta_1$ and $\beta_2$ should progress to the next state in the state window of $\pi_1$ and $\pi_2$, respectively.
The verifier can thus (implicitly) control the stutterings $\beta_1, \beta_2$ by controlling the relative pointers.
By advancing the $\beta_1$-pointer on $\pi_1$ (i.e., move the $\beta_1$-pointer to the next state in the state window assigned to $\pi_1$), the stuttering $\beta_1$ does a proper (non-stuttering) step; if the verifier leaves the $\beta_1$-pointer unchanged, it effectively performs a stuttering step by repeating the same state it pointed to in the previous round. 

\begin{figure}[!t]
	\centering
	\scalebox{0.8}{
		\begin{tikzpicture}
			
			\coordinate (c) at (0, 0);
			\node[draw, minimum height=22mm,minimum width=24mm, rounded corners=1mm,thick,anchor=west] at (c) (b0) {};
			\node[anchor=east] at ([xshift=13mm,yshift=3mm]c)  {\small$\pi_1$:};
			\node[draw, minimum height=5mm,minimum width=5mm] (n) at ([xshift=15mm,yshift=3mm]c)  {\footnotesize $s_0$};
			\node[inner sep=1pt] (m) at ([xshift=15mm,yshift=9mm]c) {\tiny $\beta_1$};
			\draw[->, thick] (m) -- (n); 
			\node[anchor=east] at ([xshift=13mm,yshift=-3mm]c)  {\small$\pi_2$:};
			\node[draw, minimum height=5mm,minimum width=5mm] (n) at ([xshift=15mm,yshift=-3mm]c)  {\footnotesize$s_0$};
			\node[inner sep=1pt] (m) at ([xshift=15mm,yshift=-9mm]c) {\tiny $\beta_2$};
			\draw[->, thick] (m) -- (n); 
			
			\node[] at ([xshift=3mm,yshift=4mm]c) {$v_0$};
			\node[] at ([xshift=3mm,yshift=-4mm]c) {$\ustage$};
			\draw[-, dashed,thick] (c) --  ([xshift=6mm]c);
			\draw[-, dashed,thick] ([xshift=6mm,yshift=11mm]c) --  ([xshift=6mm,yshift=-11mm]c);
			
			\coordinate (c) at (2.7, 0);
			\node[draw, minimum height=22mm,minimum width=24mm, rounded corners=1mm,thick,anchor=west] at (c) (b1) {};
			\node[anchor=east] at ([xshift=13mm,yshift=3mm]c)  {$\pi_1$:};
			\node[draw, minimum height=5mm,minimum width=5mm] (n) at ([xshift=15mm,yshift=3mm]c)  {\footnotesize $s_0$};
			\node[inner sep=1pt] (m) at ([xshift=15mm,yshift=9mm]c) {\tiny $\beta_1$};
			\draw[->, thick] (m) -- (n); 
			\node[anchor=east] at ([xshift=13mm,yshift=-3mm]c)  {$\pi_2$:};
			\node[draw, minimum height=5mm,minimum width=5mm] (n) at ([xshift=15mm,yshift=-3mm]c)  {\footnotesize$s_0$};
			\node[inner sep=1pt] (m) at ([xshift=15mm,yshift=-9mm]c) {\tiny $\beta_2$};
			\draw[->, thick] (m) -- (n); 
			
			\node[] at ([xshift=3mm,yshift=4mm]c) {$v_1$};
			\node[] at ([xshift=3mm,yshift=-4mm]c) {$\fstage$};
			\draw[-, dashed,thick] (c) --  ([xshift=6mm]c);
			\draw[-, dashed,thick] ([xshift=6mm,yshift=11mm]c) --  ([xshift=6mm,yshift=-11mm]c);

			\coordinate (c) at (5.4, 0);
			\node[draw, minimum height=22mm,minimum width=24mm, rounded corners=1mm,line width=2pt,anchor=west] at (c) (b2) {};
			\node[anchor=east] at ([xshift=13mm,yshift=3mm]c)  {$\pi_1$:};
			\node[draw, minimum height=5mm,minimum width=5mm] (n) at ([xshift=15mm,yshift=3mm]c)  {\footnotesize $s_0$};
			\node[draw, minimum height=5mm,minimum width=5mm] () at ([xshift=20mm,yshift=3mm]c)  {\footnotesize $s_0$};
			\node[inner sep=1pt] (m) at ([xshift=15mm,yshift=9mm]c) {\tiny $\beta_1$};
			\draw[->, thick] (m) -- (n); 
			\node[anchor=east] at ([xshift=13mm,yshift=-3mm]c)  {$\pi_2$:};
			\node[draw, minimum height=5mm,minimum width=5mm] (n) at ([xshift=15mm,yshift=-3mm]c)  {\footnotesize$s_0$};
			\node[inner sep=1pt] (m) at ([xshift=15mm,yshift=-9mm]c) {\tiny $\beta_2$};
			\draw[->, thick] (m) -- (n); 
			
			\node[] at ([xshift=3mm,yshift=4mm]c) {$v_2$};
			\node[] at ([xshift=3mm,yshift=-4mm]c) {$\estage$};
			\draw[-, dashed,thick] (c) --  ([xshift=6mm]c);
			\draw[-, dashed,thick] ([xshift=6mm,yshift=11mm]c) --  ([xshift=6mm,yshift=-11mm]c);

			\coordinate (c) at (8.1, 0);
			\node[draw, minimum height=22mm,minimum width=24mm, rounded corners=1mm, thick,anchor=west] at (c) (b3) {};
			\node[anchor=east] at ([xshift=13mm,yshift=3mm]c)  {$\pi_1$:};
			\node[draw, minimum height=5mm,minimum width=5mm] () at ([xshift=15mm,yshift=3mm]c)  {\footnotesize $s_0$};
			\node[draw, minimum height=5mm,minimum width=5mm] (n) at ([xshift=20mm,yshift=3mm]c)  {\footnotesize $s_0$};
			\node[inner sep=1pt] (m) at ([xshift=20mm,yshift=9mm]c) {\tiny $\beta_1$};
			\draw[->, line width=2pt] (m) -- (n); 
			\node[anchor=east] at ([xshift=13mm,yshift=-3mm]c)  {$\pi_2$:};
			\node[draw, minimum height=5mm,minimum width=5mm] () at ([xshift=15mm,yshift=-3mm]c)  {\footnotesize$s_0$};
			\node[draw, minimum height=5mm,minimum width=5mm] (n) at ([xshift=20mm,yshift=-3mm]c)  {\footnotesize$s_1$};
			\node[inner sep=1pt] (m) at ([xshift=20mm,yshift=-9mm]c) {\tiny $\beta_2$};
			\draw[->, line width=2pt] (m) -- (n); 
			
			\node[] at ([xshift=3mm,yshift=4mm]c) {$v_3$};
			\node[] at ([xshift=3mm,yshift=-4mm]c) {$\ustage$};
			\draw[-, dashed,thick] (c) --  ([xshift=6mm]c);
			\draw[-, dashed,thick] ([xshift=6mm,yshift=11mm]c) --  ([xshift=6mm,yshift=-11mm]c);

			\coordinate (c) at (10.8, 0);
			\node[draw, minimum height=22mm,minimum width=24mm, rounded corners=1mm, thick,anchor=west] at (c) (b4) {};
			\node[anchor=east] at ([xshift=13mm,yshift=3mm]c)  {$\pi_1$:};
			\node[draw, minimum height=5mm,minimum width=5mm] (n) at ([xshift=15mm,yshift=3mm]c)  {\footnotesize $s_0$};
			\node[inner sep=1pt] (m) at ([xshift=15mm,yshift=9mm]c) {\tiny $\beta_1$};
			\draw[->, thick] (m) -- (n); 
			\node[anchor=east] at ([xshift=13mm,yshift=-3mm]c)  {$\pi_2$:};
			\node[draw, minimum height=5mm,minimum width=5mm] (n) at ([xshift=15mm,yshift=-3mm]c)  {\footnotesize$s_1$};
			\node[inner sep=1pt] (m) at ([xshift=15mm,yshift=-9mm]c) {\tiny $\beta_2$};
			\draw[->, thick] (m) -- (n); 
			
			\node[] at ([xshift=3mm,yshift=4mm]c) {$v_4$};
			\node[] at ([xshift=3mm,yshift=-4mm]c) {$\fstage$};
			\draw[-, dashed,thick] (c) --  ([xshift=6mm]c);
			\draw[-, dashed,thick] ([xshift=6mm,yshift=11mm]c) --  ([xshift=6mm,yshift=-11mm]c);

			\coordinate (c) at (13.5, 0);
			\node[draw, minimum height=22mm,minimum width=24mm, rounded corners=1mm, line width=2pt,anchor=west] at (c) (b5) {};
			\node[anchor=east] at ([xshift=13mm,yshift=3mm]c)  {$\pi_1$:};
			\node[draw, minimum height=5mm,minimum width=5mm] (n) at ([xshift=15mm,yshift=3mm]c)  {\footnotesize $s_0$};
			\node[draw, minimum height=5mm,minimum width=5mm] () at ([xshift=20mm,yshift=3mm]c)  {\footnotesize $s_0$};
			\node[inner sep=1pt] (m) at ([xshift=15mm,yshift=9mm]c) {\tiny $\beta_1$};
			\draw[->, thick] (m) -- (n); 
			\node[anchor=east] at ([xshift=13mm,yshift=-3mm]c)  {$\pi_2$:};
			\node[draw, minimum height=5mm,minimum width=5mm] (n) at ([xshift=15mm,yshift=-3mm]c)  {\footnotesize$s_1$};
			\node[inner sep=1pt] (m) at ([xshift=15mm,yshift=-9mm]c) {\tiny $\beta_2$};
			\draw[->, thick] (m) -- (n); 
			
			\node[] at ([xshift=3mm,yshift=4mm]c) {$v_5$};
			\node[] at ([xshift=3mm,yshift=-4mm]c) {$\estage$};
			\draw[-, dashed,thick] (c) --  ([xshift=6mm]c);
			\draw[-, dashed,thick] ([xshift=6mm,yshift=11mm]c) --  ([xshift=6mm,yshift=-11mm]c);

			
			\coordinate (c) at (0, -2.4);
			\node[draw, minimum height=22mm,minimum width=24mm, rounded corners=1mm, thick,anchor=west] at (c) (b6) {};
			\node[anchor=east] at ([xshift=13mm,yshift=3mm]c)  {$\pi_1$:};
			\node[draw, minimum height=5mm,minimum width=5mm] (n) at ([xshift=15mm,yshift=3mm]c)  {\footnotesize $s_1$};
			\node[inner sep=1pt] (m) at ([xshift=15mm,yshift=9mm]c) {\tiny $\beta_1$};
			\draw[->, thick] (m) -- (n); 
			\node[anchor=east] at ([xshift=13mm,yshift=-3mm]c)  {$\pi_2$:};
			\node[draw, minimum height=5mm,minimum width=5mm] (n) at ([xshift=15mm,yshift=-3mm]c)  {\footnotesize$s_0$};
			\node[inner sep=1pt] (m) at ([xshift=15mm,yshift=-9mm]c) {\tiny $\beta_2$};
			\draw[->, thick] (m) -- (n); 
			
			\node[] at ([xshift=3mm,yshift=4mm]c) {$v_{6}$};
			\node[] at ([xshift=3mm,yshift=-4mm]c) {$\fstage$};
			\draw[-, dashed,thick] (c) --  ([xshift=6mm]c);
			\draw[-, dashed,thick] ([xshift=6mm,yshift=11mm]c) --  ([xshift=6mm,yshift=-11mm]c);
			
			\coordinate (c) at (2.7, -2.4);
			\node[draw, minimum height=22mm,minimum width=24mm, rounded corners=1mm, thick,anchor=west] at (c) (b7) {};
			\node[anchor=east] at ([xshift=13mm,yshift=3mm]c)  {$\pi_1$:};
			\node[draw, minimum height=5mm,minimum width=5mm] () at ([xshift=15mm,yshift=3mm]c)  {\footnotesize $s_0$};
			\node[draw, minimum height=5mm,minimum width=5mm] (n) at ([xshift=20mm,yshift=3mm]c)  {\footnotesize $s_1$};
			\node[inner sep=1pt] (m) at ([xshift=20mm,yshift=9mm]c) {\tiny $\beta_1$};
			\draw[->, line width=2pt] (m) -- (n); 
			\node[anchor=east] at ([xshift=13mm,yshift=-3mm]c)  {$\pi_2$:};
			\node[draw, minimum height=5mm,minimum width=5mm] () at ([xshift=15mm,yshift=-3mm]c)  {\footnotesize$s_0$};
			\node[draw, minimum height=5mm,minimum width=5mm] (n) at ([xshift=20mm,yshift=-3mm]c)  {\footnotesize$s_0$};
			\node[inner sep=1pt] (m) at ([xshift=20mm,yshift=-9mm]c) {\tiny $\beta_2$};
			\draw[->, line width=2pt] (m) -- (n); 
			
			\node[] at ([xshift=3mm,yshift=4mm]c) {$v_{7}$};
			\node[] at ([xshift=3mm,yshift=-4mm]c) {$\ustage$};
			\draw[-, dashed,thick] (c) --  ([xshift=6mm]c);
			\draw[-, dashed,thick] ([xshift=6mm,yshift=11mm]c) --  ([xshift=6mm,yshift=-11mm]c);

			\coordinate (c) at (5.4, -2.4);
			\node[draw, minimum height=22mm,minimum width=24mm, rounded corners=1mm, line width=2pt,anchor=west] at (c) (b8) {};
			\node[anchor=east] at ([xshift=13mm,yshift=3mm]c)  {$\pi_1$:};
			\node[draw, minimum height=5mm,minimum width=5mm] (n) at ([xshift=15mm,yshift=3mm]c)  {\footnotesize $s_0$};
			\node[draw, minimum height=5mm,minimum width=5mm] () at ([xshift=20mm,yshift=3mm]c)  {\footnotesize $s_1$};
			\node[inner sep=1pt] (m) at ([xshift=15mm,yshift=9mm]c) {\tiny $\beta_1$};
			\draw[->, thick] (m) -- (n); 
			\node[anchor=east] at ([xshift=13mm,yshift=-3mm]c)  {$\pi_2$:};
			\node[draw, minimum height=5mm,minimum width=5mm] (n) at ([xshift=15mm,yshift=-3mm]c)  {\footnotesize$s_0$};
			\node[inner sep=1pt] (m) at ([xshift=15mm,yshift=-9mm]c) {\tiny $\beta_2$};
			
			\draw[->, thick] (m) -- (n); 
			
			\node[] at ([xshift=3mm,yshift=4mm]c) {$v_{8}$};
			\node[] at ([xshift=3mm,yshift=-4mm]c) {$\estage$};
			\draw[-, dashed,thick] (c) --  ([xshift=6mm]c);
			\draw[-, dashed,thick] ([xshift=6mm,yshift=11mm]c) --  ([xshift=6mm,yshift=-11mm]c);

			\coordinate (c) at (10.8, -2.4);
			\node[draw, minimum height=22mm,minimum width=24mm, rounded corners=1mm, line width=2pt,anchor=west] at (c) (b9) {};
			\node[anchor=east] at ([xshift=13mm,yshift=3mm]c)  {$\pi_1$:};
			\node[draw, minimum height=5mm,minimum width=5mm] (n) at ([xshift=15mm,yshift=3mm]c)  {\footnotesize $s_0$};
			\node[draw, minimum height=5mm,minimum width=5mm] () at ([xshift=20mm,yshift=3mm]c)  {\footnotesize $s_1$};
			\node[inner sep=1pt] (m) at ([xshift=15mm,yshift=9mm]c) {\tiny $\beta_1$};
			\draw[->, thick] (m) -- (n); 
			\node[anchor=east] at ([xshift=13mm,yshift=-3mm]c)  {$\pi_2$:};
			\node[draw, minimum height=5mm,minimum width=5mm] (n) at ([xshift=15mm,yshift=-3mm]c)  {\footnotesize$s_1$};
			\node[inner sep=1pt] (m) at ([xshift=15mm,yshift=-9mm]c) {\tiny $\beta_2$};
			\draw[->, thick] (m) -- (n); 
			
			\node[] at ([xshift=3mm,yshift=4mm]c) {$v_9$};
			\node[] at ([xshift=3mm,yshift=-4mm]c) {$\estage$};
			\draw[-, dashed,thick] (c) --  ([xshift=6mm]c);
			\draw[-, dashed,thick] ([xshift=6mm,yshift=11mm]c) --  ([xshift=6mm,yshift=-11mm]c);

			\coordinate (c) at (13.5, -2.4);
			\node[draw, minimum height=22mm,minimum width=24mm, rounded corners=1mm, thick,anchor=west] at (c) (b10) {};
			\node[anchor=east] at ([xshift=13mm,yshift=3mm]c)  {$\pi_1$:};
			\node[draw, minimum height=5mm,minimum width=5mm] () at ([xshift=15mm,yshift=3mm]c)  {\footnotesize $s_0$};
			\node[draw, minimum height=5mm,minimum width=5mm] (n) at ([xshift=20mm,yshift=3mm]c)  {\footnotesize $s_0$};
			\node[inner sep=1pt] (m) at ([xshift=20mm,yshift=9mm]c) {\tiny $\beta_1$};
			\draw[->, line width=2pt] (m) -- (n); 
			\node[anchor=east] at ([xshift=13mm,yshift=-3mm]c)  {$\pi_2$:};
			\node[draw, minimum height=5mm,minimum width=5mm] () at ([xshift=15mm,yshift=-3mm]c)  {\footnotesize$s_1$};
			\node[draw, minimum height=5mm,minimum width=5mm] (n) at ([xshift=20mm,yshift=-3mm]c)  {\footnotesize$s_2$};
			\node[inner sep=1pt] (m) at ([xshift=20mm,yshift=-9mm]c) {\tiny $\beta_2$};
			\draw[->, line width=2pt] (m) -- (n); 
			
			\node[] at ([xshift=3mm,yshift=4mm]c) {$v_{10}$};
			\node[] at ([xshift=3mm,yshift=-4mm]c) {$\ustage$};
			\draw[-, dashed,thick] (c) --  ([xshift=6mm]c);
			\draw[-, dashed,thick] ([xshift=6mm,yshift=11mm]c) --  ([xshift=6mm,yshift=-11mm]c);

			
			\coordinate (c) at (0, -4.8);
			\node[draw, minimum height=22mm,minimum width=24mm, rounded corners=1mm, line width=2pt,anchor=west] at (c) (b11) {};
			\node[anchor=east] at ([xshift=13mm,yshift=3mm]c)  {$\pi_1$:};
			\node[draw, minimum height=5mm,minimum width=5mm] (n) at ([xshift=15mm,yshift=3mm]c)  {\footnotesize $s_1$};
			\node[draw, minimum height=5mm,minimum width=5mm] () at ([xshift=20mm,yshift=3mm]c)  {\footnotesize $s_2$};
			\node[inner sep=1pt] (m) at ([xshift=15mm,yshift=9mm]c) {\tiny $\beta_1$};
			\draw[->, thick] (m) -- (n); 
			\node[anchor=east] at ([xshift=13mm,yshift=-3mm]c)  {$\pi_2$:};
			\node[draw, minimum height=5mm,minimum width=5mm] (n) at ([xshift=15mm,yshift=-3mm]c)  {\footnotesize$s_0$};
			\node[inner sep=1pt] (m) at ([xshift=15mm,yshift=-9mm]c) {\tiny $\beta_2$};
			\draw[->, thick] (m) -- (n); 
			
			\node[] at ([xshift=3mm,yshift=4mm]c) {$v_{11}$};
			\node[] at ([xshift=3mm,yshift=-4mm]c) {$\estage$};
			\draw[-, dashed,thick] (c) --  ([xshift=6mm]c);
			\draw[-, dashed,thick] ([xshift=6mm,yshift=11mm]c) --  ([xshift=6mm,yshift=-11mm]c);

			\coordinate (c) at (2.7, -4.8);
			\node[draw, minimum height=22mm,minimum width=24mm, rounded corners=1mm, thick,anchor=west] at (c) (b12) {};
			\node[anchor=east] at ([xshift=13mm,yshift=3mm]c)  {$\pi_1$:};
			\node[draw, minimum height=5mm,minimum width=5mm] () at ([xshift=15mm,yshift=3mm]c)  {\footnotesize $s_1$};
			\node[draw, minimum height=5mm,minimum width=5mm] (n) at ([xshift=20mm,yshift=3mm]c)  {\footnotesize $s_2$};
			\node[inner sep=1pt] (m) at ([xshift=20mm,yshift=9mm]c) {\tiny $\beta_1$};
			\draw[->, line width=2pt] (m) -- (n); 
			\node[anchor=east] at ([xshift=13mm,yshift=-3mm]c)  {$\pi_2$:};
			\node[draw, minimum height=5mm,minimum width=5mm] () at ([xshift=15mm,yshift=-3mm]c)  {\footnotesize$s_0$};
			\node[draw, minimum height=5mm,minimum width=5mm] (n) at ([xshift=20mm,yshift=-3mm]c)  {\footnotesize$s_0$};
			\node[inner sep=1pt] (m) at ([xshift=20mm,yshift=-9mm]c) {\tiny $\beta_2$};
			\draw[->, line width=2pt] (m) -- (n); 
			
			\node[] at ([xshift=3mm,yshift=4mm]c) {$v_{12}$};
			\node[] at ([xshift=3mm,yshift=-4mm]c) {$\ustage$};
			\draw[-, dashed,thick] (c) --  ([xshift=6mm]c);
			\draw[-, dashed,thick] ([xshift=6mm,yshift=11mm]c) --  ([xshift=6mm,yshift=-11mm]c);

			\coordinate (c) at (5.4, -4.8);
			\node[draw, minimum height=22mm,minimum width=24mm, rounded corners=1mm, thick,anchor=west] at (c) (b13) {};
			\node[anchor=east] at ([xshift=13mm,yshift=3mm]c)  {$\pi_1$:};
			\node[draw, minimum height=5mm,minimum width=5mm] (n) at ([xshift=15mm,yshift=3mm]c)  {\footnotesize $s_2$};
			\node[inner sep=1pt] (m) at ([xshift=15mm,yshift=9mm]c) {\tiny $\beta_1$};
			\draw[->, thick] (m) -- (n); 
			\node[anchor=east] at ([xshift=13mm,yshift=-3mm]c)  {$\pi_2$:};
			\node[draw, minimum height=5mm,minimum width=5mm] (n) at ([xshift=15mm,yshift=-3mm]c)  {\footnotesize$s_0$};
			\node[inner sep=1pt] (m) at ([xshift=15mm,yshift=-9mm]c) {\tiny $\beta_2$};
			\draw[->, thick] (m) -- (n); 
			
			\node[] at ([xshift=3mm,yshift=4mm]c) {$v_{13}$};
			\node[] at ([xshift=3mm,yshift=-4mm]c) {$\fstage$};
			\draw[-, dashed,thick] (c) --  ([xshift=6mm]c);
			\draw[-, dashed,thick] ([xshift=6mm,yshift=11mm]c) --  ([xshift=6mm,yshift=-11mm]c);

			\coordinate (c) at (10.8, -4.8);
			\node[draw, minimum height=22mm,minimum width=24mm, rounded corners=1mm, thick,anchor=west] at (c) (b14) {};
			\node[anchor=east] at ([xshift=13mm,yshift=3mm]c)  {$\pi_1$:};
			\node[draw, minimum height=5mm,minimum width=5mm] (n) at ([xshift=15mm,yshift=3mm]c)  {\footnotesize $s_0$};
			\node[draw, minimum height=5mm,minimum width=5mm] () at ([xshift=20mm,yshift=3mm]c)  {\footnotesize $s_1$};
			\node[inner sep=1pt] (m) at ([xshift=15mm,yshift=9mm]c) {\tiny $\beta_1$};
			\draw[->, thick] (m) -- (n); 
			\node[anchor=east] at ([xshift=13mm,yshift=-3mm]c)  {$\pi_2$:};
			\node[draw, minimum height=5mm,minimum width=5mm] () at ([xshift=15mm,yshift=-3mm]c)  {\footnotesize$s_1$};
			\node[draw, minimum height=5mm,minimum width=5mm] (n) at ([xshift=20mm,yshift=-3mm]c)  {\footnotesize$s_2$};
			\node[inner sep=1pt] (m) at ([xshift=20mm,yshift=-9mm]c) {\tiny $\beta_2$};
			\draw[->, line width=2pt] (m) -- (n); 
			
			\node[] at ([xshift=3mm,yshift=4mm]c) {$v_{14}$};
			\node[] at ([xshift=3mm,yshift=-4mm]c) {$\ustage$};
			\draw[-, dashed,thick] (c) --  ([xshift=6mm]c);
			\draw[-, dashed,thick] ([xshift=6mm,yshift=11mm]c) --  ([xshift=6mm,yshift=-11mm]c);

			\coordinate (c) at (13.5, -4.8);
			\node[draw, minimum height=22mm,minimum width=24mm, rounded corners=1mm, thick,anchor=west] at (c) (b15) {};
			\node[anchor=east] at ([xshift=13mm,yshift=3mm]c)  {$\pi_1$:};
			\node[draw, minimum height=5mm,minimum width=5mm] (n) at ([xshift=15mm,yshift=3mm]c)  {\footnotesize $s_0$};
			\node[inner sep=1pt] (m) at ([xshift=15mm,yshift=9mm]c) {\tiny $\beta_1$};
			\draw[->, thick] (m) -- (n); 
			\node[anchor=east] at ([xshift=13mm,yshift=-3mm]c)  {$\pi_2$:};
			\node[draw, minimum height=5mm,minimum width=5mm] (n) at ([xshift=15mm,yshift=-3mm]c)  {\footnotesize$s_2$};
			\node[inner sep=1pt] (m) at ([xshift=15mm,yshift=-9mm]c) {\tiny $\beta_2$};
			\draw[->, thick] (m) -- (n); 
			
			\node[] at ([xshift=3mm,yshift=4mm]c) {$v_{15}$};
			\node[] at ([xshift=3mm,yshift=-4mm]c) {$\fstage$};
			\draw[-, dashed,thick] (c) --  ([xshift=6mm]c);
			\draw[-, dashed,thick] ([xshift=6mm,yshift=11mm]c) --  ([xshift=6mm,yshift=-11mm]c);
			
			\draw[->, thick] ($(b0.west) + (-0.4, 0)$) -- (b0.west);

			\draw[thick, ->] (b0) -- (b1);
			
			\draw[thick, ->] (b1) -- (b2);
			\draw[thick, ->] (b1) -- (b8);

			\draw[thick, ->] (b2) -- (b3);
			\draw[thick, ->] (b3) -- (b4);
			
			\draw[thick, ->] (b4) -- (b5);
			\draw[thick, ->] (b4) -- (b9);
			
			\draw[thick, ->] (b5) -- (b10);
			
			\draw[thick, ->] (b6) -- (b11);
			
			\draw[thick, ->] (b7) -- (b6);
			
			\draw[thick, ->] (b8) -- (b7);
			
			\draw[thick, ->] (b9) -- (b14);
			
			\draw[thick, ->] (b10) -- (b15);
			
			\draw[thick, ->] (b11) -- (b12);
			
			\draw[thick, ->] (b12) -- (b13);
			
			\draw[thick, ->] (b14) -- (b15);

			\coordinate (c1) at (6.2, -6.3);
			\coordinate (c2) at (7, -6.3);
			
			\draw[thick, ->, dashed] (b13) -- (c1);
			\draw[thick, ->, dashed] (b13) -- (c2);

			\coordinate (c1) at (14.3, -6.3);
			\coordinate (c2) at (15.1, -6.3);
			
			\draw[thick, ->, dashed] (b15) -- (c1);
			\draw[thick, ->, dashed] (b15) -- (c2);

		\end{tikzpicture}
	}
	
	\vspace{-2mm}
	
	\caption{We depict (parts of) a winning strategy for the verifier in the verification game for the system in \Cref{fig:system-asyn}.
		Each vertex maps $\pi_1, \pi_2$ to some window of states and tracks the current stage in $\{\ustage, \fstage, \estage\}$.
		We label the vertices $v_0, \cdots, v_{15}$.
		The verifier controls all vertices in the $\estage$-stage, marked by a thick border.
	}\label{fig:strat-asyn}
\end{figure}
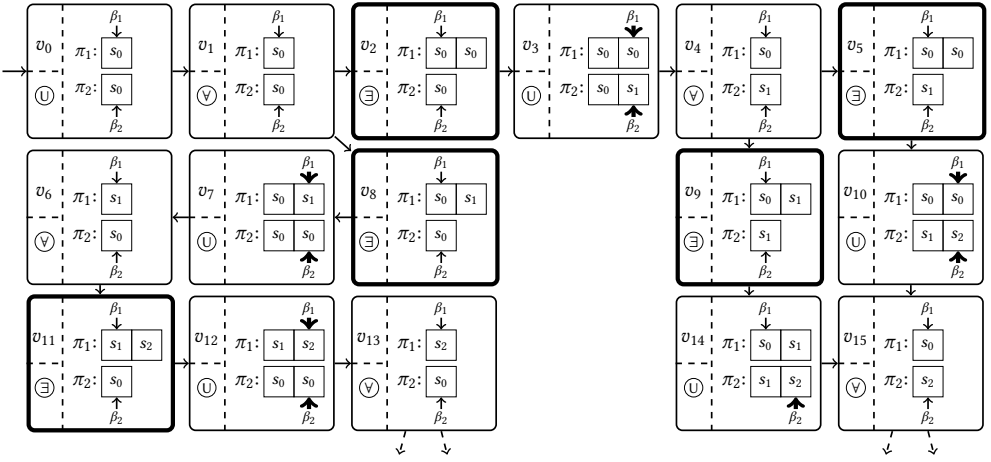

\paragraph{Winning Strategy.}
In our example, the system $\calT$ consists of finitely many states, so the resulting graph-based game contains only finitely many vertices.
\Cref{fig:strat-asyn} depicts a fragment of a winning strategy for the verifier.\footnote{A (positional) strategy for the verifier is a function that, for each vertex controlled by the verifier, fixes a concrete successor vertex (see \Cref{sec:prelim}).
	We visualize a strategy as a restriction of the game graph.
	That is, we include all possible successor vertices for all refuter-controlled vertices, but fix one particular outgoing edge for all verifier-controlled vertices.
} 
Each vertex maps $\pi_1$ and $\pi_2$ to a non-empty window (sequence) of states, the stutterings $\beta_1, \beta_2$ are represented as arrows.
Our game begins in initial vertex $v_0$, where $\pi_1$ and $\pi_2$ map to the singleton window $[s_0]$ containing $\calT$'s initial state, and the $\beta_1$ and $\beta_2$ pointers point to the unique first position.
In $v_1$ (in the $\fstage$-stage), the refuter can now choose a successor state for $\pi_1$ by either appending  state $s_0$ (i.e., move to $v_2$) or state $s_1$ (i.e., move to $v_{8}$) to $\pi_1$'s state window.
We focus on the former case. 
In $v_2$ (which is in the $\estage$-stage), the verifier can append a state to $\pi_2$'s window \emph{and} decide if the $\beta_1$ and $\beta_2$ pointers should be progressed.
The winning strategy depicted here moves to $v_3$, i.e., appends state $s_1$ and advances both $\beta_1$ and $\beta_2$.
In $v_3$ (which is in the $\ustage$-stage), $\beta_1$ points to state $s_0$, and $\beta_2$ points to $s_1$, which differ in the evaluation of $a$, as required by $\ltlN \ltlG (a_{\beta_1} \neq a_{\beta_2})$.
From $v_3$, we (deterministically) move to $v_4$ by removing the first position of the state windows for $\pi_1$ and $\pi_2$:
The $\beta_1$ and $\beta_2$ pointers (which give the current position of the stuttering) point to the second position in each state window, so we can safely drop all positions preceding the current position of the stuttering.
In $v_4$, the refuter can again append state $s_0$ (i.e., move to $v_5$) or $s_1$ (i.e., move to $v_9$) to $\pi_1$'s state window.
We focus on the latter case. 
In $v_9$, the current window for $\pi_2$ ends in state $s_1$, so the verifier has no choice but to append $s_2$ (as it has to respect $\calT$'s transitions). 
A possible move for the verifier \emph{would} be to progress \emph{both} the $\beta_1$ and $\beta_2$ pointers, so they would point to states $s_1$ and $s_2$, respectively. 
However, this would lose the game for the verifier as, in the next $\ustage$-stage, $a_{\beta_1} \neq a_{\beta_2}$ is violated.
Instead, the verifier moves to vertex $v_{14}$, i.e., it only progresses the $\beta_2$ pointer, and leaves the $\beta_1$-pointer  unchanged, effectively stuttering $\beta_1$ (i.e., the $\beta_1$ pointer still points to state $s_0$, repeating the same state as in the previous round).
In $v_{14}$, the stutterings $\beta_1$ and $\beta_2$ thus point to states $s_0$ and $s_2$, respectively, satisfying $a_{\beta_1} \neq a_{\beta_2}$.
In $v_{14}$, we, again, (deterministically) remove the first position of the $\pi_2$ window. 
Additionally, we can shorten $\pi_1$'s window by removing all positions trailing the $\beta_1$-pointer, thereby ensuring that the state windows never grow to more than 2 states (i.e., we deterministically move to vertex $v_{15}$).

The initial sketch in \Cref{fig:strat-asyn} can be extended to a full strategy which \textbf{(1)} ensures that $\ltlN \ltlG (a_{\beta_1} \neq a_{\beta_2})$ holds (where we evaluate one step whenever in the $\ustage$-stage), and \textbf{(2)} advances the $\beta_1$ and $\beta_2$ pointers infinitely often (i.e., stutterings $\beta_1$ and $\beta_2$ are fair).
Notably, our game is designed such that it underapproximates the power of existential quantification: If the verifier can construct appropriate witnesses for $\pi_2, \beta_1, \beta_2$, no matter how the refuter constructs on $\pi_1$, then we can also find witnesses for $\pi_2, \beta_1, \beta_2$ when given the entire (infinite) trace $\pi_1$ (as in the semantics of A-HLTL); using the game we thus proved that $\calT$ satisfies \ref{eq:fair}.

We emphasize that, even though this example is deliberately simple (for demonstration purposes), our method is the first that can verify \ref{eq:fair} automatically:
The system in \Cref{fig:system-asyn} is clearly not terminating \cite{HsuBFS23}, and \ref{eq:fair} is not admissible (cf.~\Cref{sub:admissible}) \cite{BaumeisterCBFS21}.

\section{Preliminaries}\label{sec:prelim}

We write $\nat$ for the set of natural numbers and $\bool := \{\top,\bot\}$ for the set of Booleans. 
Given a set $X$, we write $X^\omega$ for the set of infinite sequences over $X$ and $X^{\leq n}$ for the set of finite sequences of length at most $n \in \nat$. 
For $u \in X^{\leq n}$, $i\in \nat$, and $x \in X$, we define: $|u|$ as the length of $u$, $u(i) \in X$ as the $i$th element in $u$ (starting with the $0$th), and $u \cdot x$ as the sequence where we append $x$ to $u$.
For $i, j \in \nat$, we define $u[i, j] := u(i) u(i+1) \cdots u(j)$ if $|u| > j$, and otherwise (if $|u| \leq j$) $u[i, j] := u(i) u(i+1) \cdots u(|u|-1)$.
Finally, we define $u[i, j) := u[i, j-1]$.
We denote unnamed functions using $\lambda$ notation: For example, $\lambda n\ldot n + 1$ denotes the function $\nat \to \nat$ that maps each $n \in \nat$ to $n + 1$.

\paragraph{First Order Theories.}

HyperLTL \cite{ClarksonFKMRS14} and A-HLTL \cite{BaumeisterCBFS21,HsuBFS23} are evaluated over transition systems and access the current state using finitely many \emph{atomic propositions} (APs), which we can think of as Boolean variables.
However, in many cases, the variables of a system are not Boolean but come from richer domains (e.g., the set of integers). 
We extend A-HLTL by including first-order formulas (modulo some fixed first-order theory) as atomic expressions.
That is, instead of accessing traces at the level of atomic propositions, we can reason about relational formulas over complex data types and use interpreted predicates like $=, \neq, \leq, \ldots$ \cite{BeutnerF25c}.
Formally, we assume some first-order signature and some fixed background theory $\theory$ with universe $\values$ (which we can think of as the set of all values). 
For a set of variables $\calX$, we write $\calF_\calX$ for the set of all first-order formulas over variables in $\calX$.
Given a formula $\theta \in \calF_\calX$ and a variable assignment $A : \calX \to \values$, we write $A \models^\theory \theta$ if $\theta$ is satisfied by $A$ (modulo $\theory$) \cite{barwise1977introduction}.

\paragraph{Transition Systems}

As the basic computation model, we use state-based transition systems. 
For this, let $\calX$ be a fixed finite set of system variables.
A \emph{transition system} (TS) is a tuple $\calT = (S, s_0, \kappa, \ell)$, where $S$ is a (possibly infinite) set of states, $s_0 \in S$ is an initial state, $\kappa : S \to (2^S \setminus \{\emptyset\})$ is a transition function, and $\ell : S \to (\calX \to \values)$ is a function that maps each state to a variable assignment over $\calX$.
A path in $\calT$ is an infinite sequence $p = s_0s_1s_2 \cdots \in S^\omega$ (starting in $s_0$) such that $s_{i+1} \in \kappa(s_i)$ for all $i \in \nat$.
We write $\paths{\calT} \subseteq S^\omega$ for the set of all paths in $\calT$. 
Given a path $p = s_0s_1s_2 \cdots$, the associated trace is given by applying $\ell$ pointwise, i.e., $\ell(p) := \ell(s_0)\ell(s_1)\ell(s_2) \cdots \in (\calX \to \values)^\omega$. 
We write $\traces{\calT} \subseteq (\calX \to \values)^\omega$ for the set of all traces generated by $\calT$.
Given two trace $t, t' \in (\calX \to \values)^\omega$, we say $t'$ is a \emph{fair stuttering of $t$} (written $t' \stutter t$) if $t'$ is obtained from $t$ by stuttering any position for an arbitrary (but \emph{finite}) number of steps.
Formally, $t' \stutter t$ iff there exists a function $f : \nat \to \nat$ such that \textbf{(1)} $\forall i \in \nat\ldot t'(i) = t\big(f(i)\big)$, \textbf{(2)} $f$ is monotonically increasing (i.e., $i \geq j$ implies $f(i) \geq f(j)$), and \textbf{(3)} $f$ is surjective (i.e., $\forall i \in \nat\ldot \exists j \in \nat\ldot f (j) = i$). 

\paragraph{Büchi Automata.}
During the game, we need to track if the traces and stutterings constructed by the verifier satisfy the LTL body of the A-HLTL formula.
To accomplish this, we translate the LTL formula to a \emph{deterministic Büchi automaton} (DBA).
A DBA is similar to a DFA over finite words but accepts \emph{infinite} words. 
We can thus run this automaton in parallel with the (infinite) behavior of refuter and verifier, and determine if their infinite play satisfies the LTL formula.\footnote{Note that not every LTL formula can be translated to a deterministic Büchi automaton. 
Instead, we can use slightly more complex deterministic \emph{parity} automata, which capture every $\omega$-regular property \cite{Piterman07}. 
Throughout this paper, we nevertheless use simpler Büchi automata to simplify the presentation.}
A (DBA) over alphabet $\Sigma$ is a tuple $\calA = (Q, q_0, \delta, F)$, where $Q$ is a finite set of states, $q_0 \in Q$ is an initial state, $\delta : Q \times \Sigma \to Q$ is a deterministic transition function, and $F \subseteq Q$ is a set of accepting states. 
Given a word $u \in \Sigma^\omega$, there exists a unique run $q_0q_1\cdots \in Q^\omega$ of $\calA$ (starting in $q_0$) defined by $q_{i+1} = \delta(q_i, u(i))$ for all $i \in \nat$. 
The run is accepting if it visits states in $F$ \emph{infinitely often}. 
We write $\calL(\calA) \subseteq \Sigma^\omega$ for the set of words where the unique run of $\calA$ is accepting. 

\paragraph{Büchi Games.}

As a concrete game formalism, we use graph-based games, where two players move a token from vertex to vertex along the edges of the graph.
Our later game will be designed such that the outgoing edges in the graph precisely capture the ability of either player in the current game situation (e.g., append a state to a state window, progress a stuttering, etc.). 
The objective of the verifier is to ensure that the infinite game play satisfies the LTL body of the A-HLTL formula.
As we track this LTL body using a Büchi automaton, we naturally obtain a Büchi winning objective for the verifier.
Formally, a \emph{Büchi game} is a tuple $\calG = (V_\verifier, V_\refuter, \alpha, F)$, where $V := V_\verifier \uplus V_\refuter$ is a (possibly infinite) set of vertices (vertices in $V_\verifier$ are controlled by the verifier, those in $V_\refuter$ by the refuter), $\alpha : V \to (2^V \setminus \{\emptyset\})$ is a transition function, and $F \subseteq V$ is a set of accepting vertices. 
A play in $\calG$ is an infinite sequence $\rho \in V^\omega$ such that $\rho(i+1) \in \alpha(\rho(i))$ for all $i \in \nat$. The play $\rho$ is won by $\verifier$ if it visits states in $F$ infinitely often.
A positional strategy (for the verifier) (cf.~\cite{martin1975borel}) is a function $\sigma : V_\verifier \to V$ such that $\sigma(v) \in \alpha(v)$ for all $v \in V_\verifier$.
A play $\rho \in V^\omega$ is compatible with $\sigma$ if for every $i \in \nat$ with $\rho(i) \in V_\verifier$, we have that $\rho(i+1) = \sigma(\rho(i))$. 
As we have done in \Cref{sec:overview}, we can think of a strategy as a restriction of the game's state space: we include all outgoing edges for all vertices controlled by the refuter; for each verifier-controlled vertex $v \in V_\verifier$ we only include the edge to vertex $\sigma(v)$. 
The verifier wins vertex $v \in V$ if there exists a strategy $\sigma$ such that all $\sigma$-compatible plays $\rho$ (i.e., all infinite paths in the restricted vertex space) that start in $v$ (i.e., $\rho(0) = v$) are won by the verifier.  
Given a set of vertices $X \subseteq V$, we write $\mathit{wins}_\calG(\verifier, X)$ if the verifier wins $\calG$ from any vertex $v \in X$.

\section{Asynchronous HyperLTL}\label{sec:ahltl}

HyperLTL extends LTL \cite{Pnueli77} with explicit quantification over traces \cite{ClarksonFKMRS14}.
A-HLTL then extends the primary quantification over traces with secondary quantification over stutterings of these traces.
\begin{remark}
	In the original variant of A-HLTL \cite{BaumeisterCBFS21,HsuBFS23}, quantification over stutterings was achieved by quantifying over so-called \emph{trajectories}. For example, $\forall \pi_1. \forall \pi_2. \trajE \tau. \ltlG(o_{\pi_1, \tau} = o_{\pi_2, \tau})$ states that for all pairs of traces $\pi_1, \pi_2$, there exists some trajectory $\tau$ that aligns both traces on the output $o$. Here, the pairs $(\pi_1, \tau)$ and $(\pi_2, \tau)$ denote \emph{independent} stutterings of $\pi_1$ and $\pi_2$, respectively. 
	This detour via trajectories leads to a convoluted semantics, as each trajectory \emph{implicitly} quantifies over stutterings for \emph{all} traces, see \cite{BaumeisterCBFS21,HsuBFS23} for details. 
	In this paper, we propose a novel variant of A-HLTL by \emph{explicitly} quantifying over stutterings of traces.
	Our variant is easier to understand, allows for a simpler semantics, and is \emph{equivalent} to the original variant (see \ifFull{Appendix \ref{app:semantics}}{the full version}).
	The above example can be expressed in our new A-HLTL variant as $\forall \pi_1. \forall \pi_2. \exists \beta_1 \stutter \pi_1. \exists \beta_2 \stutter \pi_2. \ltlG(o_{\beta_1} = o_{\beta_2})$, i.e., for each trace-trajectory pair $(\pi_1, \tau)$, $(\pi_2, \tau)$, we \emph{explicitly} quantify over a stuttering of that trace. 
\end{remark}

\paragraph{Syntax}
Let $\traceVars = \{\pi_1, \pi_2, \ldots, \pi_n\}$ be a set of \emph{trace variables} and $\trajVars = \{\beta_1, \beta_2, \ldots, \beta_k\}$ be a set of \emph{stuttering variables}. 
We define $\calX_{\trajVars} := \{x_{\beta} \mid x \in \calX, \beta \in \trajVars\}$ as the set of system variables indexed by stuttering variables.
An A-HLTL formula $\varphi$ is generated by the following grammar
\begin{align*}
	\varphi &:= \exists \pi \ldot \varphi \mid \forall \pi \ldot \varphi \mid  \phi \\
	\phi &:= \exists \beta \stutter \pi \ldot \phi \mid \forall \beta \stutter \pi \ldot \phi \mid  \psi \\
	\psi &:= \theta \mid \neg \psi \mid \psi \land \psi \mid \ltlN \psi \mid \psi \ltlU \psi
\end{align*}%
where $\pi \in \traceVars$ is a trace variable, $\beta \in \trajVars$ is a stuttering variable, and $\theta \in \calF_{\calX_{\trajVars}}$ is a first-order formula over stuttering-variable-indexed system variables (i.e., $\calX_\trajVars$). 
We write $\ltlF \psi$ and $\ltlG \psi$ for LTL's derived \emph{eventually} and \emph{globally} operator, respectively.

\paragraph{Semantics}

An A-HLTL formula from the above grammar has the form 
\begin{align*}
	\varphi = \quant \pi_1 \ldots \quant \pi_n\ldot \; \quant \beta_1 \stutter \pi_{l_1} \ldots \quant \beta_k \stutter \pi_{l_l} \ldot \; \psi
\end{align*}
where $\pi_1, \ldots, \pi_n \in \traceVars$ are trace variables, $\beta_1,\ldots, \beta_k \in \trajVars$ are stuttering variables quantifying over stutterungs of traces $\pi_{l_1}, \ldots,\pi_{l_l}$,  respectively (where $l_1, \ldots, l_k \in \{1, \ldots, n\}$), $\quant \in \{\forall, \exists\}$ are quantifiers, and $\psi$ is an LTL formula over atoms from $\calF_{\calX_{\trajVars}}$ (i.e., first-order formulas over variables from $\calX_{\trajVars}$).
Similar to HyperLTL, we first quantify over traces $\pi_1, \ldots, \pi_n$ in our transition system $\calT$ in typical first-order semantics.
After the $n$ traces are fixed, we proceed to the secondary quantification over stutterings of these traces, i.e., $\quant \beta \stutter \pi$ quantifies over a fair stuttering of (the previously quantified) trace $\pi$.
After resolving the quantifier prefix, we are thus left with $k$ stutterings $\beta_1, \ldots, \beta_k$, and can evaluate the LTL formula $\psi$. 
Here, each atom in the LTL formula $\psi$ is a first-order formula over variables from $\calX_{\trajVars}$ (modulo $\theory$), i.e., over the system variables indexed by stuttering variables. 
In such a formula, each variable $x_{\beta} \in \calX_{\trajVars}$ refers to the current value of $x$ on stuttering $\beta$.
For example, in \ref{eq:asyn-od}, we used the atomic formula $o_{\beta_1} = o_{\beta_2}$, where ``$=$'' is an interpreted predicate symbol from theory $\theory$ and $o_{\beta_1}, o_{\beta_2} \in \calX_{\{\beta_1, \beta_2\}}$ are indexed system variables.

To define the semantics formally, we maintain a \emph{trace assignment} $\Pi : \traceVars \rightharpoonup (\calX \to \values)^\omega$ mapping trace variables to traces (used to evaluate trace quantification), and a \emph{stuttering assignment} $\Delta : \trajVars \rightharpoonup (\calX \to \values)^\omega$ mapping stuttering variables to traces (used to evaluate stuttering quantification).
Given a stuttering assignment $\Delta$ and position $i \in \nat$, we write $\Delta_{(i)} : \calX_{\trajVars} \to \values$ for the variable assignment to $\calX_\trajVars$ in the $i$th step, defined as $\Delta_{(i)}(x_{\beta}) := \Delta(\beta)(i)(x)$, i.e., the value of $x_\beta$ is defined as the value of $x$ in the $i$th step on stuttering $\Delta(\beta)$.
Given a set of traces $\traceSet \subseteq (\calX \to \values)^\omega$, and position $i \in \nat$, we define the semantics of A-HLTL as follows
\allowdisplaybreaks
\begin{align*}
	\Pi, \Delta, i &\models_\traceSet \exists \pi \ldot \varphi &\text{iff} \quad &\exists t \in \traceSet \ldot \Pi[\pi \mapsto t], \Delta, i \models_\traceSet  \varphi\\
	\Pi, \Delta, i &\models_\traceSet  \forall \pi \ldot \varphi &\text{iff} \quad &\forall t \in \traceSet \ldot \Pi[\pi \mapsto t], \Delta, i \models_\traceSet  \varphi\\[2mm]
	\Pi, \Delta, i &\models_\traceSet  \exists \beta \stutter \pi \ldot \phi &\text{iff} \quad  & \exists t' \in (\calX \to \values)^\omega \ldot t' \stutter \Pi(\pi) \land \Pi, \Delta[\beta \mapsto t'], i \models_\traceSet  \phi\\
	\Pi, \Delta, i &\models_\traceSet  \forall \beta \stutter \pi \ldot \phi &\text{iff} \quad  & \forall  t' \in (\calX \to \values)^\omega \ldot t' \stutter \Pi(\pi) \Rightarrow \Pi, \Delta[\beta \mapsto t'], i \models_\traceSet  \phi\\[2mm]
	\Pi, \Delta,i &\models_\traceSet  \theta &\text{iff} \quad   &\Delta_{(i)} \models^\theory \theta \\
	\Pi, \Delta,i &\models_\traceSet   \neg \psi &\text{iff} \quad & \Pi, \Delta,i \not\models_\traceSet   \psi \\
	\Pi, \Delta,i &\models_\traceSet   \psi_1 \land \psi_2 &\text{iff} \quad  &\Pi, \Delta,i \models_\traceSet  \psi_1 \text{ and }  \Pi, \Delta,i \models_\traceSet   \psi_2\\
	\Pi, \Delta,i &\models_\traceSet   \ltlN  \psi &\text{iff} \quad & \Pi, \Delta, i+1 \models_\traceSet  \psi \\
	\Pi, \Delta,i &\models_\traceSet   \psi_1 \ltlU \psi_2 &\text{iff} \quad & \exists j \geq i \ldot \Pi, \Delta, j\models_\traceSet   \psi_2 \text{ and } \forall i \leq k < j \ldot  \Pi, \Delta, k \models_\traceSet  \psi_1.
\end{align*}%
We first populate a trace assignment $\Pi$ by following the trace-quantifier prefix over traces in $\traceSet$ and adding traces to $\Pi$.
For each quantified stuttering $\quant \beta \stutter \pi$, we then quantify over a stuttering $t'$ of trace $\Pi(\pi)$ and add it to $\Delta$.
Note that $t'$ is an arbitrary trace obtained by stuttering $\Pi(\pi)$; it may not be a trace in $\traceSet$.
Finally, we can evaluate the LTL body on the traces in $\Delta$ following the usual evaluation of boolean and temporal operators. 
Here, an atomic formula $\theta$ holds in step $i$ if the variable assignment $\Delta_{(i)}$ (which assigns values to the indexed variables in $\calX_{\trajVars}$) satisfies $\theta$ (modulo theory $\theory$).
A transition system $\calT$ satisfies an A-HLTL property $\varphi$, written $\calT \models \varphi$, if $\emptyset, \emptyset, 0 \models_{\traces{\calT}} \varphi$, where $\emptyset$ denotes a trace/stuttering  assignment with an empty domain.

\begin{example}\label{ex:multiple-traj}
	In most A-HLTL formulas, we quantify (usually existentially) over \emph{one} stuttering for each trace variable.
	However, A-HLTL can also quantify over multiple stutterings of the same trace \cite{HsuBFS23}. 
	For example, in \ref{eq:asyn-od} from \Cref{sec:intro}, the low-security input is fixed \emph{initially} and never changes during evaluation (as in \citet{ZdancewicM03}'s original definition). 
	If the low-security input can change over time, the specification needs to take the infinite \emph{sequence} of low-security inputs into account.
	We can express such an extension as follows
	\begin{align}\label{eq:NIAE}
		\forall \pi_1. \forall \pi_2\ldot\; \forall \beta_1 \stutter \pi_1. \forall \beta_2 \stutter \pi_2\ldot \; \exists \beta_3 \stutter \pi_1. \exists \beta_4 \stutter \pi_2 \ldot  \; \ltlG (l_{\beta_1} = l_{\beta_2}) \rightarrow \ltlG (o_{\beta_3} = o_{\beta_4}).\tag{$\varphi_\mathit{NI}$}
	\end{align}
	This formula requires that there \emph{exist} stutterings $\beta_3, \beta_4$ of $\pi_1, \pi_2$ that align the output $o$ (similar to \ref{eq:asyn-od}), assuming the \emph{universally quantified} stutterings $\beta_1, \beta_2$ can align the low-security input $l$.
	Phrased differently, any two traces with stutter-equivalent input should produce stutter-equivalent output. 
	\demo
\end{example}

\begin{example}\label{ex:od-versions}
	Multiple stutterings are also needed to analyze traces w.r.t.~different speeds. 
	We illustrate this by examining the various variants of observational determinism found in the literature. 
	Given a low-security input $l \in \calX$ and two low-security outputs $a, b \in \calX$, the A-HLTL formula $$\forall \pi_1\ldot \forall \pi_2\ldot \exists \beta_1 \stutter \pi_1. \exists \beta_2 \stutter \pi_2 \ldot  (l_{\beta_1} = l_{\beta_2}) \rightarrow \ltlG \big(a_{\beta_1} = a_{\beta_2} \land b_{\beta_1} = b_{\beta_2}\big)$$
	requires that $a$ and $b$ are, \emph{together}, stutter-equivalent on all pairs of traces (similar to the definitions of \citet{Terauchi08,ZdancewicM03}).
	However, especially when considering distributed systems, the individual variables on a trace might stem from different distributed components with independent timing. 
	In this case, we can express $$\forall \pi_1\ldot \forall \pi_2\ldot \exists \beta_1 \stutter \pi_1, \beta_2 \stutter \pi_2,  \beta_3 \stutter \pi_1, \beta_4 \stutter \pi_2 \ldot  (l_{\beta_1} = l_{\beta_2}) \rightarrow \ltlG \big(a_{\beta_1} = a_{\beta_2} \land b_{\beta_3} = b_{\beta_4}\big),$$
	requiring that variables $a$ and $b$ are \emph{individually} stutter equivalent, i.e., we can choose separate stutterings for both outputs (similar to the definition of \citet{HuismanWS06,BartocciHNC23}).
	\demo
\end{example}

\begin{example}\label{ex:ni}
	\citet{McLean94}'s notion of non-inference states that for every trace $\pi_1$, there should exist some trace $\pi_2$ that has the same \emph{sequence} of low-security events but a fixed dummy high-security input. 
	Crucially, \citet{McLean94}'s definition only reasons about the (ordered) sequence of low-security events, i.e., it does \emph{not} reason about the absolute (synchronous) timesteps.
	In A-HLTL, we can directly express non-inference as 
	\begin{align*}
		\forall \pi_1\ldot \exists \pi_2\ldot \; \exists \beta_1 \stutter \pi_1, \beta_2 \stutter \pi_2\ldot \; \ltlG (\bigwedge_{x \in L} x_{\beta_1} = x_{\beta_2}) \land \ltlG (\bigwedge_{x \in H} x_{\beta_2} = \dagger).
	\end{align*}
	That is, for every trace $\pi_1$, there exists some trace $\pi_2$ such that some stutterings $\beta_1, \beta_2$ can align $\pi_1, \pi_2$ so that the low-security variables ($L \subseteq \calX$) are equal up to stuttering (so the \emph{sequence of low-security events is the same}). At the same time, the high-security inputs on $\pi_2$ ($H \subseteq \calX$) should globally be set to some dummy value (which we denote with $\dagger$). \demo
\end{example}

We emphasize that while stuttering in itself is not particularly interesting, it is the key technical gadget that \emph{unifies} many of the widely used information-flow properties in the \emph{same} logic (see also \Cref{sec:related-work}).
That is, by extending HyperLTL with the ability to quantify over stutterings of traces, we can suddenly express a much wider range of information-flow policies.
Any verification approach for A-HLTL can, therefore, be applied to all those properties (and variants thereof by, e.g., declaring some additional variable as output), instead of ``re-inventing the wheel'' for every new information-flow property (or variant thereof).

\section{Game-based Verification of A-HLTL}\label{sec:verification}

We say an A-HLTL formula is a \emph{$\forall^*\exists^*$ formula} if no universal trace and stuttering quantifier occurs in the scope of an existentially quantified trace or stuttering (see, e.g., \ref{eq:asyn-od} and \ref{eq:NIAE}).
The idea of our game-based verification approach is to approximate the quantifier alternation in an A-HLTL formula as a game between a verifier ($\verifier$) and a refuter ($\refuter$). 
The refuter controls universally quantified traces and stutterings, and the verifier responds to each move to the refuter by updating existentially quantified  traces and stutterings (cf.~\Cref{sec:overview}).
As all universal quantifiers precede any existential quantification, the verifier updates existentially quantified traces and stutterings based on a subset of the information available in the A-HLTL semantics: In the semantics of a $\forall^*\exists^*$ formula, we can pick existentially quantified traces and stutterings after \emph{all} universally quantified traces and stutterings are fixed.
In our game, the verifier picks them step-wise, thus knowing only a finite \emph{prefix}.
All formulas studied by \citet{BaumeisterCBFS21} and \citet{HsuBFS23} are $\forall^*\exists^*$ formulas.

In the following, we let $\calT = (S_\calT, s_{0, \calT}, \kappa_\calT, \ell_\calT)$ be a fixed system and let 
\begin{align*}
	\varphi = \quant \pi_1 \ldots \quant \pi_n\ldot \; \quant \beta_1 \stutter \pi_{l_1} \ldots \quant \beta_k \stutter \pi_{l_k} \ldot \; \psi
\end{align*}
be a fixed $\forall^*\exists^*$ A-HLTL formula (where $l_1, \ldots, l_k \in \{1, \ldots, n\}$).
We define $\traceVars_\forall \subseteq \{\pi_1, \ldots, \pi_n\}$ (resp.~$\traceVars_\exists$) as all universally (resp.~existentially) quantified trace variables, and $\trajVars_\forall \subseteq \{\beta_1, \ldots, \beta_k\}$ (resp.~$\trajVars_\exists$) as all universally (resp.~existentially) quantified stuttering variables. 
For each stuttering variable $\beta \in \trajVars$, we define $\mathit{base}(\beta) \in \traceVars$ as the unique trace that $\beta$ is based on (i.e., the unique $\pi$, where the quantifier prefix contains $\quant \beta \stutter \pi$). 
For example, in \ref{eq:NIAE}, $\mathit{base}(\beta_1) = \mathit{base}(\beta_3) = \pi_1$ and $\mathit{base}(\beta_2) = \mathit{base}(\beta_4) = \pi_2$.

In this section, we will construct a Büchi game -- called $\calG_{\calT, \varphi, Z}$ -- which, if won by the verifier, implies that $\calT \models \varphi$.
As we already discussed, the high-level idea of this game is to let the verifier (resp.~refuter) control all existentially (resp.~universally) quantified traces and stutterings.
To formalize this, we will cast this game as a graph-based game, where each vertex in the game records the current state of the game (e.g., the state windows, the positions of the stutterings, etc.).
The transitions of the game are then designed such that the set of possible successor vertices precisely corresponds to all possible moves that the players can make (e.g., append a state to a state window, progress a stuttering, etc.).

\subsection{Tracking Acceptance and Fairness}\label{sub:formula-transformation}

While the game progresses, we need to track if the paths and stutterings constructed by the players satisfy $\psi$ (the LTL body of $\varphi$).
We accomplish this by translating $\psi$ to a DBA.
Additionally, we need to ensure that all stutterings are fair, i.e., progress \emph{infinitely often}.
To ensure this, we consider the following modified LTL formula
\begin{align*}
	\psi_\mathit{mod} := \Big(\bigwedge_{\beta \in \trajVars_\forall} \ltlG \ltlF \moved_{\beta}\Big) \to \Big( \big(\bigwedge_{\beta \in \trajVars_\exists} \ltlG \ltlF \moved_{\beta}\big) \land \psi  \Big),
\end{align*}
which includes fresh Boolean variables $\moved_{\beta}$ for every $\beta \in \trajVars$. 
The intuition is that $\moved_{\beta}$ holds whenever the stuttering $\beta$ has performed a non-stuttering step in the last round.
The LTL formula $\psi_\mathit{mod}$ then requires that existentially quantified stutterings should progress infinitely often (expressed using the LTL operator combination $\ltlG\ltlF$), and $\psi$ must hold, provided that all universally quantified stutterings (which will be in control of the refuter) are fair. 

Let $\Theta \subseteq \calF_{\calX_{\trajVars}}$ be the \emph{finite} set of first-order formulas used in $\psi$.
For example, in \ref{eq:asyn-od}, $\Theta = \big\{ l_{\beta_1} = l_{\beta_2}, o_{\beta_1} = o_{\beta_2}  \big\}$.
In the following, we assume that $\calA_\psi = (Q_\psi, q_{0, \psi}, \delta_\psi, F_\psi)$ is a DBA over alphabet $2^{\Theta \cup \{\moved_{\beta} \mid \beta \in \trajVars\}}$ accepting exactly those infinite words that satisfy $\psi_\mathit{mod}$.
That is, in each step, we need to tell $\calA_\psi$'s transition function which of the formulas in $\Theta$ hold and which of the stutterings have been progressed (via the $\moved_{\beta}$ variables), and $\calA_\psi$ tracks if this behavior satisfies $\psi_\mathit{mod}$.

\subsection{Game Vertices}

In our asynchronous setting, different stutterings can point to (different positions on) the same trace.
As outlined in \Cref{sec:overview}, we accommodate this by maintaining a \emph{window} of states, i.e., for each trace variable, we track a finite sequence of successive states in $\calT$, akin to a sliding window.
Each stuttering variable $\beta$ is then a (relative) pointer to the state window assigned to trace $\mathit{base}(\beta)$, i.e., the trace that $\beta$ is a stuttering of.
We parameterize our game with a bound $Z \in \nat$, determining the maximal length of each state window.
If we only deal with a single stuttering per trace (as is the case in most properties), a window of size $Z=1$ generally suffices.
If we consider properties with multiple stutterings on the same trace, the bound naturally defines a trade-off between the size of the game (a larger $Z$ generates more vertices) and the flexibility of stutterings (a larger $Z$ allows multiple stutterings of the same trace to diverge further).
In many cases, we can statically infer a bound $Z$, while maintaining completeness (cf.~\Cref{sec:complete}).

We can now define our Büchi game $\calG_{\calT, \varphi, Z}$ used to verify that $\calT \models \varphi$.
Each regular game vertex in $\calG_{\calT, \varphi, Z}$ has the form $\langle \stage, \Xi, \mu, \flat, q \rangle$, where \textbf{(1)} $\stage \in \{\fstage, \estage, \ustage\}$ tracks the current stage of the game; \textbf{(2)} $\Xi : \traceVars \to (S_\calT)^{\leq Z + 1}$ maps each trace variable to a window of states of length at most $Z+1$; \textbf{(3)} $\mu : \trajVars \to \{0, \ldots, Z\}$ maps each stuttering variable $\beta$ to a \emph{relative} position within the window $\Xi(\mathit{base}(\beta))$; \textbf{(4)} $\flat \subseteq \trajVars$ records which of the stutterings has made progress in the last step (used to evaluate $\moved_{\beta}$); and \textbf{(5)} $q \in Q_\psi$ tracks the current state of $\calA_\psi$.
In addition to these regular game vertices, we add a dedicated error vertex $v_\mathit{error}$, which we will reach whenever the window bound $Z$ is insufficient to accommodate all stutterings. 

\begin{example}\label{ex:running}
	Throughout this section, we will use our example from \Cref{fig:strat-asyn} to illustrate our transition rules. 
	In \Cref{fig:strat-asyn}, we represent a vertex $\langle \stage, \Xi, \mu, \flat, q\rangle$ as follows:
	We depict $\Xi : \{\pi_1, \pi_2\} \to (S_\calT)^{\leq 2}$ as an array of nodes; depict the stuttering pointers $\mu  : \{\beta_1, \beta_2\} \to \{0, 1\}$ as arrows; and color the stuttering pointer for $\beta$ in thick iff $\beta \in \flat$.
	For example, the game vertex $\big\langle\ustage, \big[\pi_1 \mapsto [s_0, s_1], \pi_2 \mapsto [s_1, s_2]\big], [\beta_1 \mapsto 0, \beta_2 \mapsto 1], \{\beta_2\}, \_ \big\rangle$ is depicted as vertex $v_{14}$. 
	\demo
\end{example}

We define our final Büchi game as $\game{\calT}{\varphi}{Z} := (V_\verifier, V_\refuter, \alpha, F)$ where $V_\verifier$ contains all vertices of the form $\langle \estage, \Xi, \mu, \flat, q\rangle$, and $V_\refuter$ contains all vertices of the form $\langle \fstage, \Xi, \mu, \flat, q\rangle$, $\langle \ustage, \Xi, \mu, \flat, q\rangle$, and the error vertex $v_\mathit{error}$. 
For the accepting states, we define $F := \{\langle \stage, \Xi, \mu, \flat, q\rangle \mid q \in F_\psi\}$, i.e., the verifier wins a play by infinitely often visiting vertices in which the automaton state is accepting. 
An infinite play in $\game{\calT}{\varphi}{Z}$ is thus won by the verifier iff the simulated run of $\calA_\psi$ is accepting iff $\psi_\mathit{mod}$ is satisfied on the infinite game-play.
The transition function $\alpha$ is defined in \Cref{sec:sub:transition-rules}.

\subsection{Transition Rules}\label{sec:sub:transition-rules}

For the error vertex $v_\mathit{error}$, we define $\alpha(v_\mathit{error}) := \{v_\mathit{error}\}$.
As $v_\mathit{error}$ is a sink vertex that is not contained in $F$, a visit to $v_\mathit{error}$ thus loses the game for the verifier.
For each regular game vertex $v$, we define $\alpha(v) := \{v' \mid v \rightsquigarrow v' \}$, where $\rightsquigarrow$ is defined via the inference rules in \Cref{fig:rules-forall,fig:rules-exist,fig:rules-update}.
Here, each inference rule adds a $\rightsquigarrow$-transition, provided the premise(s) (above the inference rule) are met.

\begin{figure}[!t]
	
	\centering
		\begin{tikzpicture}
			\draw[thick, rounded corners=0mm,gray] (0,1.6) rectangle (5.8, 2.9);
			\node[align=left,anchor=west] at (4.85, 2.2) {\small\textbf{(1.1)}};
			\node[align=left,anchor=west] at (0, 2.2) {\small$\begin{aligned}
					&\Big\{s_\pi \in \kappa_\calT\big( \Xi(\pi)(|\Xi(\pi)| - 1) \big) \Big\}_{\pi \in \traceVars_\forall}\\
					&\mathit{sched} \subseteq \trajVars_\forall
				\end{aligned}$};
			
			\draw[thick, rounded corners=0mm,gray] (0,0.2) rectangle (5.8,1.4);
			\node[align=left,anchor=west] at (4.85, 0.8) {\small \textbf{(1.2)}};
			\node[align=left,anchor=west] at (0, 0.8) {\small$\Xi' = \lambda \pi\ldot \begin{cases}
					\begin{aligned}
						&\Xi(\pi) \cdot s_\pi  &&\text{if } \pi \in \traceVars_\forall\\
						&\Xi(\pi) \quad &&\text{otherwise}
					\end{aligned}
				\end{cases}$};

			\draw[thick, rounded corners=0mm,gray] (6.0,1.2) rectangle (11.9,2.2);
			\node[align=left,anchor=west] at (11, 1.7) {\small \textbf{(1.3)}};
			\node[align=left,anchor=west] at (6.0, 1.7) {\small$\mu' = \lambda \beta\ldot \begin{cases}
					\begin{aligned}
						&\mu(\beta) + 1 &&\text{if } \beta \in \mathit{sched}\\
						&\mu(\beta) &&\text{otherwise} 
					\end{aligned}
				\end{cases}$};
			
			\draw[thick, rounded corners=0mm,gray] (6.0,0.2) rectangle (11.9,1);
			\node[align=left,anchor=west] at (11, 0.6) {\small \textbf{(1.4)}};
			\node[align=left,anchor=west] at (6.0, 0.6) {\small$\flat' = \flat \cup \mathit{sched}$};
			
			\draw[very thick] (-0.1, 0) -- (12, 0);

			\node[align=center] at (6, -0.4) {$\big\langle \fstage, \Xi, \mu, \flat, q\big\rangle\rightsquigarrow \big\langle\estage, \Xi', \mu', \flat', q\big\rangle$};
		\end{tikzpicture}
	\vspace{-2mm}
	\caption{Transitions for vertices in the $\fstage$-stage.}\label{fig:rules-forall}
\end{figure}

\paragraph*{$\fstage$-Stage.}

For each vertex $\langle \fstage, \Xi, \mu, \flat, q\rangle$ in the $\fstage$-stage, the refuter can progress all universally quantified traces and stutterings.  
The transitions from such vertices are defined in \Cref{fig:rules-forall}. 
First, the refuter picks states $\big\{s_\pi \in \kappa_\calT\big( \Xi(\pi)(|\Xi(\pi)| - 1) \big) \big\}_{\pi \in \traceVars_\forall}$ for all universally quantified traces (premise \textbf{(1.1)}).
Note how each $s_\pi$ is a successor state in $\calT$ of the \emph{last} state in the state-window $\Xi(\pi)$ (i.e., $\Xi(\pi)(|\Xi(\pi)| - 1)$), so each state window always models a continuous window along some path in $\calT$.
Additionally, the refuter can decide which of the universally quantified stutterings should be progressed by picking a subset $\mathit{sched} \subseteq \trajVars_\forall$ (premise \textbf{(1.1)}).
After $\{s_\pi\}_{\pi \in \traceVars_\forall}$ and $\mathit{sched}$ are fixed, we can update the game vertex:
We append $s_\pi$ to the state window $\Xi(\pi)$ for each $\pi \in \traceVars_\forall$ (premise \textbf{(1.2)}).
At the same time, $\mathit{sched}$ determines which (universally quantified) stuttering should be progressed (i.e., make a non-stuttering step by moving to the next position in the state window).
For all $\beta \in \mathit{sched}$, we increment the $\mu(\beta)$ pointer (thus pointing to the next state in $\mathit{base}(\beta)$'s state window), and leave $\mu(\beta)$ unchanged for all other stutterings (premise \textbf{(1.3)}).
We record which stutterings have progressed by setting $\flat' = \flat \cup \mathit{sched}$ (premise \textbf{(1.4)}).
Note how each possible successor vertex of $\big\langle \fstage, \Xi, \mu, \flat, q\big\rangle$ \emph{precisely} corresponds to the intended actions that the refuter can take in each vertex (i.e., extend the trace windows of universally quantified traces and decide on universally quantified stutterings). 

\begin{example}
	In the example from \Cref{fig:strat-asyn}, the refuter only controls the trace variable $\pi_1$ ($\pi_2, \beta_1, \beta_2$ are existentially quantified). 
	For example, vertex $v_1$ in \Cref{fig:strat-asyn} represents  game vertex $\big\langle\fstage, \big[\pi_1 \mapsto [s_0], \pi_2 \mapsto [s_0]\big], [\beta_1 \mapsto 0, \beta_2 \mapsto 0], \emptyset, \_ \big\rangle$. 
	According to the rules in \Cref{fig:rules-forall}, this vertex has two successors. 
	In premise \textbf{(1.1)}, we can either pick $s_{\pi_1} = s_0$ or $s_{\pi_1} = s_1$, the two possible successor states of $s_0$ (the last state in $\pi_1$'s window), cf.~the transition system $\calT$ in \Cref{fig:system-asyn}.
	This state is then appended to $\pi_1$'state window, leading to vertices $v_2$ and $v_8$, respectively.
	Note that the stuttering pointers are left unchanged as $\trajVars_\forall = \emptyset$.
	 \demo
\end{example}

\paragraph*{$\estage$-Stage.}

Analogously, for vertices in the $\estage$-stage, the verifier can progress all existentially quantified traces and stutterings, defined in \Cref{fig:rules-exist}.

\begin{example}
	Consider vertex $v_9$ in \Cref{fig:strat-asyn} which represents game vertex $\big\langle\estage, \big[\pi_1 \mapsto [s_0, s_1], \pi_2 \mapsto [s_1]\big], [\beta_1 \mapsto 0, \beta_2 \mapsto 0], \emptyset, \_ \big\rangle$. 
	According to the rules in \Cref{fig:rules-forall}, this vertex has four successors:
	The only choice for $s_{\pi_2}$ (i.e., the state that is appended to $\pi_2$'s state window) is $s_{\pi_2} = s_2$ (as this is the only successor of state $s_1$ in $\calT$, cf.~\Cref{fig:system-asyn}).
	For the scheduling, the verifier can pick any $\mathit{sched} \subseteq \trajVars_\exists = \{\beta_1, \beta_2\}$. 
	Let us pick $\mathit{sched} = \{\beta_2\}$. 
	According to the transition rules in \Cref{fig:rules-exist}, we obtain vertex $\big\langle\ustage, \big[\pi_1 \mapsto [s_0, s_1], \pi_2 \mapsto [s_1, s_2]\big], [\beta_1 \mapsto 0, \beta_2 \mapsto 1], \{\beta_2\}, \_ \big\rangle$, i.e., we append $s_2$ to $\pi_2$'s state window, increment $\mu(\beta_2)$ by $1$ (pointing to the next state in $\pi_2$ state window), leave $\mu(\beta_1)$ unchanged (so $\beta_1$ still points to the same state in $\pi_1$'s state window as in the previous round, effectively stuttering $\beta_1$), and record that we have progressed $\beta_2$ by setting $\flat = \{\beta_2\}$. 
	In \Cref{fig:strat-asyn}, this vertex is depicted as $v_{14}$. 
	\demo
\end{example}

\begin{figure}[!t]
	
	\centering
		\begin{tikzpicture}
			\draw[thick, rounded corners=0mm,gray] (0,1.6) rectangle (5.8, 2.9);
			\node[align=left,anchor=west] at (0, 2.2) {\small$\begin{aligned}
					&\Big\{s_\pi \in \kappa_\calT\big( \Xi(\pi)(|\Xi(\pi)| - 1) \big) \Big\}_{\pi \in \traceVars_\exists}\\
					&\mathit{sched} \subseteq \trajVars_\exists
				\end{aligned}$};
			
			\draw[thick, rounded corners=0mm,gray] (0,0.2) rectangle (5.8,1.4);
			\node[align=left,anchor=west] at (0, 0.8) {\small$\Xi' = \lambda \pi\ldot \begin{cases}
					\begin{aligned}
						&\Xi(\pi) \cdot s_\pi  &&\text{if } \pi \in \traceVars_\exists\\
						&\Xi(\pi) \quad &&\text{otherwise}
					\end{aligned}
				\end{cases}$};

			\draw[thick, rounded corners=0mm,gray] (6.0,1.2) rectangle (11.9,2.2);
			\node[align=left,anchor=west] at (6.0, 1.7) {\small$\mu' = \lambda \beta\ldot \begin{cases}
					\begin{aligned}
						&\mu(\beta) + 1 &&\text{if } \beta \in \mathit{sched}\\
						&\mu(\beta) &&\text{otherwise} 
					\end{aligned}
				\end{cases}$};
			
			\draw[thick, rounded corners=0mm,gray] (6.0,0.2) rectangle (11.9,1);
			\node[align=left,anchor=west] at (6.0, 0.6) {\small$\flat' = \flat \cup \mathit{sched}$};
			
			\draw[very thick] (-0.1, 0) -- (12, 0);

			\node[align=center] at (6, -0.4) {$\big\langle \estage, \Xi, \mu, \flat, q\big\rangle\rightsquigarrow \big\langle\ustage, \Xi', \mu', \flat', q\big\rangle$};
		\end{tikzpicture}
	\vspace{-2mm}
	\caption{Transitions for vertices in the $\estage$-stage.}\label{fig:rules-exist}
\end{figure}

\begin{figure}[!t]
	
	\centering
		\begin{tikzpicture}
			
			\draw[thick, rounded corners=0mm,gray] (0,0.2) rectangle (8.6, 1.2);
			\node[align=left,anchor=west] at (7.6, 0.7) {\small \textbf{(2.1)}};
			\node[align=left,anchor=west] at (0, 0.7) {\small$\begin{aligned}
					\exists \beta_1, \beta_2  \in \trajVars \ldot \mathit{base}(\beta_1) = \mathit{base}(\beta_2) \land |\mu(\beta_1) - \mu(\beta_2)| \geq Z
				\end{aligned}$};
			
			\draw[very thick] (-0.1, 0) -- (8.7, 0);

			\node[align=center] at (4.3, -0.4) {$\big\langle \ustage, \Xi, \mu, \flat, q\big\rangle\rightsquigarrow v_\mathit{error}$};
	\end{tikzpicture}
	
	\vspace{2mm}
	
		\begin{tikzpicture}
			\draw[thick, rounded corners=0mm,gray] (0,2.5) rectangle (6.4,3.5);
			\node[align=left,anchor=west] at (5.5, 3) {\small \textbf{(3.1)}};
			\node[align=left,anchor=west] at (0, 3) {\small$\mathit{mo} = \big\{\pi \mid \forall \beta\ldot (\mathit{base}(\beta) = \pi) \Rightarrow \mu(\beta) \neq 0\big\}$};
			
			\draw[thick, rounded corners=0mm,gray] (0,1.2) rectangle (6.4,2.3);
			\node[align=left,anchor=west] at (5.5, 1.75) {\small \textbf{(3.2)}};
			\node[align=left,anchor=west] at (0, 1.75) {\small$\Xi' = \lambda \pi \ldot \begin{cases}
					{\begin{aligned}
							&\Xi(\pi)[0,Z) & \text{if } \pi \not\in \mathit{mo} \\
							&\Xi(\pi)[1, Z+1)& \text{if } \pi \in \mathit{mo}
					\end{aligned}}
				\end{cases}$};
			
			\draw[thick, rounded corners=0mm,gray] (6.6,1.2) rectangle (12.7,2.3);
			\node[align=left,anchor=west] at (11.8, 1.75) {\small \textbf{(3.3)}};
			\node[align=left,anchor=west] at (6.6, 1.75) {\small$\mu' = \lambda \beta\ldot \begin{cases}
					\begin{aligned}
						&\mu(\beta) - 1 && \text{if } \mathit{base}(\beta) \in \mathit{mo} \\
						&\mu(\beta) && \text{otherwise}
				\end{aligned}\end{cases}$};

			\draw[thick, rounded corners=0mm,gray] (0,0.2) rectangle (12.7, 1.0);
			\node[align=left,anchor=west] at (11.8, 0.6) {\small\textbf{(3.4)}};
			\node[align=left,anchor=west] at (0, 0.6) {\small$\begin{aligned}
					q' = \delta_\psi\Big(q, \Big\{ \theta \in \Theta \mid \big[x_{\beta} \mapsto \ell_\calT\big(\Xi(\mathit{base}(\beta))(\mu(\beta))\big)(x)  \big]_{x \in \calX, \beta \in \trajVars}  \models^\theory \theta \Big\} \cup \Big\{ \moved_{\beta} \mid \beta \in \flat   \Big\} \Big)
				\end{aligned}$};
			
			\draw[very thick] (-0.1, 0) -- (12.8, 0);
			
			\node[align=center] at (6.35, -0.4) {$\big\langle \ustage, \Xi, \mu, \flat, q\big\rangle\rightsquigarrow \big\langle\fstage, \Xi', \mu', \emptyset, q'\big\rangle$};
	\end{tikzpicture}
	
	\vspace{-2mm}

	\caption{Transitions for vertices in the $\ustage$-stage.}\label{fig:rules-update}
\end{figure}

\paragraph*{$\ustage$-Stage.}

The transitions of the $\ustage$-stage are defined in \Cref{fig:rules-update}.
In \textbf{(2.1)}, we first check if the window size $Z$ is insufficient to capture all stutterings for some trace variable, i.e., if there exist two stutterings $\beta_1, \beta_2$ that point to the same state window ($\mathit{base}(\beta_1) = \mathit{base}(\beta_2)$) and differ by at least $Z$ steps. 
If this is the case, we move to $v_\mathit{error}$ and thus let the verifier lose.
Otherwise, we re-enter the $\fstage$-stage. 
Simultaneously, we update the automaton state of $\calA_\psi$ (premise \textbf{(3.4)}):
For this, we need to evaluate all first-order formulas $\theta \in \Theta$ (recall that $\Theta \subseteq \calF_{\calX_{\trajVars}}$ is the finite set of first-order formulas used as atoms in $\varphi$'s LTL body $\psi$). 
Here, $\theta \in \Theta$ holds in the current game vertex iff $\big[x_{\beta} \mapsto \ell_\calT\big(\Xi(\mathit{base}(\beta))(\mu(\beta))\big)(x)  \big] \models^\theory \theta.$
That is, for each indexed variable $x_{\beta}$, we take the current state for the $\beta$ stuttering, which is the $\mu(\beta)$th state within $\mathit{base}(\beta)$'s state window, i.e., $\Xi(\mathit{base}(\pi))(\mu(\beta))$, and look up the value of $x$ via $\calT$'s labeling function $\ell_\calT$.
To evaluate the auxiliary propositions of the form $\moved_{\beta}$ (as used by $\psi_\mathit{mod}$ to ensure fair stutterings, cf.~\Cref{sub:formula-transformation}), we use $\flat$ and set $\moved_{\beta}$ iff $\beta \in \flat$, i.e., iff $\beta$ has been progressed in the previous $\fstage$-stage or $\estage$-stage.
Afterward, we reset $\flat$ to the empty set.
In addition to updating the automaton state, we also trim the state windows in $\Xi$ and update the relative pointers accordingly. 
For this, we first compute a set $\mathit{mo} \subseteq \traceVars$, which contains all trace variables $\pi$ where no stuttering indexes the first position, i.e., all stutterings $\beta$ with $\mathit{base}(\beta) = \pi$ index a position greater than $0$ (premise \textbf{(3.1)}). 
Premise \textbf{(3.2)} then shifts the windows of all traces in $\mathit{mo}$ by removing their first position. 
By construction of $\mathit{mo}$, all stutterings have progressed past the first position, so the state at this position will never be used for the evaluation of $\psi_\mathit{mod}$.
In addition to shifting the window, we also trim the end of the window to ensure that the window has length at most $Z$.
For all stutterings $\beta \in \mathit{mo}$ where we shifted the state window (i.e., $\mathit{base}(\beta) \in \mathit{mo}$), we correct the relative positions in $\mu$ by decrementing the pointer (premise \textbf{(3.3)}). 
For every stuttering pointer $\beta$, the current state in $\mathit{base}(\beta)$'s state window thus does not change (we counteract all window shifts by decrementing the relative position). 

\begin{example}
	Consider vertex $v_{14}$ in \Cref{fig:strat-asyn}, i.e., vertex $\big\langle\ustage, \big[\pi_1 \mapsto [s_0, s_1], \pi_2 \mapsto [s_1, s_2]\big], [\beta_1 \mapsto 0, \beta_2 \mapsto 1], \{\beta_2\}, \_ \big\rangle$.
	In premise \textbf{(3.1)}, we first compute $\mathit{mo} = \{\pi_2\}$. Note that $\pi_1$ is \emph{not} in $\mathit{mo}$ as $\beta_1$ points to $\pi_1$'s first state, so we cannot drop the first position in $\pi_1$'s window yet. 
	We then, update the state windows in premise \textbf{(3.2)}: We remap $\pi_1$ to $[s_0, s_1][0, 1) = [s_0]$ (as $\pi_1 \not\in \mathit{mo}$), and $\pi_2$ to $[s_1, s_2][1, 2) = [s_2]$ (recall that we set $Z = 1$).
	We correct for this, by decrementing $\mu(\beta_2)$ since $\mathit{base}(\beta_2) = \pi_2 \in \mathit{mo}$ (premise \textbf{(3.3)}).
	The stuttering $\beta_2$ thus still maps to the $s_2$ state in $\pi_2$'s window (which now is at position $0$ in the updated state window). 
	This update results in vertex $\big\langle\fstage, \big[\pi_1 \mapsto [s_0], \pi_2 \mapsto [s_2]\big], [\beta_1 \mapsto 0, \beta_2 \mapsto 0], \emptyset, \_ \big\rangle$, depicted as vertex $v_{15}$ in \Cref{fig:strat-asyn}.
	To progress the DBA (which we omitted in \Cref{fig:strat-asyn}), we need to  evaluate the first-order formula in $\Theta = \big\{ a_{\beta_1} \neq a_{\beta_2} \big\}$ (used as atomic formulas in \ref{eq:fair}) and the propositions $\moved_{\beta_1}$ and $\moved_{\beta_2}$ (premise \textbf{(3.4)}).
	In vertex $v_{14}$, we obtain the current value of $a_{\beta_1}$ (resp.~$a_{\beta_2}$) by looking at the $\mu(\beta_1) = 0$th (resp.~$\mu(\beta_2) = 1$th) position in state window $\Xi(\mathit{base}(\beta_1)) = \Xi(\pi_1) = [s_0, s_1]$ (resp.~$\Xi(\mathit{base}(\beta_2)) = \Xi(\pi_2) = [s_1, s_2]$), which is state $s_0$ (resp.~$s_2$), so $a_{\beta_1} = \ell_\calT(s_0)(a) = 0$ (resp.~$a_{\beta_2} = \ell_\calT(s_2)(a) = 1$). 
	Since $\flat = \{\beta_2\}$, we get $\big\{ \moved_{\beta} \mid \beta \in \flat   \big\} = \{\moved_{\beta_2}\}$, so we update the DBA state to $q' = \delta(q, \{ a_{\beta_1} \neq a_{\beta_2}, \moved_{\beta_2} \})$.
	Even though we omitted the DBA in \Cref{fig:strat-asyn}, it is easy to see that all paths in the strategy satisfy $\psi_\mathit{mod} = \ltlN \ltlG (a_{\beta_1} \neq a_{\beta_2}) \land \ltlG\ltlF \moved_{\beta_1} \land \ltlG\ltlF \moved_{\beta_2}$. 
	 \demo
\end{example}

\subsection{Initial Vertices}

Initially, we want to provide the verifier with as much information as possible, so we consider all possible state windows of length $Z$ for universally quantified traces (cf.~\Cref{sub:lookahead}). 
Formally, we define 
$V_\mathit{init}$ as all vertices $\langle\ustage, \Xi, \mu, \emptyset, q_{0, \psi}\rangle$ where $\mu(\beta) = 0$ for all $\beta \in \trajVars$, and the state windows in $\Xi$ satisfy:
$\Xi(\pi) = [s_{0, \calT}]$ for all $\pi \in \traceVars_\exists$ (i.e., all existentially quantified traces start in a state window of length $1$), and for all $\pi \in \traceVars_\forall$, $\Xi(\pi) = [s_0, s_1, \ldots, s_{Z-1}]$ where $s_0 = s_{0, \calT}$ and $s_{i+1} \in \kappa_\calT(s_i)$ for $0 \leq i < Z - 1$ (i.e., all universally quantified traces start in a state window of length $Z$ that contains consecutive states in $\calT$ starting from $s_{0, \calT}$).
We are interested in checking if the verifier can win from all vertices in $V_\mathit{init}$, i.e., whether $\mathit{wins}_{\game{\calT}{\varphi}{Z}}(\verifier, V_{\mathit{init}})$.

\subsection{Soundness}

\begin{restatable}[Soundness]{theorem}{sound}\label{theo:soudness}
	Consider any transition system $\calT$, a $\forall^*\exists^*$ A-HLTL formula $\varphi$, and a bound $Z \in \nat_{\geq 1}$.
	If $\mathit{wins}_{\game{\calT}{\varphi}{Z}}(\verifier, V_{\mathit{init}})$, then $\calT \models \varphi$. 
\end{restatable}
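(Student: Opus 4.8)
The plan is to read off explicit existential witnesses from a winning strategy of the verifier. Fix a winning strategy $\sigma$ for $\verifier$ from every vertex in $V_\mathit{init}$. Since $\varphi$ is a $\forall^*\exists^*$ formula, establishing $\calT \models \varphi$ amounts to showing: for every family of universal traces $(T_\pi)_{\pi \in \traceVars_\forall} \subseteq \traces{\calT}$ and every family of fair universal stutterings $(U_\beta)_{\beta \in \trajVars_\forall}$ with $U_\beta \stutter T_{\mathit{base}(\beta)}$, there exist existential traces $(T_\pi)_{\pi \in \traceVars_\exists}$ and fair existential stutterings $(U_\beta)_{\beta \in \trajVars_\exists}$ so that the assembled assignments $\Pi, \Delta$ satisfy $\Pi, \Delta, 0 \models_{\traces{\calT}} \psi$. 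I fix such universal data, pick underlying paths $p_\pi = s^\pi_0 s^\pi_1 \cdots$ realizing each $T_\pi$ ($\pi \in \traceVars_\forall$), and let $f_\beta : \nat \to \nat$ be the witnessing stutter function of each $U_\beta$ (monotone, surjective, with $U_\beta(i) = T_{\mathit{base}(\beta)}(f_\beta(i))$).

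Next I encode these universal choices as a canonical sequence of $\refuter$-moves. Starting from the vertex in $V_\mathit{init}$ whose universal windows are the length-$Z$ prefixes $s^\pi_0 \cdots s^\pi_{Z-1}$, in each $\fstage$-stage the refuter appends the unique next state $s^\pi_{b+1}$ to every universal window (where $s^\pi_b$ is its current last state) and picks $\mathit{sched} = \{\beta \in \trajVars_\forall \mid f_\beta(i{+}1) = f_\beta(i){+}1\}$, scheduling exactly the universal stutterings that take a non-stuttering step from position $i$ to $i{+}1$. Running $\sigma$ against these moves produces a unique $\sigma$-compatible play $\rho$; as $\sigma$ wins, $\rho$ never reaches $v_\mathit{error}$ and visits $F$ infinitely often. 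The existential witnesses are read off $\rho$: the states $\sigma$ appends in the $\estage$-stages, as they scroll out of the front-trimmed windows, assemble into paths $p_\pi$ for $\pi \in \traceVars_\exists$ (each successive state is a $\kappa_\calT$-successor, so $T_\pi := \ell_\calT(p_\pi) \in \traces{\calT}$), and the $\mathit{sched}$-decisions $\sigma$ makes for existential stutterings define their stutter functions $f_\beta$, hence the $U_\beta$.

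The technical core is an invariant proved by induction on the number $i$ of $\ustage$-stages along $\rho$: at the $i$-th $\ustage$-vertex $\langle \ustage, \Xi, \mu, \flat, q\rangle$, each window $\Xi(\pi)$ is a contiguous segment of $p_\pi$ with front index $\min_{\beta:\,\mathit{base}(\beta)=\pi} f_\beta(i)$, each pointer satisfies $\Xi(\mathit{base}(\beta))(\mu(\beta)) = s^{\mathit{base}(\beta)}_{f_\beta(i)}$ (so its label is $U_\beta(i)$), and $\flat$ collects exactly the $\beta$ with $f_\beta(i) = f_\beta(i{-}1){+}1$. The inductive step tracks the three rule groups: the $\fstage$- and $\estage$-stages append one state and increment the scheduled pointers (premises (1.2)--(1.4) and their $\estage$-analogue), while the $\ustage$-stage front-trims every window in $\mathit{mo}$ and decrements the corresponding pointers (premises (3.1)--(3.3)), so the absolute position $f_\beta(i)$ denoted by each pointer is preserved. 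Because $\rho$ avoids $v_\mathit{error}$, premise (2.1) never fires, so same-base pointers differ by less than $Z$; together with front~$=$~minimal position this bounds every relative pointer by $Z{-}1$, whence the back-trimming to length $Z$ never discards a state any pointer needs, and the single append in the following stage always supplies the one new position a pointer can reach. Granting the invariant, premise (3.4) makes $\calA_\psi$ read at step $i$ exactly the letter $\{\theta \in \Theta \mid \Delta_{(i)} \models^\theory \theta\} \cup \{\moved_\beta \mid f_\beta(i) = f_\beta(i{-}1){+}1\}$ induced by $\Delta := [\beta \mapsto U_\beta]$.

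Finally I convert acceptance of $\rho$ into satisfaction of $\psi$. Each universal $U_\beta$ is fair, so each $f_\beta$ ($\beta \in \trajVars_\forall$) is surjective, $\moved_\beta$ holds infinitely often, and the antecedent $\bigwedge_{\beta \in \trajVars_\forall} \ltlG\ltlF \moved_\beta$ of $\psi_\mathit{mod}$ is true on this letter sequence. Since $\rho$ visits $F$ infinitely often, $\calA_\psi$ accepts and $\psi_\mathit{mod}$ holds; its consequent yields $\bigwedge_{\beta \in \trajVars_\exists} \ltlG\ltlF \moved_\beta$ — so the extracted existential $f_\beta$ are surjective, the $U_\beta$ are genuine fair stutterings, and the existential windows scroll forward, making each $p_\pi$ a well-defined infinite path — together with the fact that $\psi$ holds on the letter sequence, which by the invariant is exactly $\Pi, \Delta, 0 \models_{\traces{\calT}} \psi$. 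This provides the required witnesses and hence $\calT \models \varphi$. The step I expect to be the main obstacle is the invariant of the third paragraph: maintaining the exact correspondence between the bounded sliding windows with relative pointers and the unbounded absolute stutter positions $f_\beta(i)$, and in particular arguing that along any error-free play the bound $Z$ always suffices so that no pointer is ever trimmed away; once this bookkeeping is in place, validity of the extracted paths, fairness of the existential stutterings, and the letter-by-letter match with $\Delta$ all follow.
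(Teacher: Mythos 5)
Your proposal is correct and uses the same simulate-and-extract argument as the paper: encode the fixed universal data as refuter moves, run the winning strategy $\sigma$ against them, extract existential traces and stutterings from $\sigma$'s responses, and use $\psi_\mathit{mod}$ to turn B\"uchi acceptance into fairness of the existential stutterings plus satisfaction of $\psi$. What differs is the decomposition. The paper proves the theorem as two separate propositions: one for formulas with no universally quantified stutterings, and one for formulas with them, where it \emph{first} shows that a winning verifier forces at most one universal stuttering per trace variable (otherwise the refuter can diverge two of them and reach $v_\mathit{error}$). You instead treat arbitrary universal stutter functions $f_\beta$ uniformly; this works, but the reason deserves an explicit sentence: if the universal data ever drives two same-base pointers $Z$ apart, your simulated play is forced into $v_\mathit{error}$ and is lost, contradicting the hypothesis $\mathit{wins}_{\game{\calT}{\varphi}{Z}}(\verifier, V_\mathit{init})$, so the conclusion holds vacuously for such data --- this is exactly the paper's ``at most one universal stuttering per base'' observation, absorbed into a contradiction rather than stated as a precondition. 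One caveat on your key inference ``$\rho$ avoids $v_\mathit{error}$, hence premise \textbf{(2.1)} never fires'': this is immediate only if the two $\ustage$-rules in \Cref{fig:rules-update} are read as mutually exclusive, so that a vertex satisfying \textbf{(2.1)} has $v_\mathit{error}$ as its \emph{unique} successor (this is what the prose and \Cref{sec:overview} intend); if one instead reads the two rules as coexisting edges, the inference is a non sequitur and you need the one-line deviation argument: were \textbf{(2.1)} true at some vertex of the $\sigma$-compatible play $\rho$, the refuter could move to $v_\mathit{error}$ there, producing a lost $\sigma$-compatible play and contradicting that $\sigma$ wins. Finally, your explicit window/pointer invariant is finer-grained than anything the paper writes down (it dismisses this bookkeeping as easy); it is the right statement, up to the harmless detail that the window front can lag $\min_\beta f_\beta(i)$ by one round rather than equal it.
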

\begin{proof}[Proof Sketch.]
	We use a winning strategy for the verifier to construct witnesses for existentially quantified traces and stutterings.
	For this, we step-wise simulate the universal quantified traces and stutterings (taking the refuter's perspective) and use the strategy's response to construct witnesses. 
	A full proof is given in \ifFull{\Cref{app:sound}}{the full version}.
\end{proof}

Moreover, it is easy to see that our game-based approximation is monotone in the window size:

\begin{restatable}{lemma}{monotone}\label{lem:monotone}
	For any $\calT, \varphi$, if $\mathit{wins}_{\game{\calT}{\varphi}{Z}}(\verifier, V_{\mathit{init}})$  and $Z' \geq Z$, then $\mathit{wins}_{\game{\calT}{\varphi}{Z'}}(\verifier, V_{\mathit{init}})$.
	For any $Z \in \nat_{\geq 1}$, there exists $\calT, \varphi$ such that $\neg \mathit{wins}_{\game{\calT}{\varphi}{Z}}(\verifier, V_{\mathit{init}})$ but $\mathit{wins}_{\game{\calT}{\varphi}{Z+1}}(\verifier, V_{\mathit{init}})$.
\end{restatable}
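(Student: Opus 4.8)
The plan is to treat the two claims separately: the first (monotonicity) by a direct strategy transfer from the $Z$-game to the $Z'$-game, and the second (strictness) by exhibiting, for each $Z$, a small deterministic family $\calT_Z, \varphi_Z$ that forces two existential stutterings of the same trace to diverge by exactly $Z$ positions.

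\emph{Monotonicity.} Fix a positional winning strategy $\sigma$ for $\verifier$ in $\game{\calT}{\varphi}{Z}$ from $V_{\mathit{init}}$. I would build a (memory) strategy $\sigma'$ for $\verifier$ in $\game{\calT}{\varphi}{Z'}$ that, alongside the actual $Z'$-play, maintains in memory a ``shadow'' play of the $Z$-game obtained by mirroring every move: the refuter's state-appends and schedule choices are copied verbatim (so both games build the same paths through $\calT$ and the same stuttering-pointer sequences), and $\verifier$ answers in the $Z'$-game by replaying $\sigma$'s answer in the shadow. The invariant I would maintain is that the shadow vertex and the real vertex always agree on the stage, the automaton state $q$, the set $\flat$, and --- crucially --- on the \emph{state each stuttering currently points to} and on all same-base pointer differences. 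Three observations drive the argument: (i) since the error check only fires when two same-base pointers differ by at least the bound, and $\sigma$ never loses the shadow game, all same-base differences stay below $Z \le Z'$, so the real play never reaches $v_\mathit{error}$ either; (ii) the larger initial look-ahead in the $Z'$-game's $V_{\mathit{init}}$ (universal windows of length $Z'$ rather than $Z$) and the milder trimming are never referenced by the stutterings, because a stuttering can lead the window front by strictly less than the bound, so the extra trailing states of the longer windows are invisible to premise \textbf{(3.4)}; and (iii) because $q$ and the pointed-to states coincide step-by-step, the simulated run of $\calA_\psi$ --- and hence the Büchi outcome --- is identical in both games, so the real play inherits acceptance from the winning shadow play. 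As Büchi games are positionally determined \cite{martin1975borel}, producing this memory strategy suffices to conclude $\mathit{wins}_{\game{\calT}{\varphi}{Z'}}(\verifier, V_{\mathit{init}})$.

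\emph{Strictness.} For a fixed $Z \in \nat_{\geq 1}$ I would take the deterministic system $\calT_Z$ with states $s_0, \dots, s_Z$, labelling $\ell(s_i)(a) = i$, transitions $s_i \to s_{i+1}$ for $0 \le i < Z$, and a self-loop $s_Z \to s_Z$; its unique trace reads $a = 0,1,\dots,Z,Z,\dots$. Take $\varphi_Z := \forall \pi \ldot \exists \beta_1 \stutter \pi \ldot \exists \beta_2 \stutter \pi \ldot \ltlF(a_{\beta_1} = 0 \land a_{\beta_2} = Z)$, a $\forall^*\exists^*$ formula with two stutterings of the same trace. Since value $0$ occurs only at position $0$ and value $Z$ only at positions $\ge Z$, any configuration satisfying $a_{\beta_1} = 0 \land a_{\beta_2} = Z$ must have $|\mu(\beta_1) - \mu(\beta_2)| \ge Z$. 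With bound $Z$ this is exactly the error condition \textbf{(2.1)}, so $\verifier$ cannot realise the required configuration without losing, while never realising it leaves $\ltlF(\cdots)$ --- and thus $\psi_\mathit{mod}$ (whose universal-fairness antecedent is vacuous here) --- unsatisfied; either way $\neg \mathit{wins}_{\game{\calT_Z}{\varphi_Z}{Z}}(\verifier, V_{\mathit{init}})$. With bound $Z+1$, the initial universal window already has length $Z+1$ and lists $s_0, \dots, s_Z$; $\verifier$ keeps $\beta_1$ at position $0$ and advances $\beta_2$ to position $Z$, reaching difference $Z < Z+1$ (so \textbf{(2.1)} does not fire) in a configuration where $a_{\beta_1}=0 \land a_{\beta_2}=Z$, satisfying $\ltlF(\cdots)$; afterwards it advances both pointers along the $s_Z$ self-loop to keep both stutterings fair, so $\mathit{wins}_{\game{\calT_Z}{\varphi_Z}{Z+1}}(\verifier, V_{\mathit{init}})$.

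\emph{Main obstacle.} The delicate part is the monotonicity simulation: I must verify that mirroring the refuter's moves always yields a \emph{legal} and $\sigma$-compatible shadow play despite the two games framing their windows differently (different initial look-ahead, different trimming bounds, and the accompanying pointer re-indexing of premise \textbf{(3.3)}), and that the pointed-to states therefore genuinely coincide at every $\ustage$-stage. Establishing this invariant rigorously --- rather than the strictness construction, which is routine --- is where the real bookkeeping lies.
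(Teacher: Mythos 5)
Your two halves fare differently when set against the paper's own proof.

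For the strictness claim your construction is correct but genuinely different from the paper's. The paper uses the \emph{cyclic} (hence non-terminating) system with states $s_0,\dots,s_Z$ arranged in a cycle of length $Z+1$, labelled $a\mapsto 0$ at $s_0$ and $a\mapsto 1$ elsewhere, together with the purely existential formula $\exists \pi\ldot \exists\beta_1 \stutter \pi\ldot \exists\beta_2\stutter\pi\ldot \ltlN\ltlG(a_{\beta_1}\neq a_{\beta_2})$: to keep the two pointers on differently-labelled states forever while remaining fair, they must repeatedly attain distance exactly $Z$, which trips premise \textbf{(2.1)} at bound $Z$ but not at $Z+1$. Your terminating chain with the reachability obligation $\ltlF(a_{\beta_1}=0\land a_{\beta_2}=Z)$ exploits the same mechanism (two same-base existential stutterings forced $Z$ apart) via a one-shot obligation instead of an invariant; both are sound, and yours is pleasantly consistent with \Cref{prop:term}, since your system is terminating with depth $Z+1$.

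For monotonicity you follow the same route as the paper --- the paper's entire proof is ``extend $\sigma$ by ignoring all states beyond the first $Z$'' --- but the obstacle you flag at the end is a genuine gap in the verbatim-mirroring argument (one the paper's two-line proof also glosses over), and it is not mere bookkeeping. The problem sits with the \emph{existential} windows, not the universal ones. Rule \textbf{(3.2)} trims windows to length at most $Z$ in the shadow game but at most $Z'$ in the real game, so the real game \emph{retains} appends of the verifier that the shadow game throws away (an append is dropped exactly when the window is at full length and no shift occurs). Suppose such a dropped state is $s_a$; since $\sigma$ is positional on a vertex that no longer records $s_a$, it may later append a different successor $s_b$ of the shadow window's last state. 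In the real game the verifier is now stuck: its window still ends in $s_a$, so ``replaying $\sigma$'s answer'' $s_b$ need not even be a legal move --- the premise of \Cref{fig:rules-exist} requires a successor of the \emph{real} window's last state --- and if a pointer later reaches that slot, the two plays disagree on the pointed-to state, breaking exactly the agreement your point (iii) relies on. Your point (ii) correctly disposes of the trailing states of \emph{universal} windows (they are never read by \textbf{(3.4)} and only widen the refuter's commitments), but it does not apply to the verifier's own early-committed appends, which is where the two games' trimming genuinely differ. Closing this hole needs an additional idea --- for instance, showing that one may restrict attention to shadow strategies that never revise dropped appends, or exploiting the fact that a longer existential window is always accompanied by correspondingly longer universal lookahead so that commitment times can be matched --- rather than a more careful execution of the copy-verbatim plan.
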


In the example in \Cref{sec:overview}, we dealt with a unique stuttering for each trace, so a window of size $Z = 1$ suffices to accommodate this single stuttering.	
If we consider \emph{multiple} stutterings on the same trace, we sometimes need larger state windows and thus bounds $Z > 1$.
We give a concrete example in \ifFull{\Cref{app:windowSize}}{the full version}.

\begin{remark}\label{rem:finite}
	In general, game $\game{\calT}{\varphi}{Z}$ contains infinitely many states. 
	If the system $\calT$ is represented symbolically (e.g., using first-order logic), we can derive a symbolic description of  $\game{\calT}{\varphi}{Z}$ and either leverage existing approaches for solving infinite-state games \cite{LaveauxWW22,BeyeneCPR14,HeimD24,FarzanK18,AlfaroHM01,SamuelDK21,SchmuckHDN24} or let the user solve the game interactively \cite{CorrensonF25} (we discuss this further in \Cref{sec:related-work}).
	If $\calT$ is a finite-state transition system (i.e., the set of states $S_\calT$ is finite), $\game{\calT}{\varphi}{Z}$ is a finite-state Büchi game, which we can compute directly.
	In this case, the number of vertices of $\game{\calT}{\varphi}{Z}$ is exponential in the bound $Z$ and the number of traces in $\varphi$ (as is usual for self-compositions \cite{BartheDR04}), but only \emph{linear} in $|S_\calT|$.
	As Büchi games can be solved in polynomial time \cite{McNaughton93}, checking if $\mathit{wins}_{\game{\calT}{\varphi}{Z}}(\verifier, V_{\mathit{init}})$ is decidable in  polynomial time in the size of the system.
\end{remark}

\section{Completeness}\label{sec:complete}

In general, our game-based verification approach is incomplete, i.e., the verifier might lose $\game{\calT}{\varphi}{Z}$, even though $\calT \models \varphi$ holds.
In general, we thus cannot use the game to conclude the violation of a property; only prove satisfaction.
This incompleteness is inevitable: For a finite-state system $\calT$, checking if $\mathit{wins}_{\game{\calT}{\varphi}{Z}}(\verifier, V_{\mathit{init}})$ is decidable (cf.~\Cref{rem:finite}), but $\calT \models \varphi$ is, in general, undecidable \cite{BaumeisterCBFS21}.
However, our method is complete for many fragments and thus constitutes a (finite-state) decision procedure.
In particular, the fragments for which our approach is complete subsume many previously known decidable classes \cite{BaumeisterCBFS21,HsuBFS23}.

\subsection{Alternation-Free Formulas}

We first consider the case in which there is at most one stuttering per trace variable, and all traces and stutterings are quantified either existentially or universally.
In this case, a window size of $Z = 1$ already ensures completeness:

\begin{restatable}{theorem}{altfree}
	Let $\calT$ be any transition system, and let $\varphi$ be an $\exists^*$ or $\forall^*$ A-HLTL formula with at most one stuttering per trace.
	Then $\mathit{wins}_{\game{\calT}{\varphi}{1}}(\verifier, V_{\mathit{init}})$ if and only if $\calT \models \varphi$. 
\end{restatable}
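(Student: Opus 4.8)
The equivalence has two directions, and one of them is free: the implication $\mathit{wins}_{\game{\calT}{\varphi}{1}}(\verifier, V_\mathit{init}) \Rightarrow \calT \models \varphi$ is exactly soundness (\Cref{theo:soudness}), since both $\exists^*$ and $\forall^*$ formulas are special cases of $\forall^*\exists^*$ (one of the two quantifier blocks is empty). So the only new content is \emph{completeness}: assuming $\calT \models \varphi$, I would exhibit a winning strategy for the verifier. The plan is to first record two structural facts that hold for $Z=1$ with at most one stuttering per trace, namely that premise \textbf{(2.1)} can never fire (it requires two \emph{distinct} stutterings on a common base trace), so $v_\mathit{error}$ is unreachable and every window stays within length $Z+1 = 2$; then to prove a tight correspondence between plays and stuttered paths; and finally to split into the $\exists^*$ and $\forall^*$ cases.

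The technical core I would establish first is a decoding lemma relating plays of $\game{\calT}{\varphi}{1}$, read off at the $\ustage$-visits (which occur at logical times $0,1,2,\dots$), to pairs consisting of a path of $\calT$ together with a fair stuttering of it, one per trace/stuttering pair. Tracing one full $\ustage \to \fstage \to \estage \to \ustage$ cycle for a trace carrying a single stuttering $\beta$ with window $[a]$ and pointer $\mu(\beta)=0$, I would check: appending a successor $b \in \kappa_\calT(a)$ and placing $\beta \in \mathit{sched}$ advances the current state of $\beta$ to $b$ (a non-stuttering step, with $\moved_\beta$ set), whereas appending $b$ but leaving $\beta \notin \mathit{sched}$ causes the following $\ustage$-trim to discard $b$ and keep the current state at $a$ (a stuttering step). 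Since every fair stuttering corresponds to a monotone surjection $f$ with $f(i+1) \in \{f(i), f(i)+1\}$, these two moves realise precisely the two admissible increments of $f$, and $\moved_\beta$ fires infinitely often iff $f$ is surjective, i.e.\ iff the stuttering is fair.

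For the $\exists^*$ case ($\traceVars_\forall = \trajVars_\forall = \emptyset$, so $\psi_\mathit{mod}$ reduces to $\big(\bigwedge_\beta \ltlG\ltlF \moved_\beta\big) \land \psi$), I would take the witnesses supplied by $\calT \models \varphi$ -- paths $p_\pi$ for the existential traces and fair stutterings $f_\beta$ of their base traces for which $\psi$ holds -- and let the verifier \emph{replay} them: in each $\estage$-vertex it appends the appropriate next state of each $p_\pi$ and puts $\beta \in \mathit{sched}$ exactly when $f_\beta$ advances. As the refuter now has no genuine choice (the $\fstage$- and $\ustage$-rules are deterministic), this fixes a single play whose sequence of current states equals the stuttered witness traces; hence $\calA_\psi$ reads exactly the word that the A-HLTL semantics evaluates, $\psi$ holds, and every $\moved_\beta$ fires infinitely often, so the run is accepting. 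The $\forall^*$ case ($\traceVars_\exists = \trajVars_\exists = \emptyset$) is dual: the verifier's $\estage$-moves are forced, so there is a unique verifier strategy and I only need that \emph{every} refuter play satisfies $\psi_\mathit{mod} = \big(\bigwedge_\beta \ltlG\ltlF \moved_\beta\big) \to \psi$. By the decoding lemma each play yields paths $t_\pi \in \traces{\calT}$ and stutterings $b_\beta \stutter t_{\mathit{base}(\beta)}$; if some universal stuttering is scheduled only finitely often the antecedent fails and $\psi_\mathit{mod}$ holds vacuously, and otherwise all $b_\beta$ are fair, so $\psi$ holds by $\calT \models \varphi$ and the run is again accepting.

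The step I expect to be most delicate is the decoding direction of the correspondence: verifying that the states discarded on stuttering steps do not corrupt the recovered path, i.e.\ that the states visited across the $\ustage$-stages really form $t'(i) = \ell_\calT\big(p(f(i))\big)$ for an honest path $p$ of $\calT$ and a fair $f$, rather than some artefact of the window trimming. Once this is pinned down, matching the $\ustage$-indexed evaluations to the semantics (including that the initial $\ustage$-visit is logical time $0$, so $\calA_\psi$ starts reading at position $0$) is routine. Finally, since the strategy I construct carries memory (its position in the witness), I would invoke positional determinacy of Büchi games to conclude that a \emph{positional} winning strategy exists, as required by the definition of $\mathit{wins}$.
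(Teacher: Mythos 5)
Your proposal is correct and follows essentially the same route as the paper's own proof: the soundness theorem gives one direction, and for completeness you replay the semantic witnesses as a verifier strategy in the $\exists^*$ case (noting the refuter has no real choice) and use the forced verifier moves together with the fairness implication built into $\psi_\mathit{mod}$ in the $\forall^*$ case. The additional details you flag -- unreachability of $v_\mathit{error}$ when each trace carries at most one stuttering, the play-to-stuttering decoding, and the appeal to positional determinacy of B\"uchi games -- are exactly the points the paper's sketch leaves implicit, and they are consistent with its argument.
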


\subsection{Terminating Systems and Lookahead}\label{sub:lookahead}

In our game construction, we ensure that the state window of \emph{universally} quantified traces is -- already in the initial state -- always of length $Z$. 
Recall that our game approximates the A-HLTL semantics: In the semantics of a $\forall^*\exists^*$ formula, all existentially quantified traces and stutterings are fixed knowing the \emph{entire} universally quantified traces and stutterings. In our game, we let the verifier construct existentially quantified traces and stutterings step-wise by responding to the moves of the refuter.
By always keeping the state window of universally quantifier traces of length $Z$ (potentially including trailing states that are not pointed to by any stuttering), we thus offer limited clairvoyance to the verifier, i.e., the verifier can peek at the next $Z$ states on universally quantified traces and make more informed decisions \cite{AbadiL91}.
This is particularly interesting in systems where only a fixed lookahead is needed to know the entire system execution.
We say the transition system $\calT = (S_\calT, s_{0, \calT}, \kappa_\calT, \ell_\calT)$ is \emph{terminating} (also called \emph{tree-shaped} \cite{HsuBFS23}) if there exists a bound $D \in \nat$ (called the \emph{depth}), such that all paths of length at least $D$ reach some sink state, i.e., a state $s \in S_\calT$ with $\kappa_\calT(s) = \{s\}$.
\citet{HsuBFS23} propose a QBF-based model-checking approach for A-HLTL in terminating systems.
We can show that our approach is complete for terminating systems when choosing $Z = D$:

\begin{restatable}{theorem}{term}\label{prop:term}
	Let $\calT$ be a terminating transition system with depth $D$ and let $\varphi$ be a $\forall^*\exists^*$ A-HLTL formula.
	Then $\mathit{wins}_{\game{\calT}{\varphi}{D}}(\verifier, V_{\mathit{init}})$ if and only if $\calT \models \varphi$. 
\end{restatable}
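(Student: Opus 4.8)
The statement is an equivalence, and one direction is already in hand: instantiating Theorem~\ref{theo:soudness} with $Z = D$ gives $\mathit{wins}_{\game{\calT}{\varphi}{D}}(\verifier, V_{\mathit{init}}) \Rightarrow \calT \models \varphi$. The content of the theorem therefore lies entirely in the converse (completeness): assuming $\calT \models \varphi$, I must exhibit a winning strategy for the verifier from every vertex of $V_{\mathit{init}}$. My plan is to build this strategy directly from a semantic witness, exploiting that in a terminating system a window of size $D$ already encodes an \emph{entire} universally quantified trace.

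The first step is a \emph{lookahead lemma}. In a terminating system of depth $D$, every path $s_0 s_1 \cdots$ reaches a sink within its first $D$ states, so $s_{D-1}$ is a sink and $s_i = s_{D-1}$ for all $i \geq D-1$; consequently the length-$D$ prefix $[s_0, \ldots, s_{D-1}]$ determines the whole path. Since every vertex of $V_{\mathit{init}}$ fixes, for each $\pi \in \traceVars_\forall$, a window of exactly this length, fixing an initial vertex $v$ already pins down each universally quantified trace $t_\pi$ in full. Moreover, along any play from $v$ the last state of each universal window is always this sink, so in the $\fstage$-stage the refuter is \emph{forced} to re-append it (by premise \textbf{(1.1)}, using $\kappa_\calT$); hence the universal traces realized by any play from $v$ coincide with the $t_\pi$ read off from $v$. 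Thus the verifier effectively knows all universal traces from the outset, and the refuter's only genuine freedom is the step-wise scheduling of the universally quantified stutterings in $\trajVars_\forall$.

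Next I construct the verifier's strategy. Because $\varphi$ is $\forall^*\exists^*$, its prefix has the shape $\forall\pi\cdots\exists\pi'\cdots\,\forall\beta\cdots\exists\beta'\cdots$, so a semantic witness first chooses existential \emph{traces} knowing only the universal traces. Using the lookahead lemma, the verifier commits at $v$ to such witnessing existential traces $t_{\pi'}$ and plays its $\estage$-moves so as to trace them out (always possible, as it controls $\traceVars_\exists$). It remains to supply, online, the existential \emph{stutterings} $\beta' \in \trajVars_\exists$ in response to the refuter's revealed universal stutterings. Here $\calT \models \varphi$ guarantees that, for the fixed traces, \emph{for every} tuple of fair universal stutterings there \emph{exist} fair existential stutterings satisfying $\psi$. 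The verifier must realize such stutterings while seeing only a finite prefix of the universal ones, and I close this gap with a compactness (K\"onig) argument: the universal-stuttering prefixes form a finitely branching tree (each $\fstage$-move picks a $\mathit{sched} \subseteq \trajVars_\forall$, of which there are finitely many, while the universal traces no longer branch), and termination guarantees that each fair stuttering is constant at the sink after finitely many steps, so only a bounded amount of each universal stuttering is ever relevant to the synchronous evaluation of $\psi$. Assembling the per-prefix witnesses yields a verifier strategy — which positional determinacy of B\"uchi games lets me take positional — whose plays satisfy $\psi$ and keep all existential stutterings fair; correctness of the tracked DBA $\calA_\psi$ then delivers the B\"uchi objective.

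The main obstacle is exactly this last point: reconciling the \emph{online} construction of existential stutterings forced by the game with the \emph{offline}, full-information existential choice permitted by the $\forall^*\exists^*$ semantics. Termination is what makes it tractable — eventual constancy at the sink bounds the relevant horizon to $D$ and keeps the witness tree finitely branching — but care is needed on the window bound. I must verify that $Z = D$ never sends a relevant play to $v_\mathit{error}$: two stutterings of the same terminating base trace visit distinct states differing by at most $D-1$ positions, since all positions from $D-1$ onward carry the identical sink label; any larger drift of the relative pointers lands entirely inside this repeated sink region and is semantically inert, so it cannot separate a winning play from a losing one. Turning this observation into a proof that no play consistent with the constructed strategy is diverted to $v_\mathit{error}$ — and thereby that the window of size $D$ is genuinely sufficient — is the delicate part of the argument; monotonicity (Lemma~\ref{lem:monotone}) then confirms that, once $D$ works, no larger window is ever required.
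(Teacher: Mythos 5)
Your soundness direction and your ``lookahead lemma'' match the paper's proof exactly: in a terminating system of depth $D$, the length-$D$ initial window pins down each universally quantified trace (and forces the refuter's trace-extension moves), so the verifier can read off the full universal traces at the initial vertex, invoke $\calT \models \varphi$ to obtain witnesses, and replay them. The gaps lie in what follows. First, your central device for the stutterings---the compactness/K\"onig argument---is unsound as stated. The claim that ``only a bounded amount of each universal stuttering is ever relevant'' is false: a fair stuttering of a terminating trace makes only finitely many value changes, but it may postpone them arbitrarily long, so the refuter's relevant behavior is not confined to any bounded horizon. More importantly, K\"onig's lemma does not bridge the online/offline gap you correctly identify: having an offline witness for every completed universal stuttering does not yield an online strategy, because different futures of the same prefix can demand incompatible commitments on that prefix, and nothing in your assembly step rules this out. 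Note also that your generic prefix shape $\forall\pi\cdots\exists\pi'\cdots\forall\beta\cdots\exists\beta'$ is not admitted by the $\forall^*\exists^*$ fragment (a universal quantifier may not occur in the scope of an existential one), and the paper's proof simply takes the formula to have only existentially quantified stutterings. In that situation no online reconstruction is needed at all: the witness stutterings $r_{\beta_i}$ are fixed offline as functions of the universal traces alone, which are already determined by the initial vertex, and the verifier just plays them out step-wise; your K\"onig machinery is unnecessary for this case and insufficient for the case you designed it for.

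Second, you explicitly leave open the step showing that no play of your strategy reaches $v_\mathit{error}$ (``the delicate part''), but this is a required component of the completeness proof, not an optional refinement, and your appeal to semantic inertness cannot close it: premise \textbf{(2.1)} fires on the pointer difference $|\mu(\beta_1)-\mu(\beta_2)| \geq Z$ alone, regardless of whether the positions it spans all carry the identical sink label. The paper closes this differently, by arranging the strategy so that pointers never need to drift that far in the first place: each witness stuttering needs to be advanced at most $D$ times before its pointer sits on the sink, after which further advancement is semantically idle, so the verifier can follow the $r_{\beta_i}$ while keeping every pointer---and hence every pairwise difference---strictly within the window of size $D$. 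With that argument supplied, and the K\"onig detour replaced by direct replay of the offline witnesses, your proof collapses to the paper's.
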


\subsection{Admissible Formulas}\label{sub:admissible}

Next, we study the so-called \emph{admissible} formulas proposed by \citet{BaumeisterCBFS21}.
In the following, we assume that, in theory $\theory$, ``$=$'' is interpreted as equality on $\values$.

\begin{definition}
	An admissible formula has the form $\forall \pi_1\ldots \forall \pi_n \ldot \exists \beta_1 \stutter \pi_1 \ldots \exists \beta_n \stutter \pi_n \ldot \psi$, where $\psi$ is a boolean combination of: \textbf{(1)} Any number of \emph{state formulas}, i.e., formulas that use no temporal operators; and \textbf{(2)} A \emph{single} (positively occurring) \emph{phase formula} of the form $\bigwedge_{i < j} \ltlG \big(\bigwedge_{a \in P_{i, j} } (a_{\beta_i} = a_{\beta_j})\big)$, where $P_{i, j} \subseteq \calX$.\demo
\end{definition}

That is, we existentially quantify over a \emph{unique} stuttering for each trace and limit the relational formulas allowed in $\psi$.
We can think of the set $P_{i, j}$ as denoting a set of \emph{colors}, defined by the evaluation of variables in $P_{i, j}$ (so there are $|\values|^{|P_{i, j}|}$ many colors). 
A phase formula then asserts $\ltlG (\bigwedge_{a \in P_{i, j}} a_{\beta_i} = a_{\beta_j})$ for each pair $i < j$ of traces, and thus requires that $\pi_i$ and $\pi_j$ traverse the same sequence of $P_{i, j}$-colors, albeit at different speed (via stutterings $\beta_i$ and $\beta_j$, respectively).
For example, \ref{eq:asyn-od} is admissible (see \cite{BaumeisterCBFS21} for further examples).  
We can show that our approach is complete for admissible formulas (already for $Z = 1$):

\begin{restatable}{theorem}{admissible}\label{theo:admissible}
	Let $\calT$ be any transition system, and let $\varphi$ be an admissible A-HLTL formula.
	Then $\mathit{wins}_{\game{\calT}{\varphi}{1}}(\verifier, V_{\mathit{init}})$ if and only if $\calT \models \varphi$. 
\end{restatable}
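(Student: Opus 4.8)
The statement is an equivalence, and one direction is free: taking $Z = 1 \in \nat_{\geq 1}$, \Cref{theo:soudness} already gives that $\mathit{wins}_{\game{\calT}{\varphi}{1}}(\verifier, V_{\mathit{init}})$ implies $\calT \models \varphi$. The plan is therefore to prove only the completeness direction, i.e.\ to assume $\calT \models \varphi$ and exhibit a \emph{positional} winning strategy for the verifier --- the ``maximal'' stuttering strategy.

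First I would exploit the rigid shape of admissible formulas to simplify the game. Since every trace variable is universal and every stuttering is existential with exactly one stuttering per trace, we have $\traceVars_\exists = \emptyset$, $\trajVars_\forall = \emptyset$, and $\trajVars_\exists = \{\beta_1, \ldots, \beta_n\}$; consequently $\psi_\mathit{mod}$ collapses to $(\bigwedge_{i} \ltlG\ltlF \moved_{\beta_i}) \land \psi$, and the error condition \textbf{(2.1)} is never triggered (it requires two stutterings on the same trace). With $Z = 1$ one checks that after every $\ustage$-stage all windows have length $1$ and all pointers satisfy $\mu(\beta_i) = 0$, so each round reduces to: the refuter reveals one $\kappa_\calT$-successor per trace in the $\fstage$-stage, and the verifier chooses, \emph{per trace} and with a single step of lookahead, whether to advance $\beta_i$ (consume the revealed state) or to stutter (discard it). I would also observe that every state subformula of $\psi$ is evaluated at position $0$, where all stutterings point to $s_{0, \calT}$; hence these subformulas have a \emph{fixed} truth value, and, since the phase formula $\Phi$ occurs positively, $\psi$ as a function of the truth value of $\Phi$ is monotone and therefore equivalent to $\bot$, $\top$, or $\Phi$. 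The first case contradicts $\calT \models \varphi$; in the second case the verifier wins by advancing \emph{all} stutterings every round (satisfying every $\ltlG\ltlF\moved_{\beta_i}$ while making $\psi$ trivially true). This isolates the only substantial case: $\psi \equiv \Phi = \bigwedge_{i<j} \ltlG\big(\bigwedge_{a \in P_{i,j}} a_{\beta_i} = a_{\beta_j}\big)$.

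For this case I would define the maximal strategy through an invariant. Writing $\chi_{i,j}(s)$ for the restriction of $\ell_\calT(s)$ to $P_{i,j}$ (the ``color'' of $s$ for the pair $(i,j)$), maintain the invariant $(\star)$: at every $\ustage$-stage, $\chi_{i,j}(u_i) = \chi_{i,j}(u_j)$ for all $i<j$, where $u_i$ is the state currently pointed to by $\beta_i$. The strategy then advances, in each $\estage$-stage, the canonical $\subseteq$-maximal set $\mathit{sched} \subseteq \trajVars_\exists$ that preserves $(\star)$ at the next $\ustage$-stage. I would first argue that such a maximal set is well defined --- the family of $(\star)$-preserving schedules is closed under union, because ``advancing $\beta_i$ alone is safe for pair $(i,j)$'' forces $\chi_{i,j}(u_i')=\chi_{i,j}(u_i)$, and these per-endpoint ``color-unchanged'' conditions compose across schedules --- so that the strategy is positional and reads only the revealed successors. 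Given $(\star)$, each $\ltlG$-conjunct of $\Phi$ holds by construction, so the remaining obligations are (i) that $(\star)$ is genuinely inductive (never forced to break) and (ii) fairness, i.e.\ each $\beta_i$ is advanced infinitely often.

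I expect (i) and (ii) to be the main obstacle, and I would discharge them by a domination argument against the A-HLTL witness. Whenever $\beta_i$ advances it consumes the refuter's revealed successor, so the consumed states form genuine traces $t_1, \ldots, t_n \in \traces{\calT}$; applying $\calT \models \varphi$ to this tuple yields fair aligning stutterings, i.e.\ monotone surjective $f^*_i : \nat \to \nat$ with $\chi_{i,j}(t_i(f^*_i(m))) = \chi_{i,j}(t_j(f^*_j(m)))$ for all $m$ and all pairs. The crux is a greedy-exchange lemma: because there is a \emph{single}, positively occurring phase formula and \emph{one} stuttering per trace, all traces share the \emph{same} common destuttered color structure on each $P_{i,j}$, and the eager advancement of $(\star)$ consumes maximal color-blocks and crosses block boundaries synchronously. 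I would show by induction on block transitions that the maximal configuration is always reached by \emph{some} prefix of a valid alignment and dominates $f^*$ pointwise (the greedy pointers are never behind $f^*$); this simultaneously yields that $(\star)$ can always be preserved (the next revealed successor always admits a $(\star)$-preserving consumption, since $f^*$ witnesses one) and that no stuttering stalls forever (surjectivity of $f^*$ transfers to the dominating greedy alignment). The delicate point throughout is the multi-pair interaction --- ensuring that advancing $\beta_i$ to stay synchronized on $\chi_{i,j}$ never desynchronizes it on some $\chi_{i,k}$ --- which is exactly where the admissibility restriction to one conjunctive phase formula is essential, since it makes ``being at a block boundary'' a property of $\beta_i$'s own trace that is consistent across all pairs it participates in.
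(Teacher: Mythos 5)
Your proposal follows essentially the same route as the paper's proof: one direction is discharged by \Cref{theo:soudness}, and for completeness you reduce to the single positively-occurring phase formula, establish union-closure of the safe (color-preserving) progress sets (the paper's \Cref{lem:union}), and play the greedy strategy $\sigma^\mathit{max}$ that always advances the maximal safe set. Your concluding domination/greedy-exchange sketch is just a more explicit rendering of the step the paper itself leaves informal (that $\sigma^\mathit{max}$ stays aligned and fair on any trace tuple admitting witness stutterings), so the two arguments coincide in substance.
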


In the following, we provide a proof outline of \Cref{theo:admissible}.
The first direction follows from \Cref{theo:soudness}. 
For the other, assume $\varphi$ is admissible, and $\calT \models \varphi$.
Let $\psi_\mathit{phase} = \bigwedge_{i < j} \ltlG \big(\bigwedge_{a \in P_{i, j} } (a_{\beta_i} = a_{\beta_j})\big)$ be the unique phase formula in $\psi$.
The objective of the verifier is thus to find stutterings that satisfy $\psi_\mathit{phase}$ (whenever such a stuttering exists).

\paragraph{Safe Progress Set.}

Each vertex $v$ in $\game{\calT}{\varphi}{1}$ controlled by the verifier $\verifier$ has the form 
\begin{align*}
	v = \big\langle \estage, &\big[\pi_i \mapsto [s_i, s_i']\big]_{i=1}^n, \big[\beta_i \mapsto 0\big]_{i=1}^n, \emptyset, q \big\rangle,
\end{align*}
i.e., the traces $\pi_1, \ldots, \pi_n$ are mapped to length-$2$ state windows $[s_1, s'_1], \ldots, [s_n, s'_n]$, respectively, and all stutterings $\beta_1, \ldots, \beta_n$ point to the $0$th position in the windows.
We assume that $v$ locally satisfies all coloring constraints in $\psi_\mathit{phase}$, i.e., for all $i < j$ and $a \in P_{i, j}$,  $\ell_\calT(s_i)(a) = \ell_\calT(s_j)(a)$; if this is not the case, $\psi_\mathit{phase}$ is already violated.
Call this assumption \textbf{(A)}.
In vertex $v$, the verifier can decide on which of the stutterings $\beta_1, \ldots, \beta_n$ should be progressed.
We can thus identify each possible successor vertex of $v$ by a so-called \emph{progress set} $M \subseteq \{1, \ldots, n\}$.
Intuitively, $M$ contains exactly those indices $i$ on which $\beta_i$ performs a non-stuttering step by progressing from state $s_i$ to $s_i'$.
For $M \subseteq \{1, \ldots, n\}$, we define states $\mathit{next}^M_v(1), \ldots, \mathit{next}^M_v(n) \in S_\calT$  by defining $\mathit{next}^M_v(i) := s_i'$ if $i \in M$ and otherwise as $\mathit{next}^M_v(i) := s_i$.
That is, every stuttering $\beta_i$ progressed in $M$ is mapped to the second position in the state window in $v$ (i.e., $s_i'$), and all non-progressed stutterings remain in the first state (i.e., $s_i$).
We say progress set $M$ is \emph{safe} if for every $i < j$, and every $a \in P_{i, j}$, $\ell_\calT(\mathit{next}^M_v(i))(a) = \ell_\calT(\mathit{next}^M_v(j))(a)$, i.e., $M$ ensures that all coloring constraints are satisfied \emph{locally} in the next step.

\paragraph{Maximal Safe Progress Set.}

Crucially, the coloring constraints in  $\psi_\mathit{phase}$ are equalities and thus symmetric: 
For example, assume that \textbf{(1)} progress set $\{i\}$ is safe, and \textbf{(2)} progress set $\{j\}$ is also safe.
Then, by \textbf{(1)}, $s_i'$ has the same $P_{i, j}$-color as $s_j$, by \textbf{(2)}, $s_i$ has the same $P_{i, j}$-color as $s_j'$, and, by \textbf{(A)}, $s_i$ has the $P_{i, j}$-same color as $s_j$.
This already implies that $s_i'$ has the same $P_{i, j}$-color as $s_j'$; progress set $\{i, j\}$ is also safe.
More generally, the set of safe progress sets forms a join-complete semilattice:

\begin{restatable}{lemma}{union}\label{lem:union}
	If $M_1$ and $M_2$ are safe progress sets, then $M_1  \cup M_2$ is a safe progress set.
\end{restatable}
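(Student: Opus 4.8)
The plan is to fix an arbitrary pair $i < j$ and an arbitrary color component $a \in P_{i,j}$, set $M := M_1 \cup M_2$, and show the single coloring constraint $\ell_\calT(\mathit{next}^M_v(i))(a) = \ell_\calT(\mathit{next}^M_v(j))(a)$; since $i$, $j$, and $a$ are arbitrary, this is exactly the definition of $M$ being safe. The whole argument will rest on the standing assumption that ``$=$'' is interpreted as genuine equality on $\values$, so that for each fixed $a$ the relation ``$s$ and $s'$ agree on $a$'' is symmetric and transitive, together with assumption \textbf{(A)} (that $\ell_\calT(s_i)(a) = \ell_\calT(s_j)(a)$ for all $i<j$, $a \in P_{i,j}$) and the safety of $M_1$ and $M_2$.

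First I would unfold $\mathit{next}^M_v$ and split on whether $i$ and $j$ belong to $M$. If $i \notin M$ and $j \notin M$, then $\mathit{next}^M_v(i) = s_i$ and $\mathit{next}^M_v(j) = s_j$, and the required equality is precisely \textbf{(A)}. If exactly one index lies in $M$ — say $i \in M$, $j \notin M$ — then $i$ lies in $M_1$ or in $M_2$; picking such a component $M_k$ we have $i \in M_k$ and $j \notin M_k$, so safety of $M_k$ yields $\ell_\calT(s_i')(a) = \ell_\calT(s_j)(a)$, which is the claim since $\mathit{next}^M_v(i) = s_i'$ and $\mathit{next}^M_v(j) = s_j$; the symmetric subcase is identical.

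The only substantive case is $i \in M$ and $j \in M$, where the goal reads $\ell_\calT(s_i')(a) = \ell_\calT(s_j')(a)$. If some single component $M_k$ happens to contain both $i$ and $j$, safety of $M_k$ gives this immediately. Otherwise no component contains both indices, and since $i,j \in M_1 \cup M_2$ this forces one index into a single component and the other into the other; without loss of generality $i \in M_1 \setminus M_2$ and $j \in M_2 \setminus M_1$. Here I would chain three equalities: safety of $M_1$ (with $i \in M_1$, $j \notin M_1$) gives $\ell_\calT(s_i')(a) = \ell_\calT(s_j)(a)$; assumption \textbf{(A)} gives $\ell_\calT(s_j)(a) = \ell_\calT(s_i)(a)$; and safety of $M_2$ (with $i \notin M_2$, $j \in M_2$) gives $\ell_\calT(s_i)(a) = \ell_\calT(s_j')(a)$. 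Transitivity then yields $\ell_\calT(s_i')(a) = \ell_\calT(s_j')(a)$, as required.

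The hard part is exactly this last ``split'' case, where both stutterings advance but the two witnessing progress sets disagree on which index advances; it is the one place where the full structure of equality is used. The chain $s_i' \to s_j \to s_i \to s_j'$ depends essentially on symmetry and transitivity of the coloring relation, so it is precisely here that the admissibility restriction to phase formulas built from equalities $a_{\beta_i} = a_{\beta_j}$ is indispensable — the same step would fail for an arbitrary, non-transitive relational atom. Everything else is a routine case distinction on the definitions of $\mathit{next}^M_v$ and of safety, so I do not anticipate any further difficulty.
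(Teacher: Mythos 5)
Your proof is correct and follows essentially the same route as the paper's: a case distinction on which of $i, j$ lie in $M_1 \cup M_2$, with the trivial cases discharged by assumption \textbf{(A)} or the safety of a single component, and the split case ($i \in M_1 \setminus M_2$, $j \in M_2 \setminus M_1$) resolved by chaining the three equalities $\ell_\calT(s_i')(a) = \ell_\calT(s_j)(a)$, $\ell_\calT(s_j)(a) = \ell_\calT(s_i)(a)$, and $\ell_\calT(s_i)(a) = \ell_\calT(s_j')(a)$ exactly as in the paper's proof. Your closing observation about where symmetry and transitivity enter is also apt, though note the paper later shows (via rectangle closed invariants) that the precise property needed is slightly weaker than transitivity of an equivalence relation.
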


We can thus define $M^\mathit{max}_v$ as the unique maximal safe progress set in vertex $v$ (defined as the union of all safe progress sets).
We can now, informally, define a winning strategy $\sigma^\mathit{max}$ for $\verifier$:
In each vertex $v$ that satisfies \textbf{(A)} and still needs to satisfy $\psi_\mathit{phase}$, we define $\sigma^\mathit{max}(v) := v'$, where
\begin{align*}
	v' := \big\langle \ustage, &\big[\pi_i \mapsto [s_i, s_i'] \big]_{i=1}^n, \big[\beta_i \mapsto \mathit{ite}(i \in M^\mathit{max}_v, 1, 0) \big]_{i=1}^n, \big\{ \beta_i \mid i \in M^\mathit{max}_v \big\}, q \big\rangle.
\end{align*}
Here, we write $\mathit{ite}(b, x, y)$ to be $x$ if $b$ holds and otherwise $y$ (short for if-then-else). 
In $v'$, the verifier advances the $\beta_i$-stuttering on $\pi_i$ (by incrementing the $\mu$-pointer from position $0$ to $1$) iff $i \in M^\mathit{max}$. 
If $v$ does not need to satisfy $\psi_\mathit{phase}$ (e.g., because the state formulas already suffice to satisfy $\psi$), the verifier progresses all stutterings $\beta_1, \ldots, \beta_n$.

Intuitively, $\sigma^\mathit{max}$ thus always (locally) progresses as many stutterings as possible while ensuring that the coloring constraints in $\psi_\mathit{phase}$ hold in the next step.
For any combination of traces $\pi_1, \ldots, \pi_n$ that satisfies $\exists \beta_1 \stutter \pi_1 \ldots \exists \beta_n \stutter \pi_n\ldot \psi_\mathit{phase}$, strategy $\sigma^\mathit{max}$ automatically finds a stuttering such that $\psi_\mathit{phase}$ holds.
It is thus easy to see that the verifier wins $\game{\calT}{\varphi}{1}$ by following strategy $\sigma^\mathit{max}$, proving \Cref{theo:admissible}.

\subsection{A New Decidable Fragment}

Using a similar idea as in \Cref{sub:admissible}, we can show that our game-based method is complete for an even broader class of properties.
The main idea is that we want to preserve the ability to find a maximal safe progress set, i.e., whenever we can safely progress some stuttering $\beta_i$ and can safely progress some stuttering $\beta_j$, we can also safely progress both $\beta_i$ and $\beta_j$.

Given an assignment $A  : \calX \to \values$, and a stuttering variable $\beta \in \trajVars$, we write $A^\beta : \calX_{\{\beta\}} \to \values$ for the indexed assignment defined as $A^\beta(x_\beta) := A(x)$. 
For two stuttering variables $\beta_i, \beta_j \in \trajVars$, we say a formula $\theta \in \calF_{\calX_{\{\beta_i, \beta_j\}}}$ is \emph{rectangle closed} if the following holds: for any assignments $A, B, C, D : \calX \to \values$ we have: 
\begin{align*}
	\Big(A^{\beta_i} \cup B^{\beta_j} \models^\theory \theta \Big)\land \Big(C^{\beta_i} \cup B^{\beta_j} \models^\theory \theta\Big) \land \Big(A^{\beta_i} \cup D^{\beta_j} \models^\theory \theta \Big) \Rightarrow \Big(C^{\beta_i} \cup D^{\beta_j} \models^\theory \theta \Big),
\end{align*}
i.e., if $(A, B)$, $(C, B)$, and $(A, D)$ all satisfy $\theta$, then so does $(C, D)$. 

In admissible formulas, we always deal with equalities of the form $a_{\beta_i} = a_{\beta_j}$ (for $a \in P_{i, j}$), which are clearly rectangle closed. 
More generally, all equalities of the form $\theta = (e_1 = e_2)$ -- where $e_1$ and $e_2$ are first-order terms (cf.~\cite{barwise1977introduction}) over a \emph{unique} (not necessarily the same) stuttering variable -- are rectangle closed. 
For example, $a_{\beta_1} = (b_{\beta_2} + c_{\beta_2})$ is rectangle closed, but $a_{\beta_1} = (b_{\beta_2} + c_{\beta_3})$ is not.

\begin{definition}
	A \emph{rectangle closed invariant} is a formula $\varphi = \forall \pi_1\ldots \pi_n. \exists \beta_1 \stutter \pi_1 \ldots \exists \beta_n \stutter \pi_n \ldot \psi$, where $\psi$ is a boolean combination of: \textbf{(1)} Any number of state formulas; and \textbf{(2)} A \emph{single} (positively occurring) formula $\ltlG \bigwedge_{\theta \in \Theta}\theta$ where every $\theta \in \Theta$ is rectangle closed.\demo
\end{definition}

It is easy to see that any admissible formula is a rectangle closed invariant (any equality $a_{\beta_i} = a_{\beta_j}$ is rectangle closed).
Note that an admissible formula only allows constraints of the form $a_{\beta_i} = a_{\beta_j}$, i.e., assert equalities between the \emph{same} variable on both traces. 
Rectangle closed invariants allow arbitrary expressions and can thus relate \emph{arbitrary} variables between two traces. 
Using the same proof idea as for \Cref{theo:admissible}, we can show the more general result:

\begin{theorem}
	Let $\calT$ be any transition system, and let $\varphi$ be a rectangle closed invariant.
	Then $\mathit{wins}_{\game{\calT}{\varphi}{1}}(\verifier, V_{\mathit{init}})$ if and only if $\calT \models \varphi$. 
\end{theorem}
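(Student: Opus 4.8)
The plan is to follow the proof of Theorem \ref{theo:admissible} almost line for line, replacing the single ingredient that is specific to equalities --- namely Lemma \ref{lem:union} --- with an analogous union lemma that holds for rectangle closed constraints. One direction, that $\mathit{wins}_{\game{\calT}{\varphi}{1}}(\verifier, V_{\mathit{init}})$ implies $\calT \models \varphi$, is immediate from Theorem \ref{theo:soudness}. For the converse I assume that $\varphi$ is a rectangle closed invariant with invariant subformula $\ltlG \bigwedge_{\theta \in \Theta} \theta$ and that $\calT \models \varphi$.

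Since $\varphi$ uses a single stuttering per trace, every verifier vertex has the same shape as in the admissible case, $v = \langle \estage, [\pi_i \mapsto [s_i, s_i']]_{i=1}^n, [\beta_i \mapsto 0]_{i=1}^n, \emptyset, q\rangle$, so I reuse the definitions of progress sets $M \subseteq \{1, \ldots, n\}$ and of the next-states $\mathit{next}^M_v(i)$. I call $M$ \emph{safe} if $\mathit{next}^M_v$ satisfies every $\theta \in \Theta$, and I adopt the analog of assumption \textbf{(A)}: the current states $s_1, \ldots, s_n$ already satisfy all $\theta \in \Theta$ (otherwise the invariant is already violated and the verifier need not maintain it). I would also note that the state formulas are evaluated at the initial position, where every stuttering coincides with its underlying trace, so their truth is independent of the chosen stutterings; hence, exactly as in the admissible argument, either the state formulas alone already make $\psi$ true --- in which case the verifier simply progresses all stutterings to stay fair --- or $\psi$ forces the invariant to hold, and $\calT \models \varphi$ guarantees that witnessing stutterings satisfying the invariant exist.

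The key step is the rectangle closed analog of Lemma \ref{lem:union}: if $M_1$ and $M_2$ are safe, then $M_1 \cup M_2$ is safe. I would fix a constraint $\theta \in \Theta$ over $\calX_{\{\beta_i, \beta_j\}}$ and abbreviate $A := \ell_\calT(s_i)$, $C := \ell_\calT(s_i')$, $B := \ell_\calT(s_j)$, $D := \ell_\calT(s_j')$, so that the four ``corners'' of the rectangle correspond exactly to the four ways of (not) progressing traces $i$ and $j$. A case analysis on the membership of $i$ and $j$ in $M_1 \cup M_2$ then reduces everything to rectangle closure. The only nontrivial case is when $i, j \in M_1 \cup M_2$ but no single $M_k$ contains both, say $i \in M_1 \setminus M_2$ and $j \in M_2 \setminus M_1$: safety of $M_1$ gives $C^{\beta_i} \cup B^{\beta_j} \models^\theory \theta$, safety of $M_2$ gives $A^{\beta_i} \cup D^{\beta_j} \models^\theory \theta$, and \textbf{(A)} gives $A^{\beta_i} \cup B^{\beta_j} \models^\theory \theta$, whence rectangle closure yields $C^{\beta_i} \cup D^{\beta_j} \models^\theory \theta$, which is precisely safety of $M_1 \cup M_2$ for the pair $(i,j)$. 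All remaining cases ($i$ or $j$ not progressed, or both progressed by a common $M_k$) follow directly from safety of one $M_k$ or from \textbf{(A)}, and constraints mentioning a single stuttering variable are trivial, since the progressed state agrees with that of the witnessing set.

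With the union lemma in hand the rest is a direct transcription of Section \ref{sub:admissible}: safe progress sets are closed under union, so a unique maximal safe set $M^\mathit{max}_v$ exists, and I define $\sigma^\mathit{max}$ to progress exactly $M^\mathit{max}_v$ at each vertex that must maintain the invariant (and to progress all stutterings otherwise). The winning argument is then unchanged --- whenever the chosen traces admit stutterings satisfying the invariant, the maximal-progress strategy never violates any $\theta$ (by safety of $M^\mathit{max}_v$) and advances at least as fast as any witnessing stuttering, so all $\beta_i$ are fair and $\psi_\mathit{mod}$ holds along every play. I expect the main obstacle to be precisely the diagonal case of the union lemma above: confirming that the rectangle closure property is exactly the hypothesis needed to recover the ``symmetry'' that equalities supplied for admissible formulas, after which the semilattice and strategy arguments carry over verbatim.
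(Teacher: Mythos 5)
Your proposal is correct and follows essentially the same route as the paper: one direction is soundness (Theorem~\ref{theo:soudness}), and for the converse the paper likewise transplants the proof of Theorem~\ref{theo:admissible}, replacing the symmetry-of-equality argument in Lemma~\ref{lem:union} by exactly the rectangle-closure implication you identify (with $A,B,C,D$ the labels of $s_i,s_j,s_i',s_j'$), after which the maximal safe progress set $M^\mathit{max}_v$ and the strategy $\sigma^\mathit{max}$ carry over verbatim. Your diagonal case of the union lemma is precisely the point where rectangle closure is needed, which is what the paper means by ``using the same proof idea,'' and your remaining cases and the final fairness/winning argument are at the same level of detail as the paper's own.
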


\begin{example}
	For example, assume that a system consists of bits (Boolean variables) $w1, \ldots, wn$, but an attacker cannot observe the bits individually.
	Instead, using a probing attack \cite{WangFTS17}, it can detect if \emph{some} bit is currently enabled. 
	We can generalize \citet{ZdancewicM03}'s observational determinism to this more restrictive attacker setting by defining 
	$$\forall \pi_1. \forall \pi_2. \exists \beta_1 \stutter \pi_1. \exists \beta_2 \stutter \pi_2\ldot (l_{\beta_1} = l_{\beta_2}) \to  \ltlG \big( ({w1}_{\beta_1} \lor \cdots \lor {wn}_{\beta_1} )= ({w1}_{\beta_2} \lor \cdots \lor {wn}_{\beta_2} ) \big),$$
	requiring that only the value of $w1 \lor \cdots\lor wn$ agrees on both traces.  \demo
\end{example}

\section{Implementation and Evaluation}\label{sec:eval}

Beyond our theoretical contributions, our game can be used directly for the (semi-) automated verification of A-HLTL.
Notably, our entire framework applies to both finite and infinite-state systems. 
For infinite-state systems, we envision our approach to be used for interactive verification (similar to successful approaches for HyperLTL \cite{CorrensonF25}); we discuss this further in \Cref{sec:related-work}.
For finite-state systems, we can directly construct and solve the game, resulting in a fully automated verification method.
Note that such finite-state methods are still interesting in an infinite-state setting. Abstractions (e.g., generated by a set of predicates \cite{GrafS97} or an abstract domain \cite{CousotC77}) typically abstract infinite variable domains, while maintaining the temporal behavior of the system, i.e., the abstraction of an infinite-state system results in a finite-state system where each step of the infinite-state system corresponds to one computation step of the finite-state abstraction.
Verifying properties like \citet{ZdancewicM03}'s OD on the abstraction, therefore, still requires finding an appropriate stuttering.
The core computational challenge of A-HLTL verification is still present in the finite-state abstraction, making finite-state verification results and tools relevant.

As a proof-of-concept, we have implemented our game-based method for finite-state systems in a tool we call \tool{}. 
\tool{} reads a system $\calT$ (in the form of a symbolic NuSMV system \cite{CimattiCGGPRST02}), an A-HLTL formula $\varphi$, and a bound $Z \in \nat$, and automatically constructs and solves $\calG_{\calT, \varphi, Z}$ (encoded as an explicit-state game).
If desired, \tool{} can also resolve traces on \emph{different} systems \cite{GoudsmidGS21}.
Internally, we use \texttt{spot} \cite{Duret-LutzRCRAS22} to convert LTL formulas to deterministic automata and use \texttt{oink} \cite{Dijk18} to solve $\calG_{\calT, \varphi, Z}$.

\begin{table}[!t]
	\caption{We compare \tool{} with \texttt{HyperQB} \cite{HsuBFS23} on terminating systems. We depict the size of the system(s) ($|S|$), the verification result, and the total time taken by each tool ($t$).
		For \tool{}, we additionally give the time needed to construct ($t_{\mathit{const}}$) and solve ($t_\mathit{solve}$) $\calG_{\calT, \varphi, 1}$, as well as the number of vertices ($|\calG|$).
		Execution time is measured in seconds,  and the timeout (denoted ``-'') is set to 5 minutes. 
		Note that for \tool{}, the total time $t$ can be slightly larger than the sum of $t_{\mathit{const}}$ and $t_\mathit{solve}$, as it, e.g., includes parsing and preprocessing. }\label{tab:exp-bm}
	
	\vspace{-2mm}
	
	\centering
	
	\def\arraystretch{1.2}
	\setlength\tabcolsep{1.2mm}
	\small
	
		\begin{tabular}{ll@{\hspace{6mm}}lc@{\hspace{3mm}}c@{\hspace{5mm}}llll}
			\toprule[1pt]
			&&&& \cite{HsuBFS23}  & \multicolumn{4}{c}{\tool{}} \\
			\cmidrule[0.6pt](l{0mm}r{4mm}){5-5}
			\cmidrule[0.6pt](l{-1mm}){6-9}
			\textbf{Instances} & \textbf{Property} & $\boldsymbol{|S|}$ & \textbf{Result} & $\boldsymbol{t}$ & $\boldsymbol{t}_{const}$ & $|\boldsymbol{\calG|}$ & $\boldsymbol{t}_{solve}$ & $\boldsymbol{t}$ \\
			\midrule
			\textsc{ACDB} & $\varphi_\mathit{OD}$ & 110 & \xmark{} & 2.96 & 1.00 & 17,119 & 0.54 & \textbf{1.86}\\
			\textsc{ACDB}$_\mathit{ndet}$ & $\varphi_\mathit{OD}$ & 679 & \xmark{} & 12.06 & 3.58 & 71,269 & 2.29 & \textbf{7.18} \\
			\midrule[0.6pt]
			\textsc{ConcLeak} & $\varphi_\mathit{OD}$ & 577 & \xmark{} & 27.16 & 3.44 & 59,495 & 2.29 & \textbf{6.51}\\
			\textsc{ConcLeak}$_\mathit{ndet}$ & $\varphi_\mathit{OD}$ & 2821 & \xmark{} & - & 62.1 & 899,014 & 45.9 & \textbf{114.4} \\
			\midrule
			\textsc{SpecExec}$_\mathit{V1}$ & $\varphi_\mathit{SNI}$ & 29/57 & \xmark{} & 3.71 & 0.199 & 1,219 & 0.09 & \textbf{0.69} \\
			\textsc{SpecExec}$_\mathit{V2}$ & $\varphi_\mathit{SNI}$ & 109/169 & \cmark{} & 8.15 & 0.15 & 355 & 0.05 & \textbf{0.82} \\
			\textsc{SpecExec}$_\mathit{V3}$ & $\varphi_\mathit{SNI}$ & 88/169 & \xmark{} & 11.65 & 0.38 & 4,041 & 0.19 & \textbf{1.69} \\
			\textsc{SpecExec}$_\mathit{V4}$ & $\varphi_\mathit{SNI}$ & 94/169 & \xmark{} & 10.45 & 0.71 & 11,797 & 0.34 & \textbf{2.01} \\
			\textsc{SpecExec}$_\mathit{V5}$ & $\varphi_\mathit{SNI}$ & 94/169 & \xmark{} & 7.91 & 0.15 & 925 & 0.05 & \textbf{0.96} \\
			\textsc{SpecExec}$_\mathit{V6}$ & $\varphi_\mathit{SNI}$ & 85/169 & \xmark{} & 10.80 & 0.62 & 9,573 & 0.29 & \textbf{1.93} \\
			\textsc{SpecExec}$_\mathit{V7}$ & $\varphi_\mathit{SNI}$ & 109/169 & \cmark{} & 7.65 & 0.13 & 355 & 0.04 & \textbf{0.82} \\
			\midrule
			\textsc{DBE} & $\varphi_\mathit{SC}$ & 9/7 & \cmark{} & 1.22 & 0.13 & 469 & 0.06 & \textbf{0.43} \\
			\textsc{LP} & $\varphi_\mathit{SC}$ & 81/77 & \cmark{} & 3.22 & 0.78 & 18,705 & 0.63 & \textbf{1.63} \\
			\textsc{EFLP} & $\varphi_\mathit{SC}$ & 81/249 & \cmark{} & 15.34 & 2.6 & 68,015 & 2.4 & \textbf{5.82}\\
			\midrule
			\textsc{CacheTA} & $\varphi_\mathit{OD}$ & 49 & \xmark{} & 2.51 & 0.44 & 7,683 & 0.34 & \textbf{1.03} \\
			\textsc{CacheTA}$_\mathit{ndet+loops}$ & $\varphi_\mathit{OD}$ & 59 & \xmark{} & \textbf{3.79} & 2.26 & 72,812 & 2.54 & 5.07 \\
			\textsc{CacheTA}$_\mathit{ndet+loops}$ & $\varphi_\mathit{OD}$ & 87 & \xmark{} & 23.31 & 6.26 & 170,509 & 6.9 & \textbf{13.57} \\
			\bottomrule[1pt]
		\end{tabular}
\end{table}

\paragraph{Terminating Systems.}

First, we evaluate \tool{} on terminating systems using the benchmarks from \citet{HsuBFS23}.
While the QBF-based bounded model-checking approach of \citet{HsuBFS23} is, in theory, applicable to arbitrary formulas, their implementation (\texttt{HyperQB}) only implements a few fixed formula templates.
We use the following subset of admissible templates: observational determinism (\ref{eq:asyn-od}), speculative non-interference \cite{GuarnieriKMRS20} ($\varphi_{\mathit{SNI}}$), and correct compilation ($\varphi_\mathit{SC}$), and check them on various NuSMV models.
See \cite{HsuBFS23} for details.
All three properties are admissible and thus fall in the fragment for which \tool{} is complete (cf.~\Cref{theo:admissible}).
We depict the verification results and times in \Cref{tab:exp-bm}.
We used the fixed window size of $Z = 1$, which suffices for completeness. We write \cmark (resp.~\xmark) if the property is satisfied (resp.~violated), which -- by completeness -- we can directly infer by solving the game. 
Generally, we observe that the \hyperqb{} performs well if the system contains many states but is very shallow, leading to a small QBF encoding. 
Still, \tool{} performs faster than \hyperqb{} in a majority of the instances. 

\paragraph{Reactive Systems.}

A particular strength of our approach is the ability to, for the first time, automatically verify A-HLTL on reactive, i.e., non-terminating, systems. 
We depict a few test cases in \Cref{tab:non-terminating}.
First, we check \ref{eq:asyn-od} on the programs from \Cref{fig:insec} (and a synchronous version thereof). 
To increase the size of the system, we scale the number of bits stored in each variable.
We also check more complex (non-admissible) A-HLTL properties like \ref{eq:NIAE} (cf.~\Cref{ex:multiple-traj}) on non-terminating systems. 
For example, the \textsc{Buffer} instance models a system that propagates the low-security input (which can change in each step) to the output (potentially with a delay). 
The system, therefore, violates \ref{eq:asyn-od} but satisfies \ref{eq:NIAE}.
We stress that these examples are \emph{not} meant as real-world examples but rather serve as simple abstractions of real-world components.
The A-HLTL properties reason about the infinite executions of these programs, making verification very challenging (none of the previous methods could automatically verify these examples).

\begin{table}[!t]
	\caption{We verify A-HLTL properties on non-terminating (reactive) systems. We give the verification result and execution time of \tool{}.}\label{tab:non-terminating}
	
		\vspace{-2mm}
	
	\centering
	
	\def\arraystretch{1.2}
	\setlength\tabcolsep{1.2mm}
	\small
	
	\begin{minipage}{0.5\linewidth}
		\centering
			\begin{tabular}{lccc}
				\toprule
				\textbf{System} & \textbf{Property} & \textbf{Result} & $\boldsymbol{t}_\tool{}$ \\
				\midrule
				\Cref{fig:insec}$_\mathit{syn}$ (2 bit) & \ref{eq:asyn-od} & \cmark & 0.31 \\
				\Cref{fig:insec}$_\mathit{syn}$ (4 bit) & \ref{eq:asyn-od} & \cmark & 0.45 \\
				\Cref{fig:insec}$_\mathit{syn}$ (8 bit) & \ref{eq:asyn-od} & \cmark & 1.21 \\
				\Cref{fig:insec} (2 bit) & \ref{eq:asyn-od} & \cmark & 0.33 \\
				\Cref{fig:insec} (4 bit) & \ref{eq:asyn-od} & \cmark & 1.15 \\
				\Cref{fig:insec} (8 bit) & \ref{eq:asyn-od} & \cmark & 109.43 \\
				\bottomrule
			\end{tabular}
	\end{minipage}%
	\begin{minipage}{0.5\linewidth}
		\centering
			\begin{tabular}{lccc}
				\toprule
				\textbf{System} & \textbf{Property} & \textbf{Result} & $\boldsymbol{t}_\tool{}$ \\
				\midrule
				\Cref{fig:system-asyn} & \ref{eq:asyn-od} & \cmark & 0.25 \\
				\textsc{Buffer} & \ref{eq:asyn-od} & \xmark{} & 0.46 \\
				\textsc{Buffer} & \ref{eq:NIAE} & \cmark{} & 0.54 \\
				\textsc{Buffer}$_{\mathit{delay}, Z = 3}$ & \ref{eq:NIAE} & \cmark{} & 31.63 \\
				\textsc{Buffer}$_\mathit{flipped}$ & \ref{eq:asyn-od} & \xmark{} & 0.76 \\
				\textsc{Buffer} & \ref{eq:NIAE} & \cmark{} & 1.32 \\
				\bottomrule
			\end{tabular}
	\end{minipage}
\end{table}

\paragraph{Beyond Finite-State Systems}

We emphasize that \tool{} implements our novel game-based verification approach in its simplest possible form by computing an \emph{explicit-state} parity game.
Our preliminary experiments show that even such a direct implementation performs well compared to previous QBF-based methods.
We stress that the main complexity in A-HLTL verification stems from the expressiveness of the logic itself; the complexity (size) of the system plays a secondary role. 
All properties checked in \Cref{tab:exp-bm,tab:non-terminating} use the high-level asynchronous reasoning of A-HLTL, and delegate the search for an appropriate asynchronous stuttering to the verification tool (cf.~\Cref{sec:related-work}).
Our prototype is not meant as a full-fledged verification tool. 
Instead, it demonstrates that the game-based approach works well in a \emph{finite-state setting}. 
A key advantage of our game lies in its applicability to infinite-state systems.
For example, in the domain of \emph{synchronous} HyperLTL, \citet{CoenenFST19} showed that the synchronous game-based approach works well in a finite-state setting, leading to successful adoption to infinite-state systems; either by solving infinite-state games automatically \cite{BeutnerF22,ItzhakySV24} or interactively \cite{CorrensonF25}.
Our approach handles more general asynchronous properties (including well-known examples like OD), and provides a valuable abstraction: instead of reasoning over infinite traces and stutterings, we only need to reason \emph{locally} over a finite window of states and pointers within that window.
As we argue in \Cref{sec:related-work}, our game thus forms the foundation to extend verification to infinite-state systems.

\section{Related Work}\label{sec:related-work}

\paragraph{Relational Program Logics.}

\emph{Relational Hoare Logic} \cite{Benton04} -- and related Hoare-style hyperproperty specifications \cite{DOsualdoFD22,DardinierM24,MaillardHRM20,GladshteinZAAS24,AssafNSTT17,SousaD16,Beutner24c,DickersonYZD22} -- relate the initial and final states of \emph{multiple} program runs and are thus inherently ``\emph{asynchronous}'' (program executions can take a different number of steps to termination). 
However, RHLs struggle to express \emph{temporal} hyperproperties, i.e., properties that reason about the \emph{temporal behavior} along possibly infinite executions (as, e.g., found in infinite protocol interactions or reactive systems), which are needed for properties like \citet{ZdancewicM03}'s observational determinism.

\begin{wrapfigure}{R}{0.25\linewidth}
	\vspace{-5mm}
	\begin{lrbox}{\myboxi}
\begin{code}
$o$ = $0$
$c$ = $0$
repeat
@@$o$ = $\lfloor$ $c$ / $h$ $\rfloor$
@@$c$ = $c$ + $1$
\end{code}
	\end{lrbox}
	\centering
	\scalebox{1.0}{\usebox{\myboxi}}

	\caption{Example program}\label{fig:od-ex}
	\vspace{-4mm}
\end{wrapfigure}

\paragraph{Hypersafety and Commutativity.}
Various works have studied the verification of hypersafety properties by exploiting commutativity \cite{FarzanV20,FarzanKP22,EilersD023,EilersMH20,AntonopoulosKLNNN23,ShemerGSV19}.
These approaches attempt to find an alignment of different executions that aids verification by exploiting the fact that we can interleave different program executions. 
Our work differs from these approaches in that \emph{quantification over alignments (aka stutterings) is part of the specification} and not just a technique that helps during verification. 
Let us take \citet{ZdancewicM03}'s observational determinism (OD) as an example, which states that all pairs of traces with (initially) identical low-security input are stutter-equivalent w.r.t.~the output.
To verify this property, intermediate program steps, i.e., steps where the output does not change, can be interleaved arbitrarily. 
The problem is that one of the verification challenges lies in the \emph{identification} of points where the output changes.

\begin{example}\label{ex:statetrace}
	Consider the simple integer-based program in \Cref{fig:od-ex}.
	Here, $\lfloor \cdot \rfloor$ rounds down to the nearest integer. 
	It is easy to see that the output sequence of this program (per loop iteration) is $0^h 1^h2^h \cdots$, i.e., for the first $h$ loop iterations, the value of $o$ is $0$, and so forth. 
	The output of the program is stutter-equivalent independent of $h$; the program satisfies OD. 
	Yet, verifying this using commutativity-based frameworks is difficult. 
	Most existing frameworks \cite{FarzanV20,EilersMH20,BeutnerF22,ItzhakySV24,UnnoTK21,ShemerGSV19} search for \emph{some} alignment that facilitates an easy proof of the property.
	To express OD, we do, however, want to compare the output of two executions \emph{whenever the output changes}, i.e., identifying points at which the executions should synchronize is (implicitly) part of the OD specification.
	In the above frameworks, we would need to enforce the alignment manually (by, e.g., defining explicit observation points or adding synchronization assertions).
	Only \emph{afterward} can the $k$-safety verifier find some alignment (between two points where the output changes) that aids verification.
	In our example, we cannot infer the points where the output changes syntactically.
	In contrast, A-HLTL offers a richer, higher-level specification language, allowing us to easily express \citet{ZdancewicM03}'s OD by using first-class \emph{quantification} over stutterings (see \ref{eq:asyn-od}), i.e., the identification of points that need to be aligned is \emph{part of the specification itself}.
	\demo
\end{example}

Our approach thus differs from existing verification approaches in the expressiveness of the specification language:
Most hypersafety frameworks exclusively focus on the verification challenges stemming from infinite state space, whereas our work focuses on the complexity stemming from the specification logic itself. 

\paragraph*{Temporal Logics for Asynchronous Hyperproperties.}

In recent years, many logics have been proposed for the specification of asynchronous hyperproperties
Examples include A-HLTL (the main object of study in this paper) \cite{BaumeisterCBFS21}, $H_\mu$ and mumbling $H_\mu$ (extensions of the polyadic $\mu$-calculus), \cite{GutsfeldMO21,GutsfeldOO24}, 
OHyperLTL \cite{BeutnerF22} (an extension of HyperLTL with explicit observation points), 
HyperLTL$_S$ \cite{BozzelliPS21} (an extension of HyperLTL where operators can skip evaluation steps), HyperLTL$_C$ \cite{BozzelliPS21} (an extension of HyperLTL where only some traces progress), variants of team semantics \cite{KrebsMV018,GutsfeldMOV22}, and extensions of first-order logic with explicit quantification over time \cite{BartocciFHNC22a}.
\citet{BozzelliPS22} studied the expressiveness of many of these logics and found that most of them are \emph{incomparable}.

\paragraph*{Verification of A-HLTL}

\citet{BaumeisterCBFS21} study a restricted class of A-HLTL properties called \emph{admissible} (see \Cref{sub:admissible}).
They show that finite-state model-checking of admissible formulas is decidable, and provide a manual reduction to (synchronous) HyperLTL.  
\citet{HsuBFS23} study \emph{terminating} systems (see \Cref{sub:lookahead}) and present a  QBF-based bounded model-checking approach.
Unlike previous techniques for A-HLTL, our method is soundly applicable to arbitrary $\forall^*\exists^*$ A-HLTL formulas, i.e., it might succeed in verifying a property that does not fall into any known decidable fragment. 

\paragraph{$\forall^*\exists^*$ and Game-Based Verification.}

In recent years, many techniques for verifying hyperproperties \emph{beyond} $k$-safety have been proposed \cite{UnnoTK21,BeutnerF22,AntonopoulosKLNNN23,ItzhakySV24}.
These techniques combine the search for a suitable alignment with the search for witness traces for existential quantifiers.
For example, \citet{UnnoTK21} encode verification of $\forall^*\exists^*$ properties as a specialized form of Horn clauses and let the solver figure out an appropriate alignment. 
Currently, their encoding only searches for \emph{some} alignment that aids verification.
It is an interesting future work to study if the encoding can be extended with \emph{explicit quantification} over alignments (as in A-HLTL). 
\citet{CorrensonNFW24,BeutnerHBF24,HsuBFS23} employ bounded unrolling to verify or refute $\forall\exists$ properties. 
Our verification approach is rooted in a game-based interpretation of traces and stutterings.
For \emph{synchronous} HyperLTL properties, such game-based interpretations have been employed successfully to approximate expensive quantifier alternations \cite{CoenenFST19,BeutnerF22CSF}.
The dynamics of our asynchronous game differ substantially from synchronous games. 
Our key idea of using \emph{windows} and \emph{relative pointers} allows the players to have direct control over the stuttering of traces, which is key to approximating A-HLTL's semantics.
In particular, our asynchronous game allows the players to control the speed (i.e., stuttering) of the traces \emph{individually}.
\citet{ItzhakySV24,BeutnerF22} employ a game-based semantics for asynchronous $\forall^*\exists^*$ properties expressed in OHyperLTL.
In this logic, the \emph{user} indicates at which points the executions should synchronize, and  the verifier attempts to find an alignment between these so-called \emph{observation points}.
As argued before, A-HLTL allows explicit \emph{quantification} over stutterings, which is necessary to succinctly express properties like OD; to express OD in OHyperLTL, we would need to manually place observation points whenever the output changes, which is often not possible syntactically (cf.~\Cref*{ex:statetrace}).
It is interesting to see if the CHC encoding by \citet{ItzhakySV24} can be extended to allow explicit reasoning over stutterings, perhaps using our novel idea of using state windows and relative pointers.

\paragraph{Refinement, Stuttering, and Simulation.}
A particularly interesting class of $\forall\exists$ hyperproperties are \emph{refinement} properties.
Given two systems $\calT_1$ and $\calT_2$, and a set of observable variables $\mathit{Obs}$, we can easily express a lock-step refinement property using (synchronous) HyperLTL as $\varphi_{\mathit{refine}}^\mathit{syn} := \forall \pi_1\ldot \exists \pi_2 \ldot  \ltlG \big(\bigwedge_{a \in \mathit{Obs}} a_{\pi_1} = a_{\pi_2} \big)$.
A popular method for \emph{proving} a refinement property (such as $\varphi_{\mathit{refine}}^\mathit{syn}$) is to establish a \emph{simulation relation} between $\calT_1$ and $\calT_2$ \cite{Milner71}. 
A popular technique for automatically \emph{checking} the existence of a (synchronous) simulation relation is to exploit the connection between simulation relations and winning strategies in \emph{simulation games} \cite{Stirling95}.
While $\varphi_{\mathit{refine}}^\mathit{syn}$ requires a lock-step refinement,  it often suffices to establish a \emph{stuttering} refinement property (see, e.g., \cite{Leroy06}), which we can easily express in A-HLTL as 
\begin{align}\label{eq:refine}
	\forall \pi_1\ldot \exists \pi_2\ldot \exists \beta_1 \stutter \pi_1. \exists \beta_2 \stutter \pi_2\ldot  \ltlG \big(\bigwedge_{a \in \mathit{Obs}} a_{\beta_1} = a_{\beta_2} \big).\tag{$\varphi_{\mathit{refine}}$}
\end{align}
That is, for any execution $\pi_1$ in $\calT_1$, there exists some execution $\pi_2$ in $\calT_2$ that, \emph{up to stuttering}, agrees on the values of all observable variables.
We can prove $\varphi_{\mathit{refine}}$ by finding a \emph{stutter simulation} \cite{Namjoshi97,NamjoshiT20}: 
As long as either of the systems makes a ``tau-step'', i.e., a step that does not change any variable in $\mathit{Obs}$ and thus does not emit a ``visible event'', the other system is allowed to stutter.
Our game-based technique can thus be used as a principled method for proving stuttering refinement.
In this light, we can see our game-based verification method as an asynchronous extension of synchronous simulation games; at least in the very special case of \ref{eq:refine}.
Notably, our game-based method -- when applied to \ref{eq:refine} -- naturally allows both copies to be stuttered independently (as long as the stuttering is progressed \emph{infinitely often} on both traces), i.e., our technique does not require \emph{synchronous progress} where both systems eventually progress \emph{together} (cf.~\cite{ChoSLGD23}).

\paragraph{Interactive Verification of Infinite-State Systems.}

In our implementation, we focused on finite-state systems in order to evaluate our method against existing approaches.
All our results (including soundness and completeness) also apply to infinite-state systems.
While we could attempt to construct and solve the resulting infinite-state game automatically (see, e.g., \cite{LaveauxWW22,BeyeneCPR14,HeimD24,FarzanK18,AlfaroHM01,SamuelDK21,SchmuckHDN24}), the more promising direction is to aim for an interactive proof method. 
\citet{CorrensonF25} recently used the game-based approach for \emph{synchronous} HyperLTL \cite{CoenenFST19,BeutnerF22CSF} to develop an \emph{interactive} proof system to verify $\forall\exists$ properties in infinite-state software.
Intuitively, their approach uses co-induction to incrementally unfold the underlying game and let the user implicitly define a strategy.
Our game-based approach can handle asynchronous hyperproperties that quantify over stutterings, but still yields a plain two-player game.
This game enables a clear verification objective for the user and provides the much-needed abstraction from infinite traces and stutterings; the user only needs to reason about stuttering in the current state. 
Our game then allows us to employ frameworks similar to those used by \citet{CorrensonF25} to let the user interactively prove A-HLTL properties by incrementally unfolding the game arena. 
Our work thus lays the foundation for developing a unified proof system that can handle many prominent information-flow policies in reactive systems.

\section{Conclusion and Future Work}

In this paper, we have presented a principled approach for the (automated or interactive) verification of asynchronous hyperproperties expressed in A-HLTL.
Our method approximates quantifier alternation as a game and seamlessly supports quantification over stutterings.
This allows for sound verification of arbitrary $\forall^*\exists^*$ A-HLTL properties, well beyond the fragments supported by previous methods. 
Moreover, we have shown that our method is complete for many fragments and thus constitutes a finite-state decision procedure. 

Our work leaves numerous exciting directions for future extension.
On the theoretical side, one can, e.g., extend our game-based view to support A-HLTL formulas beyond the $\forall^*\exists^*$ fragment by incorporating incomplete information \cite{Reif84}; intuitively, the verifier should not base its decision on universal traces and stutterings quantified \emph{after} the (existentially quantified) objects controlled by the verifier \cite{BeutnerF25a,BeutnerF25b}.
On the practical side, we consider the verification of infinite-state (software) systems to be of particular interest. 
As we argued above, our game-based methods (and proofs) also apply to (symbolically represented) \emph{infinite-state} systems, yielding infinite-state games. 
Exploring effective techniques to solve the resulting (now infinite-state) game and developing abstraction techniques for a (possibly unbounded) window of states are challenging open problems. 
More concretely, it is interesting to apply the interactive proof system of \citet{CorrensonF25} to our game, allowing the user to verify asynchronous hyperproperties in infinite-state reactive systems interactively.

\begin{acks}
This work was supported by the European Research Council (ERC) Grant HYPER (101055412), and by the German Research Foundation (DFG) as part of TRR 248 (389792660).
\end{acks}

\bibliographystyle{ACM-Reference-Format}
\bibliography{references}

\iffullversion

\newpage

\appendix

\section{A-HLTL Syntax from \citet{BaumeisterCBFS21}}\label{app:semantics}

The original variant of A-HLTL proposed by \citet{BaumeisterCBFS21}, differs from the variant studied in this paper. 
In particular, our variant directly quantifies over stutterings of traces and avoids the notation of so-called trajectories.
In this section, we recap the variant of \citet{BaumeisterCBFS21} and show that our variant is equally expressive.

\subsection{A-HLTL by \citet{BaumeisterCBFS21}}

Let $\traceVars$ be a set of trace variables and $\calW = \{\tau, \tau_1, \tau_2, \ldots\}$ be a set of trajectory variables.
We define $\calX_{\traceVars, \calW} := \{x_{\pi, \tau} \mid x \in \calX, \pi \in \traceVars, \tau \in \calW\}$, i.e., we index the system variables with a trace variable and a trajectory variable. 
A-HLTL formulas are defined by the following grammar
\begin{align*}
	\varphi &:= \exists \pi \ldot \varphi \mid \forall \pi \ldot \varphi \mid  \phi \\
	\phi &:= \trajE \tau \ldot \phi \mid \trajA \tau \ldot \phi \mid  \psi \\
	\psi &:= \theta \mid \neg \psi \mid \psi \land \psi \mid \ltlN \psi \mid \psi \ltlU \psi
\end{align*}%
where $\pi \in \traceVars$, $\tau \in \calW$ is a trajectory, and $\theta \in \calF_{\calX_{\traceVars, \calW}}$. 
We first quantify over traces (in category $\varphi$), followed by secondary quantification of trajectories (in category $\phi$), followed by an LTL body $\psi$.
Intuitively, each trace-trajectory pair $(\pi, \tau)$ describes a (fair) stuttering of trace $\pi$.
The atomic formulas within the LTL body are first-order formulas $\theta \in \calF_{\calX_{\traceVars, \calW}}$, i.e., formulas over indexed system variables.
Different from our definition, the variables are thus not indexed by a stuttering variable, but by a pair of trace and trajectory. 
Here, each variable $x_{\pi, \tau}$ refers to the value of variable $x$ on trace $\pi$ under trajectory (stuttering) $\tau$.
The main difference to our variant is thus that by quantifying over a single trajectory $\tau$, we implicitly quantify stutterings for all traces at the same time, i.e., $(\pi_1, \tau), \ldots, (\pi_n, \tau)$ all denote (\emph{independent}) stutterings of traces $\pi_1, \ldots, \pi_n$, respectively. 
Our variant of A-HLTL makes this explicit by quantifying over a stuttering for each trace individually.

We assume that an A-HLTL formula is closed, i.e., for all atomic formulas $\theta$ used in the LTL body $\psi$ and any variable $x_{\pi, \tau}$ used in $\theta$, the trace variable $\pi$ and trajectory variable $\tau$ are bound by some outer quantifier.

\paragraph{Semantics.}

\citet{BaumeisterCBFS21} define the A-HLTL semantics in terms of explicit \emph{trajectories}, guiding stuttering of traces.
We use a simpler (but equivalent) semantics that avoids trajectories and instead directly reasons about stuttering expansions of traces, following \citet[Prop.~1]{BozzelliPS22}.

Let $\traceSet \subseteq (\calX \to \values)^\omega$ be a set of traces. 
In the semantics of A-HLTL, we maintain a \emph{trace assignment} $\Pi : \traceVars \rightharpoonup (\calX \to \values)^\omega$ mapping trace variables to traces (used to evaluate trace quantification). 
We say a trace assignment $\Pi'$ is a stuttering of  $\Pi$ iff $\Pi'(\pi) \stutter \Pi(\pi)$ for every $\pi \in \traceVars$. 
We write $\mathit{FSE}(\Pi)$ for the set of all stutterings of $\Pi$.
To handle trajectories, we use a higher-order \emph{trajectory assignment} $\Delta : \calW \rightharpoonup (\traceVars \rightharpoonup (\calX \to \values)^\omega)$, mapping trajectory variables to trace assignments.
Intuitively, $\Delta(\tau)$ is the stuttered trace assignment under trajectory $\tau$. 
Given $\Delta$, $x \in \calX$, $\pi \in \traceVars$, $\tau \in \calW$, and $i \in \nat$, the value of variable $x_{\pi, \tau}$ in the $i$th step is thus $\Delta(\tau)(\pi)(i)(x)$, i.e., the value of indexed variable $x_{\pi, \tau}$ is defined as the value of variable $x$ in the $i$th on the trace bound to $\pi$ in the trace assignment bound to $\tau$.
Given a trajectory assignment $\Delta$ and position $i \in \nat$, we define the extended variable evaluation $\Delta_{(i)} : \calX_{\traceVars, \calW} \to \values$ by $\Delta_{(i)}(x_{\pi, \tau}) := \Delta(\tau)(\pi)(i)(x)$.
We can now define the semantics as

\allowdisplaybreaks
\begin{align*}
	\Pi, \Delta, i &\models_\traceSet \exists \pi \ldot \varphi &\text{iff} \quad &\exists t \in \traceSet \ldot \Pi[\pi \mapsto t], \Delta, i \models_\traceSet  \varphi\\
	\Pi, \Delta, i &\models_\traceSet  \forall \pi \ldot \varphi &\text{iff} \quad &\forall t \in \traceSet \ldot \Pi[\pi \mapsto t], \Delta, i \models_\traceSet  \varphi\\[2mm]
	\Pi, \Delta, i &\models_\traceSet  \trajE \tau\ldot \phi &\text{iff} \quad  & \exists \Pi' \in \mathit{FSE}(\Pi)\ldot \Pi, \Delta[\tau \mapsto \Pi'], i \models_\traceSet  \phi\\
	\Pi, \Delta, i &\models_\traceSet  \trajA \tau\ldot \phi &\text{iff} \quad  & \forall \Pi' \in \mathit{FSE}(\Pi)\ldot \Pi, \Delta[\tau \mapsto \Pi'], i \models_\traceSet  \phi\\[2mm]
	\Pi, \Delta,i &\models_\traceSet  \theta &\text{iff} \quad   &\Delta_{(i)} \models^\theory \theta \\
	\Pi, \Delta,i &\models_\traceSet   \neg \psi &\text{iff} \quad & \Pi, \Delta,i \not\models_\traceSet   \psi \\
	\Pi, \Delta,i &\models_\traceSet   \psi_1 \land \psi_2 &\text{iff} \quad  &\Pi, \Delta,i \models_\traceSet  \psi_1 \text{ and }  \Pi, \Delta,i \models_\traceSet   \psi_2\\
	\Pi, \Delta,i &\models_\traceSet   \ltlN  \psi &\text{iff} \quad & \Pi, \Delta, i+1 \models_\traceSet  \psi \\
	\Pi, \Delta,i &\models_\traceSet   \psi_1 \ltlU \psi_2 &\text{iff} \quad & \exists j \geq i \ldot \Pi, \Delta, j\models_\traceSet   \psi_2 \text{ and } \forall i \leq k < j \ldot  \Pi, \Delta, k \models_\traceSet  \psi_1.
\end{align*}%
We first populate a trace assignment $\Pi$ by following the trace-quantifier prefix and adding traces to $\Pi$.
For each trajectory, we then consider FSEs of $\Pi$ (quantified universally or existentially) and add them to $\Delta$. 
Finally, we can evaluate the LTL body following the usual evaluation of boolean and temporal operators. 
Here, an atomic formula $\theta$ holds in step $i$ if the variable assignment $\Delta_{(i)}$ (which assigns values to the indexed variables in $\calX_{\traceVars, \calW}$) satisfies $\theta$ (modulo theory $\theory$).

\subsection{From Trajectories to Stutterings}

It is easy to see that our variant of A-HLTL is equivalent to trajectory-based variant of \cite{BaumeisterCBFS21,HsuBFS23}. 
Intuitively, our definition is more verbose by explicitly quantifying over stutterings for each trace, whereas the trajectory-based variant uses a single trajectory to quantify over stutterings for all traces. 
We show both directions separately.

Given a trajectory-based formula
\begin{align*}
	\quant_1 \pi_1 \ldots \quant_n \pi_n. \trajQ_1 \tau_1 \ldots \trajQ_k \tau_k\ldot \psi
\end{align*}
where $\quant_1, \ldots, \quant_n \in \{\forall, \exists\}$ and $\trajQ_1, \ldots, \trajQ_k \in \{\trajA, \trajE\}$ we can obtain by replacing each trace trajectory pair $(\pi_i, \tau_j)$ with a fresh stuttering variable $\beta_{\pi_i, \tau_j}$.
Formally, we define the set of stuttering variables $\trajVars := \{ \beta_{\pi, \tau} \mid \pi \in \traceVars, \tau \in \calW \}$ and consider the A-HLTL formula 
\begin{align*}
	\quant_1 \pi_1 \ldots \quant_n \pi_n\ldot \overline{\trajA_1} \beta_{\pi_1, \tau_1} \stutter \pi_1. \cdots \overline{\trajA_1} \beta_{\pi_n, \tau_1} \stutter \pi_n. \cdots \overline{\trajA_k} \beta_{\pi_1, \tau_k} \stutter \pi_1 \cdots \overline{\trajA_k} \beta_{\pi_n, \tau_k} \stutter \pi_n \ldot \overline{\psi}
\end{align*}
where we define $\overline{\trajA} := \forall$ and $\overline{\trajE} := \exists$ to replace trajectory quantification with stuttering quantification. 
We define $\overline{\psi}$ as the LTL formula obtained form $\psi$ by replacing each variable $x_{\pi, \tau}$ (indexed by a trace-trajectory pair) with variable $x_{\beta_{\pi, \tau}}$ (indexed by a stuttering variable).

\begin{example}
	Consider the formula 
	\begin{align*}
		\forall \pi_1. \forall \pi_2. \trajE \tau. \ltlG(o_{\pi_1, \tau} = o_{\pi_2, \tau}).
	\end{align*}
	Using our encoding, we translate this to formula 
	\begin{align*}
		\forall \pi_1. \forall \pi_2. \exists \beta_1 \stutter \pi_1. \exists \beta_2 \stutter \pi_2. \ltlG(o_{\beta_1} = o_{\beta_2}),
	\end{align*}
	i.e., for each trace-trajectory pair ($(\pi_1, \tau)$ and $(\pi_2, \tau)$) we simply quantify over a stuttering of that trace. 
	Note that we write $\beta_1$ for $\beta_{\pi_1, \tau}$ and $\beta_2$ for $\beta_{\pi_2, \tau}$. 
	\demo
\end{example}

For each $n \times k$ combination of trace-trajectory pair, we simply add an explicit stuttering.
This is more verbose but also highlights much better what is actually happening.
In fact, we believe one of the core reasons why A-HLTL was hard to understand is that a lot of quantification was happening implicitly: Every time we quantify over a trajectory, we quantify over stutterings of \emph{all} traces, even though these stutterings are completely independent.

\subsection{From Stutterings to Trajectories}

For the other direction, we simply replace each stuttering quantifier with quantification over a unique trajectory. 
Formally, given a stuttering-based A-HLTL formula 
\begin{align*}
	\quant_1 \pi_1 \ldots \quant_n \pi_n. \quant'_1 \beta_1 \stutter \pi_{k_1} \ldots \quant'_m \beta_m \stutter \pi_{k_m}\ldot \psi
\end{align*}
We define the equivalent trajectory-based formula  
\begin{align*}
	\quant_1 \pi_1 \ldots \quant_n \pi_n. \overline{\quant'_1} \tau_{\beta_1} \ldots \overline{\quant'_m} \tau_{\beta_m}\ldot \overline{\psi}
\end{align*}
where we define $\overline{\forall} = \trajA$ and $\overline{\exists} = \trajE$.
We define the new LTL body $\overline{\psi}$ as $\psi$, where we replace each variable $x_\beta$ with $x_{\mathit{base}(\beta), \tau_{\beta}}$. 
Each trajectory $\tau_{\beta}$ is thus used on a unique trace (namely $\mathit{base}(\beta)$), i.e., instead of quantifying over a stuttering $\beta$ of trace $\pi$, we quantify over a trajectory $\tau_\beta$ (and thus a stuttering of \emph{all} traces) but only use trajectory $\tau_\beta$ on trace $\pi = \mathit{base}(\beta)$.

\begin{example}
	Consider the formula 
	\begin{align*}
		\forall \pi_1. \forall \pi_2. \exists \beta_1 \stutter \pi_1. \exists \beta_2 \stutter \pi_2. \ltlG(o_{\beta_1} = o_{\beta_2}),
	\end{align*}
	Using our encoding, we translate this to formula 
	\begin{align*}
		\forall \pi_1. \forall \pi_2. \trajE \tau_{\beta_1}. \trajE \tau_{\beta_2}. \ltlG(o_{\pi_1, \tau_{\beta_1}} = o_{\pi_2, \tau_{\beta_2}}).
	\end{align*}
	Note that this encoding is less-optimal (in this example, we could use the \emph{same} trajectory on $\pi_1$ and $\pi_2$), but nevertheless yields an equivalent formula. 
	\demo
\end{example}

\section{Soundness Proof}\label{app:sound}

In this section, we prove that our game-based method is sound for the verification of Forall-Exist A-HLTL properties by proving \Cref{theo:soudness}.

\sound*

To prove \Cref{theo:soudness}, we distinguish whether $\varphi$ contains universally quantified stutterings or not. 
We provide a proof for the first part in \Cref{prop:part1} and the latter in \Cref{prop:part2}.

\begin{proposition}\label{prop:part1}
	Assume $\varphi$ is a $\forall^*\exists^*$ A-HLTL formula without universally quantified stuttering.
	If $\mathit{wins}_{\game{\calT}{\varphi}{Z}}(\verifier, V_{\mathit{init}})$, then $\calT \models \varphi$. 
\end{proposition}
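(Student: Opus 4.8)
The plan is to extract explicit witnesses for the existential trace and stuttering quantifiers from a positional winning strategy $\sigma$ for $\verifier$, exploiting that the game affords $\verifier$ strictly less information than the A-HLTL semantics does. Concretely, fix arbitrary witness traces $t_1, \dots, t_m \in \traces{\calT}$ for the universally quantified $\pi_1, \dots, \pi_m$ (here $\trajVars_\forall = \emptyset$ by assumption), and fix underlying paths $p_1, \dots, p_m \in \paths{\calT}$ with $\ell_\calT(p_i) = t_i$. I would then define a refuter strategy that, in every $\fstage$-stage, appends to each universal window $\Xi(\pi_i)$ exactly the next state of $p_i$ — the one whose absolute index equals the current window's starting index plus its length. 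The chosen initial vertex already fixes $\Xi(\pi_i) = [p_i(0), \dots, p_i(Z-1)]$ and $\Xi(\pi) = [s_{0,\calT}]$ for $\pi \in \traceVars_\exists$, so this strategy maintains the invariant that every universal window is a contiguous segment of $p_i$. Because $\sigma$ is winning and $v_\mathit{error} \notin F$ is a sink, no $\sigma$-compatible play reaches $v_\mathit{error}$; hence premise \textbf{(2.1)} never fires and every $\ustage$-vertex on such a play has a unique successor. Thus the play $\rho$ determined by $\sigma$ (on $\estage$-vertices) and the above refuter strategy (on $\fstage$-vertices) is well-defined, avoids $v_\mathit{error}$, and visits $F$ infinitely often.

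Next I would read the existential witnesses off $\rho$. For each $\pi \in \traceVars_\exists$, the sequence of states ever occupying $\Xi(\pi)$ (including those later trimmed from the front) forms, by the transition rules, a path in $\calT$ from $s_{0,\calT}$; let $\Pi(\pi)$ be its trace. For each $\beta \in \trajVars_\exists$, let $v^{(0)}, v^{(1)}, \dots$ enumerate the $\ustage$-stage vertices along $\rho$, with $v^{(t)} = \langle \ustage, \Xi^{(t)}, \mu^{(t)}, \flat^{(t)}, q^{(t)}\rangle$, and set $\Delta(\beta)(t) := \ell_\calT\big(\Xi^{(t)}(\mathit{base}(\beta))(\mu^{(t)}(\beta))\big)$. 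The core invariant, proved by induction on $t$ over the three-stage cycle, is that each window is a contiguous segment of its underlying path and that $\mu(\beta)$ tracks a well-defined \emph{absolute} position $g_\beta(t)$ along that path: appending a state never moves the pointer, a scheduled $\beta \in \mathit{sched}$ increments $\mu(\beta)$ (so $g_\beta$ rises by one), and the $\ustage$-trimming of premises \textbf{(3.2)}--\textbf{(3.3)} shifts the window and decrements the pointer in lockstep, leaving $g_\beta(t)$ unchanged. Hence $\Delta(\beta)(t) = \Pi(\mathit{base}(\beta))(g_\beta(t))$ with $g_\beta(0) = 0$, $g_\beta$ monotone, and $g_\beta$ advancing by at most one per round.

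With these definitions I would verify $\Delta(\beta) \stutter \Pi(\mathit{base}(\beta))$ for each $\beta \in \trajVars_\exists$. Matching (condition (1) of $\stutter$) and monotonicity (condition (2)) are immediate from the invariant via $g_\beta$. For surjectivity (condition (3)): since $\rho$ satisfies $\psi_\mathit{mod}$ — which for $\trajVars_\forall = \emptyset$ reduces to $\big(\bigwedge_{\beta \in \trajVars_\exists} \ltlG\ltlF\, \moved_\beta\big) \land \psi$ — the flag $\moved_\beta$ (set exactly when $\beta \in \flat$, i.e.\ when $\mu(\beta)$ was just incremented) holds infinitely often, so $g_\beta$ increases infinitely often, and advancing by at most one per round forces it to hit every element of $\nat$. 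It then remains to transfer the LTL body: by premise \textbf{(3.4)} and the definition of $\Delta$, the letter read by $\calA_\psi$ at round $t$ contains $\theta \in \Theta$ iff $\Delta_{(t)} \models^\theory \theta$, so the $\omega$-word driving the automaton is exactly the atom-valuation sequence of $\Delta$. As the run visits $F_\psi$ infinitely often, this word lies in $\calL(\calA_\psi)$, hence satisfies $\psi_\mathit{mod}$ and in particular $\psi$; a routine induction on the structure of $\psi$ yields $\Pi, \Delta, 0 \models_{\traces{\calT}} \psi$, which is precisely what the semantics demands for the chosen witnesses. Since $t_1, \dots, t_m$ were arbitrary, $\calT \models \varphi$.

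I expect the main obstacle to be the inductive invariant of the second paragraph: one must show that the absolute pointer position $g_\beta(t)$ is genuinely preserved by the front-shift/back-trim of the $\ustage$-stage, and that the windows never drop a state some stuttering still needs. The latter is exactly what avoiding $v_\mathit{error}$ (premise \textbf{(2.1)}) guarantees — pointers of stutterings sharing a base never diverge by $Z$ or more — but synchronizing the per-round index $t$ with the staggered per-stage advancement of windows, and tracking $g_\beta$ cleanly across trimmings, is the delicate bookkeeping, whereas the LTL transfer and the arbitrariness of the universal traces are routine.
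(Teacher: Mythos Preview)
Your approach is correct and matches the paper's: fix the universal paths, simulate the game by letting the refuter play those paths, and read off existential trace and stuttering witnesses from the resulting $\sigma$-compatible play; your absolute-position invariant $g_\beta(t)$ is a clean way to organize the pointer bookkeeping that the paper leaves more implicit.

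The one place to sharpen is your extraction of $\Pi(\pi)$ for $\pi \in \traceVars_\exists$. The $\ustage$-rule $\Xi(\pi)[0,Z)$ for $\pi \notin \mathit{mo}$ can \emph{back-trim}: if some stuttering on $\pi$ lingers at position $0$, the window may reach length $Z{+}1$ after the $\estage$-append and then have its freshly appended last state dropped; in the next round the (positional) strategy can append a \emph{different} successor of the now-unchanged last state, so ``the sequence of states ever occupying $\Xi(\pi)$'' need not form a path in $\calT$. The paper sidesteps this by recording $\Xi(\pi)(0)$ precisely when $\pi \in \mathit{mo}$ (i.e., when the front is about to shift), which yields a genuine infinite path because front-shifts happen infinitely often under a winning strategy. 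Your own invariant already supports this fix, since back-trimming never touches positions $\leq \max_{\mathit{base}(\beta)=\pi} \mu(\beta)$.
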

\begin{proof}
	As $\varphi$ contains no universally quantified stuttering, it has the form 
	\begin{align*}
		\varphi = \forall \pi_1\ldots \pi_n\ldot \exists \pi_{n+1}\ldots \pi_{n+m}\ldot \exists \beta_1 \stutter \pi_{l_1} \ldots \beta_k \stutter \pi_{l_k}\ldot \psi
	\end{align*}
	where $l_1, \ldots, l_k \in \{1, \ldots, n+m\}$.
	We assume $\mathit{wins}_{\game{\calT}{\varphi}{Z}}(\verifier, V_{\mathit{init}})$, so let $\sigma$ be a winning strategy for the verifier in $\calG_{\calT, \varphi, Z}$.
	To show $\calT \models \varphi$, let $t_1, \ldots, t_n \in \traces{\calT}$ be arbitrary traces for the universally quantified trace variables $\pi_1, \ldots, \pi_n$, and let $p_1, \ldots, p_n \in \paths{\calT}$ be paths that generate $t_1, \ldots, t_n$ (i.e., $\ell_\calT(p_i) = t_i$ for all $1 \leq i \leq n$).
	In a first step, we will use $\sigma$ to construct traces $t_{n+1}, \ldots, t_{n+m} \in \traces{\calT}$ for the existentially quantified $\pi_{n+1}, \ldots, \pi_{n+m}$.
	For this, consider the initial vertex in $\calG_{\calT, \varphi, Z}$
	\begin{align*}
		v_0 := \Big\langle \ustage, &\big[ \pi_1 \mapsto p_1[0,Z-1], \ldots, \pi_n \mapsto p_n[0,Z-1], \pi_{n+1}\mapsto [s_{0, \calT}], \ldots, \pi_{n+m} \mapsto [s_{0, \calT}] \big], \\
		&\big[ \beta_i \mapsto 0 \big]_{i = 0}^k, \emptyset, q_{0, \psi}  \Big\rangle
	\end{align*}
	where each universally quantified trace $\pi_1, \ldots, \pi_n$ starts in the initial state window of length $Z$ on paths $p_1, \ldots, p_n$, respectively, and $\pi_{n+1}, \ldots, \pi_{n+m}$ start in the length $1$ initial state window. 
	As $v_0 \in V_\mathit{init}$, strategy $\sigma$ wins $\calG_{\calT, \varphi, Z}$ for $\verifier$ from $v_0$.
	Let $V = V_\verifier \cup V_\refuter$ be the set of vertices in $\calG_{\calT, \varphi, Z}$.
	We now \emph{iteratively} construct a play $\vec{v}\in V^\omega$ by simulating $\sigma$ using the fixed paths $p_1, \ldots, p_n$, i.e., we let the refuter challenge the verifier by always extending the state windows in accordance with $p_1, \ldots, p_n$.
	We will then extract traces $t_{n+1}, \ldots, t_{n+m}$ from $\sigma$'s response. 
	To perform this simulation formally, we keep track of indices $c_1, \ldots, c_n \in \nat$.
	Intuitively, each $c_i$ tracks the position on $p_i$ that should be added to $\pi_i$'s state window at the next opportunity.

	\begin{center}
		\vspace{-2mm}
		\rule[0.5ex]{0.3\linewidth}{1pt} \textbf{Constructing $\vec{v}$} \rule[0.5ex]{0.3\linewidth}{1pt}
		\vspace{-2mm}
	\end{center}
	
	\noindent
	Initially, we set $\vec{v} = [v_0]$ and $c_1= \cdots = c_n = Z$, i.e., the next position that should be added to $\pi_i$'s state window is the $Z$th position on $p_i$.
	We then repeat the following (\emph{ad infinitum}):
	
	Let $v$ be the last vertex in $\vec{v}$.
	\begin{enumerate}[leftmargin=*]
		\item If $v = \langle \ustage, \Xi, \mu, \flat, q \rangle$, we define $v'$ as the unique successor state of $v$ (cf.~\Cref{fig:rules-update}).
		Additionally, we define $$\mathit{mo} = \big\{\pi \mid \forall \beta\ldot (\mathit{base}(\beta) = \pi) \Rightarrow \mu(\beta) \neq 0\big\}$$ as all universally quantified trace variables where none of the stuttering in $\varphi$ point to the first position (similar to \Cref{fig:rules-update} \textbf{(3.1)}).
		For each $\pi_i \in \mathit{mo}$, we define $c_i := c_i + 1$.
		Intuitively, this accounts for the fact that -- in the next $\fstage$-stage -- we can add a fresh state to $\pi_i$'s state window as the first state will be dropped. 
		\item If $v = \langle \estage, \Xi, \mu, \flat, q \rangle$, we define $v' := \sigma(v)$, i.e., query the winning strategy for the verifier. 
		\item If $v = \langle \fstage, \Xi, \mu, \flat, q \rangle$, we define $v' := \langle \estage, \Xi', \mu, \flat, q \rangle$, where
		\begin{align*}
			\Xi' := \big[\pi_1 \mapsto \Xi(\pi_1) \cdot p_1(c_1), \ldots, \pi_n \mapsto \Xi(\pi_n) \cdot p_n(c_n), \pi_{n+1} \mapsto \Xi(\pi_{n+1}), \ldots,  \pi_{n+m} \mapsto \Xi(\pi_{n+m})\big].
		\end{align*}
		That is, for each universally quantified trace $\pi_i$, we append state $p_i(c_i)$ to $\pi_i$'s state window, i.e., we extend the state window in accordance with the fixed path $p_i$ (using the index $c_i$). 
	\end{enumerate}
	Afterward, we append $v'$ to $\vec{v}$, i.e., define $\vec{v} := \vec{v} \cdot v'$, and repeat. 
	
	\begin{center}
		\vspace{-2mm}
		\rule[0.5ex]{0.3\linewidth}{1pt} \textbf{Constructing $\vec{v}$ -- END} \rule[0.5ex]{0.3\linewidth}{1pt}
		\vspace{-2mm}
	\end{center}
	
	\noindent
	Let $\vec{v} \in V^\omega$ be the infinite play computed by the above construction. 
	It is easy to see that $\vec{v}$ is a play in $\calG_{\calT, \varphi, Z}$ that is compatible with $\sigma$.
	As $\sigma$ is winning for $\verifier$, the play $\vec{v}$ is won by $\verifier$. 
	By construction of $\psi_\mathit{mod}$ (cf.~\Cref{sub:formula-transformation}), this implies that the verifier progresses all stutterings (all of which are existentially quantified) infinitely many times. 
	We will now use $\vec{v}$ to \emph{iteratively} construct paths $p_{n+1}, \ldots, p_{n+m} \in \paths{\calT}$, defining witness traces for the existentially quantified trace variables $\pi_{n+1}, \ldots, \pi_{n+m}$.
	Intuitively, we add a state to $p_i \in \{p_{n+1}, \ldots, p_{n+m}\}$, whenever this state was added to the state window of $\pi_i$ by the verifier.
	Formally, For $n < i \leq n+m$, we construct each $p_i$ as follows:
	
	\begin{center}
		\vspace{-2mm}
		\rule[0.5ex]{0.3\linewidth}{1pt} \textbf{Constructing $p_{n+1}, \ldots, p_{n+m}$} \rule[0.5ex]{0.3\linewidth}{1pt}
		\vspace{-2mm}
	\end{center}
	
	\noindent
	We initially set $p_i := \epsilon$ and $j = 0$.
	We then repeat the following (\emph{ad infinitum}):
	\begin{enumerate}[leftmargin=*]
		\item If $\vec{v}(j) = \langle \ustage, \Xi, \mu, \flat, q\rangle$ and $\mu(\beta)  \neq 0$ for all $\beta$ with $\mathit{base}(\beta) = \pi_i$, i.e., all stuttering on $\pi_i$ have progressed past the first state in $\Xi(\pi_i)$.
		We define $p_i := p_i \cdot \Xi(\pi_i)(0)$, i.e., we extend $p_i$ with the first position in $\pi_i$'s state window.
		\item Otherwise, we leave $p_i$ unchanged. 
	\end{enumerate}
	Afterward, we increment $j := j + 1$, and repeat.
	
	\begin{center}
		\vspace{-2mm}
		\rule[0.5ex]{0.3\linewidth}{1pt} \textbf{Constructing $p_{n+1}, \ldots, p_{n+m}$ -- END} \rule[0.5ex]{0.3\linewidth}{1pt}
		\vspace{-2mm}
	\end{center}
	
	\noindent
	As $\sigma$ advances all stutterings infinitely often, and we add a state to $p_i$ iff all stutterings have moved past the first position, we add states to $p_i$ infinitely many times. 
	Let $p_{n+1}, \ldots, p_{n+m} \in \paths{\calT}$ be the resulting \emph{infinite} paths.
	Now define trace $t_i := \ell_\calT(p_i)$ for $n < i \leq n+m$, and let $\Pi := [\pi_i \mapsto t_i]_{i=1}^n$ be the resulting trace assignment. 
	We claim that $\Pi,  \emptyset, 0 \models_\calT \exists \beta_1 \stutter \pi_{l_1} \ldots \beta_k \stutter \pi_{l_k}\ldot \psi$, i.e., traces  $t_{n+1},\ldots, t_{n+m}$ are valid choices for the existentially quantified trace variables in $\varphi$. 
	To show this, we need to -- for each stuttering $\beta \in \{ \beta_1, \ldots, \beta_k \}$ -- construct a (fair) stuttering of trace $\mathit{base}(\beta)$, that satisfy $\psi$.
	To construct these stutterings we, again, use the vertex sequence $\vec{v}$ constructed previously.
	For each $\beta_i \in \{\beta_1, \ldots, \beta_k \}$, we define an infinite state sequence $r_{\beta_i} \in S_\calT^\omega$, defined, for each $i \in \nat$ as $r_{\beta_i}(j) := \Xi(\mathit{base}(\beta_i))(\mu(\beta_i))$, where $\vec{v}(3 \cdot j) = \langle \ustage, \Xi, \mu, \flat, q\rangle$.
	That is, we always look at every third position (i.e., all vertices in the $\ustage$-stage) and use the state that is pointed to by $\mu(\beta_i)$ in state window $\Xi(\mathit{base}(\beta_i))$. 
	First, it is easy to see that for each $\beta_i \in \{\beta_1, \ldots, \beta_k \}$,  $r_{\beta_i}$ is a fair stuttering of the path $\Pi(\mathit{base}(\beta_i))$ (i.e., $r_{\beta_i} \stutter \Pi(\mathit{base}(\beta_i))$).
	Second, we claim that the stuttered state sequences $\{r_{\beta_i}\}$ satisfy $\psi$, i.e., 
	\begin{align*}
		\Pi, \big[ \beta_1 \mapsto \ell_\calT(r_{\beta_1}), \cdots, \beta_k \mapsto \ell_\calT(r_{\beta_k}) \big], 0 \models_\calT \psi.
	\end{align*}
	This follows from the assumption that $\vec{v}$ is won by $\verifier$, i.e., the states in each $\ustage$-stage, together, satisfy $\psi_\mathit{mod}$ and, therefore, $\psi$ (cf.~\Cref{sub:formula-transformation}).
	Note that in $\calG_{\calT, \varphi, Z}$, in each $\ustage$-stage, we update $\calA_\psi$ by, for each stuttering $\beta_i$, taking the $\mu(\beta_i)$'s state in $\Xi(\mathit{base}(\beta_i))$; just as in the definition of $r_{\beta_i}$.
	Putting everything together, $\{r_{\beta_i}\}$ are (fair) stutterings of $p_1, \ldots, p_{n+m}$ that satisfy $\psi$. 
	By the A-HLTL semantics, this already implies that $\Pi,  \emptyset, 0 \models_\calT \exists \beta_1 \stutter \pi_{l_1} \ldots \beta_k \stutter \pi_{l_k}\ldot \psi$.
	This construction works for arbitrary choices of the universally quantified trace variables $\pi_1, \ldots, \pi_n$ in $\varphi$, proving $\calT \models \varphi$ as required.
\end{proof}

\begin{proposition}\label{prop:part2}
	Assume $\varphi$ is a $\forall^*\exists^*$ formula that contains a universally quantified stuttering.
	If $\mathit{wins}_{\game{\calT}{\varphi}{Z}}(\verifier, V_{\mathit{init}})$, then $\calT \models \varphi$. 
\end{proposition}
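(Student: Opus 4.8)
The plan is to exploit a structural feature of the $\forall^*\exists^*$ fragment that is forced by the presence of a universally quantified stuttering. In our syntax all trace quantifiers precede all stuttering quantifiers, and in a $\forall^*\exists^*$ formula every universal quantifier precedes every existential one. Hence, once $\varphi$ contains a universal \emph{stuttering} quantifier, every existential quantifier must itself be a stuttering quantifier (an existential trace would have to precede that universal stuttering, which is impossible), so no trace is quantified existentially. Thus $\varphi$ necessarily has the shape
\begin{align*}
	\varphi = \forall \pi_1 \ldots \forall \pi_n\ldot \forall \beta_1 \stutter \pi_{l_1} \ldots \forall \beta_d \stutter \pi_{l_d}\ldot \exists \beta_{d+1} \stutter \pi_{l_{d+1}} \ldots \exists \beta_k \stutter \pi_{l_k}\ldot \psi,
\end{align*}
with all traces universal and $l_1, \ldots, l_k \in \{1, \ldots, n\}$. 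This is the crucial simplification relative to \Cref{prop:part1}: there are no witness traces to synthesize, so the whole argument collapses to resolving the $\forall^*\exists^*$ alternation \emph{over stutterings} of fixed traces. I would fix a (positional) winning strategy $\sigma$ for $\verifier$, fix arbitrary traces $t_1, \ldots, t_n \in \traces{\calT}$ with generating paths $p_1, \ldots, p_n \in \paths{\calT}$, and set $\Pi := [\pi_i \mapsto t_i]_{i=1}^n$. The goal then becomes $\Pi, \emptyset, 0 \models_\calT \forall \beta_1 \stutter \pi_{l_1} \ldots \forall \beta_d \stutter \pi_{l_d}\ldot \exists \beta_{d+1} \stutter \pi_{l_{d+1}} \ldots \exists \beta_k \stutter \pi_{l_k}\ldot \psi$.

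Next I would fix \emph{arbitrary} fair stutterings $u_1 \stutter t_{l_1}, \ldots, u_d \stutter t_{l_d}$ for the universal stuttering variables and construct a $\sigma$-compatible play $\vec v$ by the same step-wise simulation as in \Cref{prop:part1}. The only difference is that the simulated refuter now also drives the universal stuttering pointers: at each $\fstage$-stage it extends the windows of $\pi_1, \ldots, \pi_n$ along $p_1, \ldots, p_n$ (using per-trace counters $c_i$ exactly as before) \emph{and} puts $\beta_s$ into $\mathit{sched}$ precisely when the prescribed stuttering $u_s$ takes a non-stuttering step at the corresponding position. As in \Cref{prop:part1}, I would track the window trimming and relative-pointer decrements performed at each $\ustage$-stage (premises \textbf{(3.1)}--\textbf{(3.3)}) so that the state selected by each $\mu(\beta_s)$ in the successive $\ustage$-vertices reproduces exactly the sequence $u_s$. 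Because each $u_s$ is fair, every $\beta_s \in \trajVars_\forall$ is scheduled infinitely often, so the premise $\bigwedge_{\beta \in \trajVars_\forall}\ltlG\ltlF\moved_{\beta}$ of $\psi_\mathit{mod}$ holds along $\vec v$.

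Since $\vec v$ is compatible with the winning strategy $\sigma$, it is won by $\verifier$; in particular it never reaches the losing sink $v_\mathit{error}$, and the simulated run of $\calA_\psi$ is accepting, i.e.\ $\psi_\mathit{mod}$ is satisfied on $\vec v$. Combined with the fact that the universal-stuttering premise holds, this forces the conclusion of $\psi_\mathit{mod}$: all existential stutterings progress infinitely often, and $\psi$ holds. I would then read off the existential stutterings $u_{d+1}, \ldots, u_k$ from the $\ustage$-vertices exactly as the state sequences $r_{\beta}$ are defined in \Cref{prop:part1} (at every $\ustage$-vertex take $\Xi(\mathit{base}(\beta))(\mu(\beta))$), obtaining fair stutterings of $t_{l_{d+1}}, \ldots, t_{l_k}$ for which $\psi$ holds. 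As $u_1, \ldots, u_d$ were arbitrary fair stutterings and $t_1, \ldots, t_n$ arbitrary traces, this yields $\calT \models \varphi$.

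The main obstacle is the bookkeeping in the middle step: faithfully realizing a \emph{prescribed} family of fair universal stutterings as refuter scheduling decisions inside the windowed game, despite the automatic window shifts and relative-pointer decrements at every $\ustage$-stage. I would handle this by establishing an invariant that relates, at each $\ustage$-vertex of $\vec v$, the absolute position consumed by $u_s$ on $t_{l_s}$ to the windowed pointer $\mu(\beta_s)$ -- essentially the dual of the extraction invariant used in \Cref{prop:part1}, now dictating rather than reading the universal pointers. Staying within the window bound $Z$ comes for free: any play that would violate it transitions to $v_\mathit{error}$ and is thus excluded by $\sigma$ being winning, so no separate divergence argument is needed.
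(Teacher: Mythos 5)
Your proposal is correct and follows essentially the same route as the paper's proof: fix arbitrary universal traces and fair universal stutterings, realize them as refuter moves (window extensions plus scheduling) in a simulation against the winning strategy $\sigma$, conclude from $\sigma$-compatibility that the play is won (hence avoids $v_\mathit{error}$ and satisfies $\psi_\mathit{mod}$, whose premise holds by fairness of the prescribed stutterings), and read off the existential stutterings at the $\ustage$-vertices. The only notable presentational difference is that the paper first observes explicitly that winning forces at most one universally quantified stuttering per trace variable (otherwise the refuter drives the play into $v_\mathit{error}$) and then assumes that normal form, whereas you handle this implicitly through your final remark that bound-violating plays are excluded by $\sigma$ being winning — which is sound, since in that degenerate case the hypothesis is simply false and the implication holds vacuously.
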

\begin{proof}
	We assume that $\mathit{wins}_{\game{\calT}{\varphi}{Z}}(\verifier, V_{\mathit{init}})$. 
	First, we observe that this implies that there exists \emph{at most one} universally quantified stuttering for each trace variable in $\varphi$: 
	If there are at least two stutterings to the same trace, the verifier can never win, as the refuter can easily enforce a visit to $v_\mathit{error}$ by letting the two universally quantified stutterings diverge for more than $Z$ steps.
	So, in the following, we assume that there exists exactly one universally quantified stuttering for each trace variable, i.e., $\varphi$ has the form 
	\begin{align*}
		\varphi = \forall \pi_1\ldots \pi_n\ldot \forall \beta'_1 \stutter \pi_1, \ldots, \beta'_n \stutter \pi_n  \ldot \exists \beta_1 \stutter \pi_{l_1} \ldots \beta_k \stutter \pi_{l_k}\ldot \psi.
	\end{align*}
	where $l_1, \ldots, l_k \in \{1, \ldots, n\}$. 
	That is, $\beta'_1, \ldots, \beta'_n$ are the fixed universally quantified stutterings of $\pi_1, \ldots, \pi_n$ respectively, and $\beta_1, \dots, \beta_k$ the existentially quantified stutterings. 
	Let $\calT = (S_\calT, s_{0, \calT}, \kappa_\calT, \ell_\calT)$ and $\calA_\psi = (Q_\psi, q_{0, \psi}, \delta_\psi, F_\psi)$.
	We assume that $\sigma$ is a winning strategy for the verifier in $\calG_{\calT, \varphi, Z}$.
	
	To show $\calT \models \varphi$, let $t_1, \ldots, t_n \in \traces{\calT}$ be arbitrary traces for the universally quantified $\pi_1, \ldots, \pi_n$, and let $p_1, \ldots, p_n \in S_\calT^\omega$ be paths in $\calT$ that generates $t_1, \ldots, t_n$, i.e., $\ell_\calT(p_i) = t_i$.
	Let $\Pi := [\pi_1 \mapsto t_1, \ldots, \pi_n \mapsto t_n]$. 
	Moreover, let $p_1' \ldots, p_n'\in S_\calT^\omega$ be (fair) stutterings of $p_1, \ldots, p_n$ (i.e, $p_i' \stutter p_i$), which correspond to the universally quantified stutterings $\beta'_1, \ldots, \beta_n'$.
	We can represent the stutterings as binary sequences $b_1', \ldots, b_n' \in \bool^\omega$ that dictate when the path is progressed. I.e., for each $j \in \nat$, $b_i'(j) = \top$ iff in the $j$th step $p_i'$ did a non-stuttering step. 
	
	To show $\calT \models \varphi$, we need to find witnesses for the (existentially quantified) stutterings $\beta_1, \ldots, \beta_k$..
	Concretely, for each $\beta_i \in \{\beta_1, \ldots, \beta_k\}$, we will construct a sequence $r_{\beta_i} \in S_\calT^\omega$ as stutterings of path $\Pi(\mathit{base}(\beta_i)$ such that 
	\begin{align*}
		\Pi, \Big[ &\beta_1' \mapsto \ell_\calT(p_1'), \ldots, \beta_n' \mapsto \ell_\calT(p_n'), \beta_1 \mapsto \ell_\calT(r_{\beta_1}), \ldots, \beta_k \mapsto \ell_\calT(r_{\beta_k}) \Big], 0 \models \psi.
	\end{align*}
	As in the proof of \Cref{prop:part1}, we will simulate $\sigma$ to construct these stuttered paths. 
	Consider the initial vertex in $\calG_{\calT, \varphi, Z}$
	\begin{align*}
		v_0 := \Big\langle \ustage, &\big[ \pi_1 \mapsto p_1[0,Z-1], \ldots, \pi_n \mapsto p_n[0, Z-1] \big], \big[ \beta \mapsto 0 \big]_{\beta \in \{\beta_1', \ldots, \beta_n', \beta_1, \ldots, \beta_k\}}, \emptyset,q_{0, \psi}  \Big\rangle
	\end{align*}
	where each universally quantified trace $\pi_i$ starts in the initial state window of length $Z$ on path $p_i$. 	
	
	We now iteratively construct a sequence of vertices $\vec{v}\in V^\omega$.
	In this sequence we will let the refuter simulate $p_1, \ldots, p_n$ and the stutterings $p_1' \ldots, p_n'$ (corresponding to stutterings $\beta_1', \ldots, \beta_n'$), and then use $\sigma$'s response to construct witnesses for $\beta_1, \ldots, \beta_k$. 
	We keep track of variables $c_1, \ldots, c_n \in \nat$.
	Again, each $c_i$ tracks the position on $p_i$ that currently is in the \emph{last} position of the state window for $\pi_i$, which we use to extend the state window in accordance with $p_i$.
	We give the construction of $\vec{v}$ as an algorithm as follows:
	
	\begin{center}
		\vspace{-2mm}
		\rule[0.5ex]{0.3\linewidth}{1pt} \textbf{Constructing $\vec{v}$} \rule[0.5ex]{0.3\linewidth}{1pt}
		\vspace{-2mm}
	\end{center}
	
	\noindent
	Initially, we set $\vec{v} = [v_0]$ and $c_1= \cdots = c_n = Z$.
	We set $j := 0$. 
	We then repeat the following (ad infinitum):\\
	Let $v= \vec{v}(j)$ be the last vertex in $\vec{v}$.
	\begin{enumerate}
		\item If $v = \langle \ustage, \Xi, \mu, \flat, q \rangle$, we define $v'$ as the unique successor state of $v$ (cf.~\Cref{fig:rules-update}).
		For all states where the update shifts the window (i.e., the first position is removed, so a position at the back of the window opens), we increment the counters $c_1, \ldots, c_n$. 
		Formally, we define $\mathit{mo} = \big\{\pi \mid \forall \beta\ldot (\mathit{base}(\beta) = \pi) \Rightarrow \mu(\beta) \neq 0\big\}$ as the set of all (universally quantified) trace variables where none of the stutterings in $\varphi$ points to the first position (similar to \Cref{fig:rules-update}).
		For each $\pi_i \in \mathit{mo}$, we define $c_i := c_i + 1$. 
		\item If $v = \langle \estage, \Xi, \mu, \flat, q \rangle$, we define $v' := \sigma(v)$.
		\item If $v = \langle \fstage, \Xi, \mu, \flat, q \rangle$, we first define 
		\begin{align*}
			\Xi' := \big[\pi_1 \mapsto \Xi(\pi_1) \cdot p_1(c_1), \ldots, \pi_n \mapsto \Xi(\pi_n) \cdot p_n(c_n)\big]
		\end{align*}
		i.e., for each universally quantified trace $\pi_i$, we append state $p_i(c_i)$, i.e., append the state window in accordance with the fixed path $p_i$. 
		By construction, the counters $c_1, \ldots, c_n$ always point to positions on $p_1,\ldots, p_n$  that are a successor state of the windows $\Xi(\pi_1), \ldots, \Xi(\pi_n)$. 
		The state we append is thus a valid transition in $\calT$.
		We define 
		\begin{align*}
			\mu' := \lambda \beta\ldot \begin{cases}
				\begin{aligned}
					&\mu(\beta) + 1 \quad &&\text{if } \beta = \beta_i' \land b_i'(j) = \top \text{ for some } i\\
					&\mu(\beta) \quad &&\text{otherwise}
				\end{aligned}
			\end{cases}
		\end{align*}
		That is, we progress a universally quantified stuttering $\beta_i'$ (with base path $\pi_i$) iff $p_i'$ (the fixed stuttering of $\pi_i$) did a non-stuttering step in the $j$th step (as given by $b_i' \in \bool^\omega$).
		This ensures that we update the stuttering pointer $\beta_1', \ldots, \beta_n'$ such that we reproduce the fixed stuttering $p_1', \ldots, p_n'$ (of paths $p_1, \ldots, p_n$).
		Likewise, we define 
		\begin{align*}
			\flat' := \flat \cup \{ \beta_i' \mid b_i'(j) = \top \}
		\end{align*}
		Finally, we define $v' := \langle \estage, \Xi', \mu', \flat', q \rangle$.
		It is easy to see that $\calG_{\calT, \varphi, Z}$ allows a transition from $v$ to $v'$;
		we appended a state to all state windows and (possibly) progressed the universally quantified stuttering $\beta_1', \ldots, \beta_n'$. (cf.~\Cref{fig:rules-forall}). 
	\end{enumerate}
	Afterward, we append $v'$ to $\vec{v}$, i.e., define $\vec{v} := \vec{v} \cdot v'$. 
	We then increment $j := j + 1$, and repeat. 
	
	\begin{center}
		\vspace{-2mm}
		\rule[0.5ex]{0.3\linewidth}{1pt} \textbf{Constructing $\vec{v}$ -- END} \rule[0.5ex]{0.3\linewidth}{1pt}
		\vspace{-2mm}
	\end{center}
	
	Let $\vec{v} \in V^\omega$ be the infinite play computed by the above construction. 
	It is easy to see that $\vec{v}$ is a play in $\calG_{\calT, \varphi, Z}$ that is compatible with $\sigma$.
	As, by assumption, $\sigma$ is winning, the play $\vec{v}$ is won by $\verifier$. 
	Now by assumption $p_1', \ldots, p_n'$ are \emph{fair} stutterings of $p_1, \ldots, p_n$, so the boolean sequences $b_1', \ldots, b_n'$ contain $\top$ infinitely many times. 
	By construction of $\vec{v}$, the stuttering pointers for  $\beta_1', \ldots, \beta_n'$ are thus also progressed infinitely many times. 
	This implies that the premise $\ltlG\ltlF \moved_{\beta_1'} \land \cdots \land \ltlG\ltlF \moved_{\beta_n'}$ of $\psi_\mathit{mod}$ (cf.~\Cref{sub:formula-transformation}) holds. 
	As $\vec{v}$ is won by $\sigma$, this implies that conclusion of $\psi_\mathit{mod}$ also holds. 
	In particular, $\sigma$ progresses all stutterings $\beta_1, \ldots, \beta_k$  infinitely many times.
	
	We will now use $\vec{v}$ to construct stutterings for $\beta \in \{\beta_1', \ldots, \beta_n',  \beta_1, \ldots, \beta_k\}$ (similar to the proof of \Cref{prop:part1}).
	For each path $\beta \in \{\beta_1', \ldots, \beta_n', \beta_1, \ldots, \beta_k\}$ we define a sequence $r_{\beta} \in S_\calT^\omega$ as follows:
	For each $j \in \nat$, we let $\vec{v}(3 \cdot j) = \langle \ustage, \Xi, \mu, \flat, q\rangle$ and define $r_{\beta} (j) := \Xi(\mathit{base}(\beta))(\mu(\beta))$. 
	That is, we always look at every third position (i.e., all vertices in the $\ustage$-stage) and use the state that is pointed to by $\mu(\beta)$ on state window $\Xi(\mathit{base}(\beta))$.
	Note that for each $1 \leq i \leq n$, $r_{\beta_i'} = p_i'$, i.e., the state sequence for the universally quantified stutterings $\beta_1', \ldots, \beta_n'$ are exactly the fixed stuttering of $p_1, \ldots, p_n$ we used in the construction of $\vec{v}$ (as we progressed the $\beta_i'$ stutterings in exactly those steps where $p_i'$ did a non-stuttering step on $p_i$, i.e., those steps where $b_i'$ was set to true).
	
	It is easy to see that for each $\beta \in \{\beta_1', \ldots, \beta_n', \beta_1, \ldots, \beta_k\}$,  $r_{\beta}$ is a fair stuttering of the path $\Pi(\mathit{base})(\beta)$ (similar to the proof of \Cref{prop:part1}).
	Moreover, we claim that the paths $\{r_{\beta}\}$ satisfy $\psi$, i.e., 
	\begin{align*}
		\Pi, \Big[ &\beta_1' \mapsto \ell_\calT(r_{\beta_1'}), \ldots, \beta_n' \mapsto \ell_\calT(r_{\beta_n'}), \beta_1 \mapsto \ell_\calT(r_{\beta_1}), \ldots, \beta_k \mapsto \ell_\calT(r_{\beta_k}) \Big], 0 \models \psi.
	\end{align*}
	This directly follows from the assumption that $\vec{v}$ is won by $\verifier$:
	As $p_1', \ldots, p_n'$ are \emph{fair} stutterings, the premise of $\psi_\mathit{mod}$ holds, which -- as $\calA_\psi$ tracks $\psi_\mathit{mod}$ --  implies that the conclusion of $\psi_\mathit{mod}$ (and in particular $\psi$) also holds (cf.~\Cref{sub:formula-transformation}).
	
	So $\{r_{\beta}\}$ are (fair) stutterings of $p_1, \ldots, p_n$ that satisfy $\psi$, and for each $1 \leq i \leq n$, $r_{\beta_i'}$ equals the (universally quantified) $p_i'$. 
	By the A-HLTL semantics, this implies that $$\Pi, \emptyset, 0 \models \forall \beta'_1 \stutter \pi_1, \ldots \beta_n' \stutter \pi_n \ldot \exists \beta_1 \stutter \pi_{l_1} \ldots \beta_k \stutter \pi_{l_k}\ldot \psi$$ as desired. 
	This construction, works for arbitrary choices of the universally quantified traces  $\pi_1, \ldots, \pi_n$ and all choices of stutterings $\beta_1', \ldots, \beta_n'$, i.e., for all such choices, we can use $\sigma$ to construct stuttering $\beta_1, \ldots, \beta_k$ that satisfy $\psi$. 
	This proves $\calT \models \varphi$ as required. 
\end{proof}

\section{On the Window Size}\label{app:windowSize}

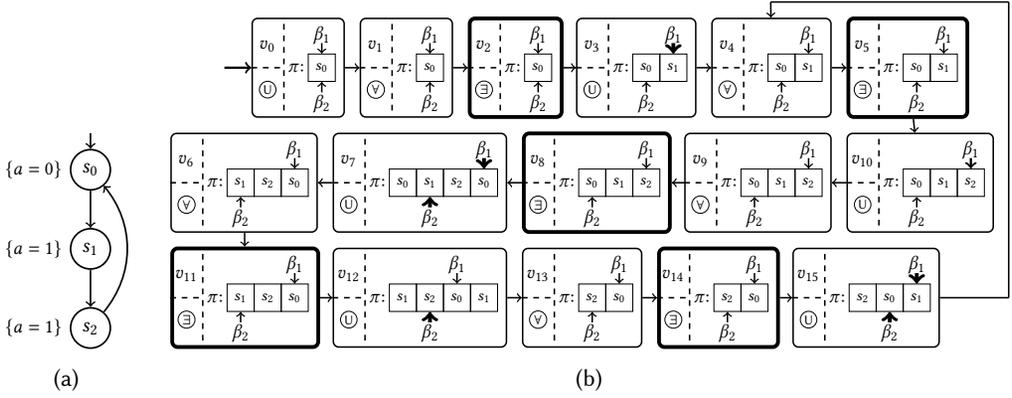
\begin{figure}[!t]
	
	\begin{subfigure}{0.15\linewidth}
		\centering
		\scalebox{0.8}{
			\begin{tikzpicture}
				\node[circle,draw, thick,label=left:{\small$\{a=0\}$}] at (0, 0) (n0) {$s_0$};
				
				\node[circle,draw, thick,label=left:{\small$\{a=1\}$}] at (0, -1.3) (n1) { $s_1$};
				
				\node[circle,draw, thick,label=left:{\small $\{a=1\}$}] at (0, -2.6) (n2) { $s_2$};

				\draw[->, thick] (n0) -- (n1); 
				
				\draw[->, thick] (0, 0.6) -- (n0); 
				
				\draw[->, thick] (n1) to (n2);
				\draw[->, thick,bend right=40] (n2) to (n0);
				
			\end{tikzpicture}
		}
		\subcaption{}\label{fig:stutter-system}
	\end{subfigure}%
	\begin{subfigure}{0.85\linewidth}
		
		\centering
		\scalebox{0.72}{
			\begin{tikzpicture}
				\coordinate (c) at (0, 0);
				\node[draw, minimum height=18mm,minimum width=17mm, rounded corners=1mm,thick,anchor=west] at (c) (b0) {};
				\node[anchor=east] at ([xshift=11mm]c)  {$\pi$:};
				\node[draw, minimum height=5mm,minimum width=5mm] (n) at ([xshift=13mm]c)  {\footnotesize $s_0$};
				\draw[->, thick] ($(n) + (0, 0.45)$) -- (n); 
				\draw[->, thick] ($(n) + (0, -0.45)$) -- (n); 
				
				\node[anchor=center] at ($(n) + (0, 0.6)$) () {$\beta_1$};
				\node[anchor=center] at ($(n) + (0, -0.65)$) () {$\beta_2$};
				
				\node[] at ([xshift=3mm,yshift=4mm]c) {$v_0$};
				\node[] at ([xshift=3mm,yshift=-4mm]c) {$\ustage$};
				\draw[-, dashed,thick] (c) --  ([xshift=6mm]c);
				\draw[-, dashed,thick] ([xshift=6mm,yshift=8mm]c) --  ([xshift=6mm,yshift=-8mm]c);
				
				\coordinate (c) at (2.0, 0);
				\node[draw, minimum height=18mm,minimum width=17mm, rounded corners=1mm,thick,anchor=west] at (c) (b1) {};
				\node[anchor=east] at ([xshift=11mm]c)  {$\pi$:};
				\node[draw, minimum height=5mm,minimum width=5mm] (n) at ([xshift=13mm]c)  {\footnotesize $s_0$};
				\draw[->, thick] ($(n) + (0, 0.45)$) -- (n); 
				\draw[->, thick] ($(n) + (0, -0.45)$) -- (n); 
				
				\node[anchor=center] at ($(n) + (0, 0.6)$) () {$\beta_1$};
				\node[anchor=center] at ($(n) + (0, -0.65)$) () {$\beta_2$};
				
				\node[] at ([xshift=3mm,yshift=4mm]c) {$v_1$};
				\node[] at ([xshift=3mm,yshift=-4mm]c) {$\fstage$};
				\draw[-, dashed,thick] (c) --  ([xshift=6mm]c);
				\draw[-, dashed,thick] ([xshift=6mm,yshift=8mm]c) --  ([xshift=6mm,yshift=-8mm]c);
				
				\coordinate (c) at (4.0, 0);
				\node[draw, minimum height=18mm,minimum width=17mm, rounded corners=1mm,thick,anchor=west, line width=2pt] at (c) (b2) {};
				\node[anchor=east] at ([xshift=11mm]c)  {$\pi$:};
				\node[draw, minimum height=5mm,minimum width=5mm] (n) at ([xshift=13mm]c)  {\footnotesize $s_0$};
				\draw[->, thick] ($(n) + (0, 0.45)$) -- (n); 
				\draw[->, thick] ($(n) + (0, -0.45)$) -- (n); 
				
				\node[anchor=center] at ($(n) + (0, 0.6)$) () {$\beta_1$};
				\node[anchor=center] at ($(n) + (0, -0.65)$) () {$\beta_2$};
				
				\node[] at ([xshift=3mm,yshift=4mm]c) {$v_2$};
				\node[] at ([xshift=3mm,yshift=-4mm]c) {$\estage$};
				\draw[-, dashed,thick] (c) --  ([xshift=6mm]c);
				\draw[-, dashed,thick] ([xshift=6mm,yshift=8mm]c) --  ([xshift=6mm,yshift=-8mm]c);
				
				\coordinate (c) at (6.0, 0);
				\node[draw, minimum height=18mm,minimum width=22mm, rounded corners=1mm,thick,anchor=west] at (c) (b3) {};
				\node[anchor=east] at ([xshift=11mm]c)  {$\pi$:};
				\node[draw, minimum height=5mm,minimum width=5mm] (n1) at ([xshift=13mm]c)  {\footnotesize $s_0$};
				\node[draw, minimum height=5mm,minimum width=5mm] (n2) at ([xshift=18mm]c)  {\footnotesize $s_1$};
				\draw[->, line width=2pt] ($(n2) + (0, 0.45)$) -- (n2); 
				\draw[->, thick] ($(n1) + (0, -0.45)$) -- (n1); 
				
				\node[anchor=center] at ($(n2) + (0, 0.6)$) () {$\beta_1$};
				\node[anchor=center] at ($(n1) + (0, -0.65)$) () {$\beta_2$};
				
				\node[] at ([xshift=3mm,yshift=4mm]c) {$v_3$};
				\node[] at ([xshift=3mm,yshift=-4mm]c) {$\ustage$};
				\draw[-, dashed,thick] (c) --  ([xshift=6mm]c);
				\draw[-, dashed,thick] ([xshift=6mm,yshift=8mm]c) --  ([xshift=6mm,yshift=-8mm]c);
				
				\coordinate (c) at (8.5, 0);
				\node[draw, minimum height=18mm,minimum width=22mm, rounded corners=1mm,thick,anchor=west] at (c) (b4) {};
				\node[anchor=east] at ([xshift=11mm]c)  {$\pi$:};
				\node[draw, minimum height=5mm,minimum width=5mm] (n1) at ([xshift=13mm]c)  {\footnotesize $s_0$};
				\node[draw, minimum height=5mm,minimum width=5mm] (n2) at ([xshift=18mm]c)  {\footnotesize $s_1$};
				\draw[->,  thick] ($(n2) + (0, 0.45)$) -- (n2); 
				\draw[->, thick] ($(n1) + (0, -0.45)$) -- (n1); 
				
				\node[anchor=center] at ($(n2) + (0, 0.6)$) () {$\beta_1$};
				\node[anchor=center] at ($(n1) + (0, -0.65)$) () {$\beta_2$};
				
				\node[] at ([xshift=3mm,yshift=4mm]c) {$v_4$};
				\node[] at ([xshift=3mm,yshift=-4mm]c) {$\fstage$};
				\draw[-, dashed,thick] (c) --  ([xshift=6mm]c);
				\draw[-, dashed,thick] ([xshift=6mm,yshift=8mm]c) --  ([xshift=6mm,yshift=-8mm]c);
				%
				\coordinate (c) at (11, 0);
				\node[draw, minimum height=18mm,minimum width=22mm, rounded corners=1mm,thick,anchor=west,line width=2pt] at (c) (b5) {};
				\node[anchor=east] at ([xshift=11mm]c)  {$\pi$:};
				\node[draw, minimum height=5mm,minimum width=5mm] (n1) at ([xshift=13mm]c)  {\footnotesize $s_0$};
				\node[draw, minimum height=5mm,minimum width=5mm] (n2) at ([xshift=18mm]c)  {\footnotesize $s_1$};
				\draw[->,  thick] ($(n2) + (0, 0.45)$) -- (n2); 
				\draw[->, thick] ($(n1) + (0, -0.45)$) -- (n1); 
				
				\node[anchor=center] at ($(n2) + (0, 0.6)$) () {$\beta_1$};
				\node[anchor=center] at ($(n1) + (0, -0.65)$) () {$\beta_2$};
				
				\node[] at ([xshift=3mm,yshift=4mm]c) {$v_5$};
				\node[] at ([xshift=3mm,yshift=-4mm]c) {$\estage$};
				\draw[-, dashed,thick] (c) --  ([xshift=6mm]c);
				\draw[-, dashed,thick] ([xshift=6mm,yshift=8mm]c) --  ([xshift=6mm,yshift=-8mm]c);

				\coordinate (c) at (11, -2.1);
				\node[draw, minimum height=18mm,minimum width=27mm, rounded corners=1mm,thick,anchor=west] at (c) (b10) {};
				\node[anchor=east] at ([xshift=11mm]c)  {$\pi$:};
				\node[draw, minimum height=5mm,minimum width=5mm] (n1) at ([xshift=13mm]c)  {\footnotesize $s_0$};
				\node[draw, minimum height=5mm,minimum width=5mm] (n2) at ([xshift=18mm]c)  {\footnotesize $s_1$};
				\node[draw, minimum height=5mm,minimum width=5mm] (n3) at ([xshift=23mm]c)  {\footnotesize $s_2$};
				\draw[->,  very thick] ($(n3) + (0, 0.45)$) -- (n3); 
				\draw[->, thick] ($(n1) + (0, -0.45)$) -- (n1); 
				
				\node[anchor=center] at ($(n3) + (0, 0.6)$) () {$\beta_1$};
				\node[anchor=center] at ($(n1) + (0, -0.65)$) () {$\beta_2$};
				
				\node[] at ([xshift=3mm,yshift=4mm]c) {$v_{10}$};
				\node[] at ([xshift=3mm,yshift=-4mm]c) {$\ustage$};
				\draw[-, dashed,thick] (c) --  ([xshift=6mm]c);
				\draw[-, dashed,thick] ([xshift=6mm,yshift=8mm]c) --  ([xshift=6mm,yshift=-8mm]c);

				\coordinate (c) at (8, -2.1);
				\node[draw, minimum height=18mm,minimum width=27mm, rounded corners=1mm,thick,anchor=west] at (c) (b9) {};
				\node[anchor=east] at ([xshift=11mm]c)  {$\pi$:};
				\node[draw, minimum height=5mm,minimum width=5mm] (n1) at ([xshift=13mm]c)  {\footnotesize $s_0$};
				\node[draw, minimum height=5mm,minimum width=5mm] (n2) at ([xshift=18mm]c)  {\footnotesize $s_1$};
				\node[draw, minimum height=5mm,minimum width=5mm] (n3) at ([xshift=23mm]c)  {\footnotesize $s_2$};
				\draw[->,  thick] ($(n3) + (0, 0.45)$) -- (n3); 
				\draw[->, thick] ($(n1) + (0, -0.45)$) -- (n1); 
				
				\node[anchor=center] at ($(n3) + (0, 0.6)$) () {$\beta_1$};
				\node[anchor=center] at ($(n1) + (0, -0.65)$) () {$\beta_2$};
				
				\node[] at ([xshift=3mm,yshift=4mm]c) {$v_9$};
				\node[] at ([xshift=3mm,yshift=-4mm]c) {$\fstage$};
				\draw[-, dashed,thick] (c) --  ([xshift=6mm]c);
				\draw[-, dashed,thick] ([xshift=6mm,yshift=8mm]c) --  ([xshift=6mm,yshift=-8mm]c);

				\coordinate (c) at (5.0, -2.1);
				\node[draw, minimum height=18mm,minimum width=27mm, rounded corners=1mm,thick,anchor=west, line width=2pt] at (c) (b8) {};
				\node[anchor=east] at ([xshift=11mm]c)  {$\pi$:};
				\node[draw, minimum height=5mm,minimum width=5mm] (n1) at ([xshift=13mm]c)  {\footnotesize $s_0$};
				\node[draw, minimum height=5mm,minimum width=5mm] (n2) at ([xshift=18mm]c)  {\footnotesize $s_1$};
				\node[draw, minimum height=5mm,minimum width=5mm] (n3) at ([xshift=23mm]c)  {\footnotesize $s_2$};
				\draw[->,  thick] ($(n3) + (0, 0.45)$) -- (n3); 
				\draw[->, thick] ($(n1) + (0, -0.45)$) -- (n1); 
				
				\node[anchor=center] at ($(n3) + (0, 0.6)$) () {$\beta_1$};
				\node[anchor=center] at ($(n1) + (0, -0.65)$) () {$\beta_2$};
				
				\node[] at ([xshift=3mm,yshift=4mm]c) {$v_8$};
				\node[] at ([xshift=3mm,yshift=-4mm]c) {$\estage$};
				\draw[-, dashed,thick] (c) --  ([xshift=6mm]c);
				\draw[-, dashed,thick] ([xshift=6mm,yshift=8mm]c) --  ([xshift=6mm,yshift=-8mm]c);

				\coordinate (c) at (1.5, -2.1);
				\node[draw, minimum height=18mm,minimum width=32mm, rounded corners=1mm,thick,anchor=west] at (c) (b7) {};
				\node[anchor=east] at ([xshift=11mm]c)  {$\pi$:};
				\node[draw, minimum height=5mm,minimum width=5mm] (n1) at ([xshift=13mm]c)  {\footnotesize $s_0$};
				\node[draw, minimum height=5mm,minimum width=5mm] (n2) at ([xshift=18mm]c)  {\footnotesize $s_1$};
				\node[draw, minimum height=5mm,minimum width=5mm] (n3) at ([xshift=23mm]c)  {\footnotesize $s_2$};
				\node[draw, minimum height=5mm,minimum width=5mm] (n4) at ([xshift=28mm]c)  {\footnotesize $s_0$};
				\draw[->,  line width=2pt] ($(n4) + (0, 0.45)$) -- (n4); 
				\draw[->, line width=2pt] ($(n2) + (0, -0.45)$) -- (n2); 
				
				\node[anchor=center] at ($(n4) + (0, 0.6)$) () {$\beta_1$};
				\node[anchor=center] at ($(n2) + (0, -0.65)$) () {$\beta_2$};
				
				\node[] at ([xshift=3mm,yshift=4mm]c) {$v_7$};
				\node[] at ([xshift=3mm,yshift=-4mm]c) {$\ustage$};
				\draw[-, dashed,thick] (c) --  ([xshift=6mm]c);
				\draw[-, dashed,thick] ([xshift=6mm,yshift=8mm]c) --  ([xshift=6mm,yshift=-8mm]c);
				
				\coordinate (c) at (-1.5, -2.1);
				\node[draw, minimum height=18mm,minimum width=27mm, rounded corners=1mm,thick,anchor=west] at (c) (b6) {};
				\node[anchor=east] at ([xshift=11mm]c)  {$\pi$:};
				\node[draw, minimum height=5mm,minimum width=5mm] (n1) at ([xshift=13mm]c)  {\footnotesize $s_1$};
				\node[draw, minimum height=5mm,minimum width=5mm] (n2) at ([xshift=18mm]c)  {\footnotesize $s_2$};
				\node[draw, minimum height=5mm,minimum width=5mm] (n3) at ([xshift=23mm]c)  {\footnotesize $s_0$};
				\draw[->,  thick] ($(n3) + (0, 0.45)$) -- (n3); 
				\draw[->, thick] ($(n1) + (0, -0.45)$) -- (n1); 
				
				\node[anchor=center] at ($(n3) + (0, 0.6)$) () {$\beta_1$};
				\node[anchor=center] at ($(n1) + (0, -0.65)$) () {$\beta_2$};
				
				\node[] at ([xshift=3mm,yshift=4mm]c) {$v_6$};
				\node[] at ([xshift=3mm,yshift=-4mm]c) {$\fstage$};
				\draw[-, dashed,thick] (c) --  ([xshift=6mm]c);
				\draw[-, dashed,thick] ([xshift=6mm,yshift=8mm]c) --  ([xshift=6mm,yshift=-8mm]c);

				\coordinate (c) at (-1.5, -4.2);
				\node[draw, minimum height=18mm,minimum width=27mm, rounded corners=1mm,thick,anchor=west, line width=2pt] at (c) (b11) {};
				\node[anchor=east] at ([xshift=11mm]c)  {$\pi$:};
				\node[draw, minimum height=5mm,minimum width=5mm] (n1) at ([xshift=13mm]c)  {\footnotesize $s_1$};
				\node[draw, minimum height=5mm,minimum width=5mm] (n2) at ([xshift=18mm]c)  {\footnotesize $s_2$};
				\node[draw, minimum height=5mm,minimum width=5mm] (n3) at ([xshift=23mm]c)  {\footnotesize $s_0$};
				\draw[->,  thick] ($(n3) + (0, 0.45)$) -- (n3); 
				\draw[->, thick] ($(n1) + (0, -0.45)$) -- (n1); 
				
				\node[anchor=center] at ($(n3) + (0, 0.6)$) () {$\beta_1$};
				\node[anchor=center] at ($(n1) + (0, -0.65)$) () {$\beta_2$};
				
				\node[] at ([xshift=3mm,yshift=4mm]c) {$v_{11}$};
				\node[] at ([xshift=3mm,yshift=-4mm]c) {$\estage$};
				\draw[-, dashed,thick] (c) --  ([xshift=6mm]c);
				\draw[-, dashed,thick] ([xshift=6mm,yshift=8mm]c) --  ([xshift=6mm,yshift=-8mm]c);
				
				\coordinate (c) at (1.5, -4.2);
				\node[draw, minimum height=18mm,minimum width=32mm, rounded corners=1mm,thick,anchor=west] at (c) (b12) {};
				\node[anchor=east] at ([xshift=11mm]c)  {$\pi$:};
				\node[draw, minimum height=5mm,minimum width=5mm] (n1) at ([xshift=13mm]c)  {\footnotesize $s_1$};
				\node[draw, minimum height=5mm,minimum width=5mm] (n2) at ([xshift=18mm]c)  {\footnotesize $s_2$};
				\node[draw, minimum height=5mm,minimum width=5mm] (n3) at ([xshift=23mm]c)  {\footnotesize $s_0$};
				\node[draw, minimum height=5mm,minimum width=5mm] (n4) at ([xshift=28mm]c)  {\footnotesize $s_1$};
				\draw[->,  thick] ($(n3) + (0, 0.45)$) -- (n3); 
				\draw[->, line width=2pt] ($(n2) + (0, -0.45)$) -- (n2); 
				
				\node[anchor=center] at ($(n3) + (0, 0.6)$) () {$\beta_1$};
				\node[anchor=center] at ($(n2) + (0, -0.65)$) () {$\beta_2$};
				
				\node[] at ([xshift=3mm,yshift=4mm]c) {$v_{12}$};
				\node[] at ([xshift=3mm,yshift=-4mm]c) {$\ustage$};
				\draw[-, dashed,thick] (c) --  ([xshift=6mm]c);
				\draw[-, dashed,thick] ([xshift=6mm,yshift=8mm]c) --  ([xshift=6mm,yshift=-8mm]c);
				
				\coordinate (c) at (5.0, -4.2);
				\node[draw, minimum height=18mm,minimum width=22mm, rounded corners=1mm,thick,anchor=west] at (c) (b13) {};
				\node[anchor=east] at ([xshift=11mm]c)  {$\pi$:};
				\node[draw, minimum height=5mm,minimum width=5mm] (n1) at ([xshift=13mm]c)  {\footnotesize $s_2$};
				\node[draw, minimum height=5mm,minimum width=5mm] (n2) at ([xshift=18mm]c)  {\footnotesize $s_0$};
				\draw[->,  thick] ($(n2) + (0, 0.45)$) -- (n2); 
				\draw[->, thick] ($(n1) + (0, -0.45)$) -- (n1); 
				
				\node[anchor=center] at ($(n2) + (0, 0.6)$) () {$\beta_1$};
				\node[anchor=center] at ($(n1) + (0, -0.65)$) () {$\beta_2$};
				
				\node[] at ([xshift=3mm,yshift=4mm]c) {$v_{13}$};
				\node[] at ([xshift=3mm,yshift=-4mm]c) {$\fstage$};
				\draw[-, dashed,thick] (c) --  ([xshift=6mm]c);
				\draw[-, dashed,thick] ([xshift=6mm,yshift=8mm]c) --  ([xshift=6mm,yshift=-8mm]c);

				\coordinate (c) at (7.5, -4.2);
				\node[draw, minimum height=18mm,minimum width=22mm, rounded corners=1mm,thick,anchor=west, line width=2pt] at (c) (b14) {};
				\node[anchor=east] at ([xshift=11mm]c)  {$\pi$:};
				\node[draw, minimum height=5mm,minimum width=5mm] (n1) at ([xshift=13mm]c)  {\footnotesize $s_2$};
				\node[draw, minimum height=5mm,minimum width=5mm] (n2) at ([xshift=18mm]c)  {\footnotesize $s_0$};
				\draw[->,  thick] ($(n2) + (0, 0.45)$) -- (n2); 
				\draw[->, thick] ($(n1) + (0, -0.45)$) -- (n1); 
				
				\node[anchor=center] at ($(n2) + (0, 0.6)$) () {$\beta_1$};
				\node[anchor=center] at ($(n1) + (0, -0.65)$) () {$\beta_2$};
				
				\node[] at ([xshift=3mm,yshift=4mm]c) {$v_{14}$};
				\node[] at ([xshift=3mm,yshift=-4mm]c) {$\estage$};
				\draw[-, dashed,thick] (c) --  ([xshift=6mm]c);
				\draw[-, dashed,thick] ([xshift=6mm,yshift=8mm]c) --  ([xshift=6mm,yshift=-8mm]c);
				
				\coordinate (c) at (10.0, -4.2);
				\node[draw, minimum height=18mm,minimum width=27mm, rounded corners=1mm,thick,anchor=west] at (c) (b15) {};
				\node[anchor=east] at ([xshift=11mm]c)  {$\pi$:};
				\node[draw, minimum height=5mm,minimum width=5mm] (n1) at ([xshift=13mm]c)  {\footnotesize $s_2$};
				\node[draw, minimum height=5mm,minimum width=5mm] (n2) at ([xshift=18mm]c)  {\footnotesize $s_0$};
				\node[draw, minimum height=5mm,minimum width=5mm] (n3) at ([xshift=23mm]c)  {\footnotesize $s_1$};
				\draw[->,  line width=2pt] ($(n3) + (0, 0.45)$) -- (n3); 
				\draw[->, line width=2pt] ($(n2) + (0, -0.45)$) -- (n2); 
				
				\node[anchor=center] at ($(n3) + (0, 0.6)$) () {$\beta_1$};
				\node[anchor=center] at ($(n2) + (0, -0.65)$) () {$\beta_2$};
				
				\node[] at ([xshift=3mm,yshift=4mm]c) {$v_{15}$};
				\node[] at ([xshift=3mm,yshift=-4mm]c) {$\ustage$};
				\draw[-, dashed,thick] (c) --  ([xshift=6mm]c);
				\draw[-, dashed,thick] ([xshift=6mm,yshift=8mm]c) --  ([xshift=6mm,yshift=-8mm]c);
				
				\coordinate[] (c1) at (14, -4.2);
				
				\coordinate[] (c2) at (14, 1.2);
				
				\coordinate[] (c3) at (9.6, 1.2);
				
				\draw[->, thick] (b0) -- (b1);
				\draw[->, thick] (b1) -- (b2);
				\draw[->, thick] (b2) -- (b3);
				\draw[->, thick] (b3) -- (b4);
				\draw[->, thick] (b4) -- (b5);
				\draw[->, thick] (b5) -- (b10);
				\draw[->, thick] (b10) -- (b9);
				\draw[->, thick] (b9) -- (b8);
				\draw[->, thick] (b8) -- (b7);
				\draw[->, thick] (b7) -- (b6);
				\draw[->, thick] (b6) -- (b11);
				\draw[->, thick] (b11) -- (b12);
				\draw[->, thick] (b12) -- (b13);
				\draw[->, thick] (b13) -- (b14);
				\draw[->, thick] (b14) -- (b15);
				
				\draw[-, thick] (b15) -- (c1);
				\draw[-, thick] (c1) -- (c2);
				\draw[-, thick] (c2) -- (c3);
				\draw[->,  thick] (c3) -- (b4);

				\draw[->, very thick] (-0.5, 0) -- (b0);
			\end{tikzpicture}
		}
		
		\subcaption{}\label{fig:strat2}
	\end{subfigure}
	
	\caption{In \Cref{fig:stutter-system}, we depict a transition system over system variables $\calX = \{a\}$. In \Cref{fig:strat2}, we give a winning strategy for the verifier (see \Cref{ex:windowSize} for details).
		Similar to \Cref{fig:strat-asyn}, we visualize the stuttering pointers using arrows; in this case, one for stuttering $\beta_1$ and one for stuttering $\beta_2$.
	}
\end{figure}

In the example in \Cref{sec:overview}, we dealt with a unique stuttering for each trace variable, so a window of size $2$ suffices to accommodate all stutterings.	
This changes if we deal with multiple stutterings on the same trace variable.
Consider the following illustrative example. 

\begin{example}\label{ex:windowSize}
	Consider the trivial transition system in \Cref{fig:stutter-system}, generating the unique trace $([a \mapsto 0][a \mapsto 1][a \mapsto 1])^\omega$ and let $\varphi := \exists \pi. \exists \beta_1 \stutter \pi \ldot \exists \beta_2 \stutter \pi\ldot \ltlN \ltlG (a_{\beta_1} \neq a_{\beta_2})$ specify that (on the unique trace), there exist two stutterings that always differ in the value of $a$. 
	It is not hard to see that the verifier loses $\game{\calT}{\varphi}{1}$ and $\game{\calT}{\varphi}{2}$.
	However, the verifier wins $\game{\calT}{\varphi}{3}$, showing that $\calT \models \varphi$ (cf.~\Cref{theo:soudness}).
	We depict a winning strategy for $\game{\calT}{\varphi}{3}$ in \Cref{fig:strat2}. 
	Whenever the game is in the $\ustage$-stage, the value of variable $a$ differs at both stutteringh pointers. 
	To ensure this, both stuttering may diverge for up to $3$ positions (e.g., in vertex $v_7$).
	\demo
\end{example}

\monotone*
\begin{proof}
	We show both part separately:
	\begin{itemize}
		\item
		
		Assume $\sigma$ is a winning strategy for the verifier in $\calG_{\calT, \varphi, Z}$. 
		We can easily extend $\sigma$ to a winning strategy in $\calG_{\calT, \varphi, Z'}$: We simply ignore the larger window size, i.e., the strategy only considers the smaller window of size $Z$ and ignores all states beyond $Z$ for it's decision. 
		
		\item
		We can extend the ideas from \Cref{ex:windowSize}.
		For each $Z \geq 1$, define $\calT_Z := (S, s_0, \kappa, \ell)$, where $S = \{s_0, \ldots, s_Z\}$, $\kappa(s_i) := \{s_{(i+1 )\% (Z + 1)}\}$, and $\ell(s_1) = \cdots = \ell(s_{Z-1}) := [a \mapsto 1], \ell(s_0) := [a \mapsto 0]$.
		Note that $\calT_Z$ generates the unique trace $(\{a \mapsto 0\}\{a \mapsto 1\}^Z)^\omega$.
		For example, $\calT_2$ is depicted in \Cref{fig:stutter-system}. 
		Consider the formula $\varphi := \exists \pi. \exists \beta_1 \stutter \pi\ldot \exists \beta_2 \stutter \pi\ldot \ltlN \ltlG (a_{\beta_1} \neq a_{\beta_2})$.
		It is easy to see that $\calT_Z \models \varphi$.
		However, similar to \Cref{ex:windowSize}, the verifier does not win $\calG_{\calT_Z, \varphi, Z}$ ($\neg \mathit{wins}_{\game{\calT_Z}{\varphi}{Z}}(\verifier, V_{\mathit{init}})$) but does win $\calG_{\calT_Z, \varphi, Z+1}$ ($\mathit{wins}_{\game{\calT_Z}{\varphi}{Z+1}}(\verifier, V_{\mathit{init}})$), as required.
	\end{itemize}
	
\end{proof}

\section{Completeness For Alternation-Free Formulas}\label{app:alt-free}

\altfree*
\begin{proof}
	The first direction follows directly from \Cref{theo:soudness}.
	For the other direction, assume that $\calT \models \varphi$. 
	We first consider the case where $\varphi$ is a $\exists^*$ formula.
	As $\varphi$ contains at most one stuttering per trace variable, we can assume, w.l.o.g., that is has the form 
	\begin{align*}
		\exists \pi_1\ldots \pi_n\ldot \exists \beta_1 \stutter \pi_1, \ldots, \beta_n \stutter \pi_n\ldot \psi
	\end{align*}
	i.e., $\beta_i$ is the unique stuttering on trace $\pi_i$.
	As $\calT \models \varphi$, we get witness traces $t_1, \ldots, t_n \in \traces{\calT}$ for $\pi_1, \ldots, \pi_n$ generated by paths $p_1, \ldots, p_n \in \paths{\calT}$ (i.e., $t_i = \ell_\calT(p_i)$).
	Moreover, by the semantics, there exists stuttered state sequences $p_1', \ldots, p_n' \in S_\calT^\omega$ such that $p_i' \stutter p_i$ for all $1 \leq i \leq n$ and
	\begin{align*}
		[\pi_1 \mapsto t_1, \ldots, \pi_n \mapsto t_n], \big[\beta_1 \mapsto \ell_\calT(p_1'), \ldots, \beta_n \mapsto \ell_\calT(p_n')\big], 0 \models_\calT \psi
	\end{align*} 
	The verifier can now easily win $\calG_{\calT, \varphi, 1}$ by simply appending states according to $p_1, \ldots, p_n$, and stuttering such that the current state at the $i$th visit to the $\ustage$-stage is  $p_1'(i), \ldots, p_n'(i)$. 
	This is similar to the strategy used in the proof of \Cref{theo:soudness}. 
	Note that the refuter has no influence in $\calG_{\calT, \varphi, 1}$, i.e., each vertex controlled by $\refuter$ has a unique successor vertex. 
	As $p_1', \ldots, p_n'$ are fair stutterings of $p_1, \ldots, p_n$ the verifier progresses $\beta_1, \ldots, \beta_n$ infinity often.
	As $p_1', \ldots, p_n'$	satisfy $\psi$, the resulting play is won by the verifier, so $\mathit{wins}_{\game{\calT}{\varphi}{1}}(\verifier, V_{\mathit{init}})$ as required. 
	
	The case where  $\varphi$ is a $\forall^*$ formula is similar. 
	For this, 
	\begin{align*}
		\varphi = \forall \pi_1\ldots \pi_n\ldot \forall \beta_1 \stutter \pi_1, \ldots, \beta_n \stutter \pi_n \ldot \psi
	\end{align*}
	Here, the verifier has no choice in $\calG_{\calT, \varphi, 1}$, i.e., each vertex controlled by $\verifier$ has a unique successor vertex. 
	However, the transitions rules of the refuter (cf.~\Cref{fig:rules-forall}) only allow him to use some path combination $p_1, \ldots, p_n$ in $\calT$ and stutter those paths. 
	The construction of $\psi_\mathit{mod}$ (cf.~\Cref{sub:formula-transformation}), ensures that the verifier wins if the stuttering in unfair (i.e., some path is stuttered forever). 
	If the stuttering is fair, all path combinations satisfy $\psi$ as we assumed $\calT \models \varphi$. 
\end{proof}

\section{Completeness For Terminating System}\label{app:lookahead}

\subsection{Proof of \Cref{prop:term}}

\term*
\begin{proof}
	The ``only if'' direction follows directly from \Cref{theo:soudness}.
	For the other direction, assume that $\calT \models \varphi$ and let $\varphi = \forall \pi_1\ldots \pi_n\ldot \exists \pi_{n+1}\ldots \pi_{n+m}\ldot \exists \beta_1 \stutter \pi_{l_1}  \ldots \beta_k \stutter \pi_{l_k}\ldot \psi$.
	We sketch a strategy for the verifier in $\game{\calT}{\varphi}{D}$, showing that $\mathit{wins}_{\game{\calT}{\varphi}{D}}(\verifier, V_{\mathit{init}})$.
	Each initial vertex in $\game{\calT}{\varphi}{D}$ assigns the (universally quantified) traces $\pi_1, \ldots, \pi_n$ to a state window of length $D$, i.e., a sequence of $D$ consecutive states in $\calT$. 
	Let $y_1, \ldots, y_n \in S_\calT^D$ be those initial path fragments. 
	By the assumption that $\calT$ is terminating with depth $D$, each prefix $y_i$ already defines a unique path (i.e., ends in a sink state). 
	Let $p_1, \ldots, p_n \in \paths{\calT}$ be the unique paths in $\calT$ obtained by extending $y_1, \ldots, y_n$. 
	Now instate the universally quantified trace variables $\pi_1, \ldots, \pi_n$ in $\varphi$ with traces $\ell_\calT(p_1), \ldots, \ell_\calT(p_n)$. 
	As $\calT \models \varphi$, we thus get witness traces $t_{n+1}, \ldots, t_{n+m} \in \traces{\calT}$ for the existentially quantified trace variables $\pi_{n+1}, \ldots, \pi_{n+m}$, and corresponding paths $p_{n+1}, \ldots, p_{n+m} \in \paths{\calT}$ (i.e., $\ell_\calT(p_i) = t_i$ for $n < i \leq n+m$).
	Moreover, for each $\beta_i \in \{\beta_1, \ldots, \beta_k\}$, there exists a stuttering $r_{\beta_i} \in S_\calT^\omega$ of path $p_{l_i}$ (i.e., $r_{\beta_i} \stutter p_{l_i}$), such that 
	\begin{align*}
		\big[\pi_i \mapsto t_i\big]_{\pi_i \in \{\pi_1, \ldots, \pi_{n+m} \}}, \big[\beta_i \mapsto \ell_\calT(r_{\beta_i})  \big]_{\beta_i \in \{\beta_1, \ldots, \beta_k\}}, 0 \models_\calT \psi.
	\end{align*}
	The strategy for the verifier in $\game{\calT}{\varphi}{D}$ now simply extends the state windows of $\pi_{n+1}, \ldots, \pi_{n+m}$ according to $p_{n+1}, \ldots, p_{n+m}$, and stutters $\beta_i$ according to $r_{\beta_i}$ (similar to the refuter in the proof of \Cref{theo:soudness}).
	Note that a window of size $D$ suffices, even though we deal with multiple stutterings on the same trace: After progressing a stuttering pair  $D$ times, the state never changes (as we reach a sink state after at most $D$ steps). 
	As long as some stuttering pointer has not reached a sink state, the verifier can thus stutter according to $\{r_{\beta_i}\}$ without letting any stuttering pointer grow strictly larger than $D$.
\end{proof}

\subsection{Clairvoyance By Window Size}

\begin{figure}[!t]
	\centering
	\scalebox{0.8}{
		\begin{tikzpicture}
			\node[circle,draw, thick,label=above:{\small$\{a = 0\}$}] at (0, 0) (n0) {$s_0$};
			
			\node[circle,draw,thick,label=below:{\small$\{a = 1\}$}] at (0, -1.5) (n1) {$s_1$};
			
			\draw[->, thick] (-0.6,0) -- (n0); 
			
			\draw[->, thick] (n0) edge[bend right] (n1);
			\draw[->, thick] (n1) edge[bend right] (n0);

			\draw [->, thick] (n0) edge[loop right] (n0);
			\draw [->, thick] (n1) edge[loop right] (n1);
		\end{tikzpicture}
	}
	
	\caption{}\label{fig:system-syn}
\end{figure}

Consider the simple system $\calT$ in \Cref{fig:system-syn} which generates traces over variable $a$. 
For $n \in \nat$, consider the following HyperLTL property: 
\begin{align*}
	\varphi_n := \forall \pi_1\ldot \exists \pi_2\ldot \exists \beta_1 \stutter \pi_1. \exists \beta_2 \stutter \pi_2 \ldot \ltlN\ltlG ((a_{\beta_2} = 0) \leftrightarrow \ltlN^n (a_{\beta_1} = 0))
\end{align*}
where we write $\ltlN^n$ for the $n$-fold application of $\ltlN$. 
This formula requires that for every trace $\pi_1$, some trace $\pi_2$ ``predicts $\pi_1$'', i.e., $a$ should hold in the $i$th step on $\pi_2$ iff it holds in the $(i+n)$th step on $\pi_1$. 
It is easy to see that $\calT \models \varphi_n$. 
However, for $n > 1$, $\verifier \not\models \game{\calT}{\varphi_n}{1}$, i.e., the verifier looses $\game{\calT}{\varphi_n}{1}$. 
The reason for this is simple: 
In our game, we construct $\pi_1, \pi_2$ step-wise, so the verifier has to decide on a successor state for the $\pi_2$ window, knowing only a prefix of $\pi_1$. 
However, to ensures that $\varphi_n $ holds, the verifier would need to know which state the refuter will pick $n$ steps into the future. 

As $\varphi_n$ contains a single stuttering per trace variable, it seems like there is no need to use a window size larger than $1$. 
However, it turns out that the window size has a secondary benefit (beyond accommodating multiple stutterings on the same trace): It provides clairvoyance to the verifier. 
In $\game{\calT}{\varphi}{Z}$, we always maintain a window of size $Z$ for the universally quantified traces (cf.~\Cref{fig:rules-update}).
The verifier thus does not only see the first positions in the state window, but also the next $Z$ steps. 
For property $\varphi_n$, knowing the next $Z = n$ steps is exactly the information that was missing:
It is easy to see that the verifier wins $\game{\calT}{\varphi_n}{n}$, i.e., knowing the next $n$ steps provides enough information to predict $\pi_1$.

\section{Completeness For Admissible Formulas}\label{app:admissible}

In this section, we prove \Cref{theo:admissible}:

\admissible*

We assume $\varphi = \forall \pi_1\ldots \forall \pi_n \ldot \exists \beta_1 \stutter \pi_1 \ldots \exists \beta_n \stutter \pi_n \ldot \psi$
In \Cref{sub:admissible}, we already provide a rough outline and sketch the strategy that wins $\calG_{\calT, \varphi, 1}$. 
We recall some of the definitions from \Cref{sub:admissible}:
Each vertex $v$ in $\game{\calT}{\varphi}{1}$ controlled by $\verifier$ has the form 
\begin{align*}
	v = \big\langle \estage, &\big[\pi_i \mapsto [s_i, s_i']\big]_{i=1}^n, \big[ \beta_i \mapsto 0\big]_{i=1}^n, \emptyset,  q \big\rangle,
\end{align*}
i.e., each trace $\pi_i$ is mapped to a length-$2$ state window $[s_i, s'_i]$, and the $\beta_i$-stuttering points to the $0$th position in this window. 
We assume that $v$ satisfies all coloring constraints in $\psi_\mathit{phase}$, i.e., for all $i < j$ and $a \in P_{i, j}$,  $a \in \ell_\calT(s_i)(a) = \ell_\calT(s_j)(a)$ (if this is not the case, $\psi_\mathit{phase}$ is already violated).
Call this assumption \textbf{(A)}.
For $M \subseteq \{1, \ldots, n\}$, we define states $\mathit{next}^M_v(1), \ldots, \mathit{next}^M_v(n) \in S_\calT$  by $\mathit{next}^M_v(i) = s_i$ if $i \not\in M$ and $\mathit{next}^M_v(i) = s'_i$ if $i \in M$.
A progress set $M$ is \emph{safe} if for every $i < j$, and every $a \in P_{i, j}$, $\ell_\calT(\mathit{next}^M_v(i))(a) = \ell_\calT(\mathit{next}^M_v(j))(a)$, i.e., $M$ ensures that all coloring constraints are satisfied \emph{locally} in the next step.

\union*
\begin{proof}
	To show that $M_1 \cup M_2$ is safe, let $i < j$ be arbitrary. 
	We claim that for every $a \in P_{i, j}$, $\ell_\calT(\mathit{next}^{M_1 \cup M_2}_v(i))(a) = \ell_\calT(\mathit{next}^{M_1 \cup M_2}_v(j))(a)$, i.e, $\mathit{next}^{M_1 \cup M_2}_v(i)$ and $\mathit{next}^{M_1 \cup M_2}_v(j)$ have the same $P_{i, j}$-color.
	We perform a case distinction on $(M_1 \cup M_2) \cap \{i, j\}$:
	\begin{itemize}[leftmargin=*]
		\item If $(M_1 \cup M_2) \cap \{i, j\} = \emptyset$: That is, none of the two copies are progressed in $M_1 \cup M_2$, so $\mathit{next}^{M_1 \cup M_2}_v(i) = s_i$ and $\mathit{next}^{M_1 \cup M_2}_v(j) = s_j$, and the claim follows directly from assumption \textbf{(A)}. 
		\item If $(M_1 \cup M_2) \cap \{i, j\} = \{i\}$ (or, analogously, $(M_1 \cup M_2) \cap \{i, j\} = \{j\}$):
		This implies that $\mathit{next}^{M_1 \cup M_2}_v(i) = s_i'$ and $\mathit{next}^{M_1 \cup M_2}_v(j) = s_j$.
		Now $(M_1 \cup M_2) \cap \{i, j\} = \{i\}$ implies that $M_1 \cap \{i, j\} = \{i\}$ or $M_2 \cap \{i, j\} = \{i\}$. 
		W.l.o.g., we assume the former, so $\mathit{next}^{M_1 \cup M_2}_v(i) = s_i' = \mathit{next}^{M_1}_v(i)$ and $\mathit{next}^{M_1 \cup M_2}_v(j) = s_j = \mathit{next}^{M_1}_v(j)$. 
		By the assumption that $M_1$ is a safe progress set, we get that $s_i'$ and $s_j$ have the same $P_{i, j}$-color.
		
		\item  If $(M_1 \cup M_2) \cap \{i, j\} = \{i, j\}$:
		This can occur in two cases (excluding symmetric cases): 
		\begin{itemize}[leftmargin=*]
			\item If $M_1 \cap \{i, j\} = \{i, j\}$ (or, analogously, $M_2 \cap \{i, j\} = \{i, j\}$): The claim then follows directly from  the assumption that $M_1$ is safe. 
			\item $M_1 \cap \{i, j\} = \{i\}$ and $M_2 \cap \{i, j\} = \{j\}$ (or, analogously, $M_1 \cap \{i, j\} = \{j\}$ and $M_2 \cap \{i, j\} = \{i\}$): 
			As $M_1$ is safe, we have that $\mathit{next}^{M_1}_v(i) = s_i'$ has the same $P_{i, j}$-color as $\mathit{next}^{M_1}_v(j) = s_j$. 
			Likewise, as $M_2$ is safe, we have that $\mathit{next}^{M_2}_v(i) = s_i$ has the same $P_{i, j}$-color as $\mathit{next}^{M_2}_v(j) = s_j'$. 
			Moreover, by \textbf{(A)} $s_i$ has the same $P_{i, j}$-color as $s_j$. 
			Together this implies that $\mathit{next}^{M_1 \cup M_2}_v(i) = s_i'$ has the same $P_{i, j}$-color as $\mathit{next}^{M_1 \cup M_2}_v(j) = s_j'$ as desired.
			\qedhere
		\end{itemize}
	\end{itemize}
\end{proof}

We can now formally define a winning strategy for the verifier. 
For simplicity, we sketch a non-positional strategy, i.e., a strategy that does not only depend on the current vertex but the previous finite play. 
As Büchi (and parity) games ar positionally determined \cite{martin1975borel}, this already implies the existence of a positional strategy (cf.~\Cref{sec:prelim}).

In the initial vertex in $\calG_{\calT, \varphi, 1}$, the verifier can observe the initial state window of all traces $\pi_1, \ldots, \pi_n$. 
This suffices to evaluate all state formulas in $\psi$ (as they only depend on the first state). 
Based on this evaluation, the verifier can decide if satisfying $\psi_\mathit{phase}$ is necessary to satisfy $\psi$. 
\begin{itemize}[leftmargin=*]
	\item If this is not the case, i.e., $\psi$ already holds based on the state formulas alone, the verifier repeatedly progresses all stutterings $\beta_1, \ldots, \beta_n$. This ensures some fair stuttering and satisfying $\psi_\mathit{phase}$ is not necessary anyway.
	\item If not, i.e., the verifier needs to satisfy $\psi_\mathit{phase}$ in order to fulfill $\psi$, it needs to progress such that the color sequences align. 
	As already outlined in \Cref{sub:admissible}, we archive this by using \Cref{lem:union} and always progressing as many traces as possible. 
	For each vertex $v$, we define $M^\mathit{max}_v$ as the unique maximal safe progress set in vertex $v$ (defined as the union of all safe progress sets), which is safe according to \Cref{lem:union}.
	For each vertex $v$, the verifier now progresses  all traces in $M^\mathit{max}_v$, i.e., we define 
	\begin{align*}
		\sigma^\mathit{max}(v) := \big\langle \ustage, \big[\pi_i \mapsto [s_i, s_i'] \big]_{i=1}^n, \big[\beta_i \mapsto \mathit{ite}(i \in M^\mathit{max}_v, 1, 0) \big]_{i=1}^n, \{\beta_i \mid i \in M_v^\mathit{max}\}, q \big\rangle.
	\end{align*}
	That is, the verifier always (locally) progresses as many stutterings as possible while ensuring that the coloring constraints in $\psi_\mathit{phase}$ holds in the next step.
\end{itemize}

\noindent
Let $\sigma^\mathit{max}$ be the resulting strategy for the verifier. 
It is not hard to see that $\sigma^\mathit{max}$ wins $\calG_{\calT, \varphi, 1}$:
Strategy $\sigma^\mathit{max}$ always progresses as many stutterings as possible in order to satisfy the coloring constraints in $\psi_\mathit{phase}$.
On any paths $p_1, \ldots, p_n$ for $\pi_1, \ldots, \pi_n$ that satisfy $\exists \beta_1 \stutter \pi_1, \ldots, \beta_n \stutter \pi_n \ldot \psi_\mathit{phase}$, strategy  $\sigma^\mathit{max}$ automatically aligns them such that $\psi_\mathit{phase}$ holds.
That is each trace is progressed as long as possible and if it cannot be progressed without violating $\psi_\mathit{phase}$, it automatically ``waits'' until it can change its color \emph{together} with the other traces. 
Consequently, we get: 

\begin{restatable}{lemma}{strat}
	By following strategy $\sigma^\mathit{max}$, the verifier wins $\game{\calT}{\varphi}{1}$. 
\end{restatable}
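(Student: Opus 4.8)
The plan is to show that $\sigma^{\mathit{max}}$ is winning from every vertex in $V_\mathit{init}$ by discharging the two obligations hidden in the Büchi objective. Since $\varphi$ is admissible, every stuttering is existentially quantified, so $\trajVars_\forall = \emptyset$ and the acceptance formula collapses to $\psi_\mathit{mod} \equiv \big(\bigwedge_{i=1}^n \ltlG\ltlF \moved_{\beta_i}\big) \land \psi$. A $\sigma^{\mathit{max}}$-play is therefore won exactly when (i) every stuttering is progressed infinitely often (fairness) and (ii) the induced stuttered traces satisfy $\psi$. I would first dispatch the state formulas: as $\psi$ is a boolean combination of state formulas and a single positively occurring phase formula $\psi_\mathit{phase}$, and the state formulas carry no temporal operators, their truth is fixed by the common initial states (all traces start in $s_{0,\calT}$). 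By monotonicity in $\psi_\mathit{phase}$, either these values already make $\psi$ true, in which case $\sigma^{\mathit{max}}$ simply progresses all stutterings every round and wins trivially; or $\calT \models \varphi$ forces $\psi_\mathit{phase}$ to be satisfiable, and it remains to show $\sigma^{\mathit{max}}$ secures $\psi_\mathit{phase}$ together with fairness.

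For obligation (ii) I would prove, by induction along any $\sigma^{\mathit{max}}$-play, the invariant \textbf{(A)} that every verifier vertex locally satisfies the coloring constraints, i.e. $\ell_\calT(s_i)(a) = \ell_\calT(s_j)(a)$ for all $i<j$ and $a \in P_{i,j}$. The base case is immediate from the common state $s_{0,\calT}$. For the step, $\sigma^{\mathit{max}}$ plays the maximal safe progress set $M^\mathit{max}_v$, which exists by \Cref{lem:union} (the empty set being safe under \textbf{(A)}); safeness means the constraints hold at the $\mathit{next}^{M^\mathit{max}_v}$-states, so the ensuing $\ustage$-stage evaluation respects $\psi_\mathit{phase}$, and after trimming and the refuter's append the next verifier vertex again satisfies \textbf{(A)}. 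Hence every $\ustage$-stage evaluation preserves the coloring and $\psi_\mathit{phase}$, and thus $\psi$, holds on the play.

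The crux is obligation (i), and this is where I expect the main difficulty. The intended argument is a domination: from $\calT \models \varphi$ applied to the committed paths generated by the play we obtain a fair coloring-stuttering $(r_i)$, and the goal is to show $\sigma^{\mathit{max}}$ advances each trace at least as fast. Writing $d_i(t)$ and $e_i(t)$ for the positions of $\sigma^{\mathit{max}}$ and of $(r_i)$ at the $t$-th $\ustage$-stage, the inductive claim is $d_i(t) \geq e_i(t)$; the only nontrivial case is when $(r_i)$ progresses a trace $i$ while $\sigma^{\mathit{max}}$ is tight, $d_i(t)=e_i(t)$. There I would invoke \Cref{lem:union} again: the set of tight, witness-progressing traces is itself a safe progress set, hence contained in $M^\mathit{max}_v$, so $\sigma^{\mathit{max}}$ also progresses $i$ and domination is preserved. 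Fairness of $(r_i)$ then transfers to $\sigma^{\mathit{max}}$, completing the nontrivial direction of the theorem.

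The genuine obstacle is reconciling the one-step lookahead of the window $Z=1$ with the refuter's freedom to retract an offered successor whenever the corresponding trace is \emph{not} progressed: the witness $(r_i)$ lives on the committed path, yet the safety test feeding $M^\mathit{max}_v$ refers to the \emph{currently offered} successor, which need not equal the committed next state at a non-progressing step. The hard part will be closing this gap. I would do so by observing that at every step where $\sigma^{\mathit{max}}$ actually commits a transition the accepted offer \emph{is} the committed next state, and by taking $(r_i)$ to be the pointwise-maximal coloring-stuttering; alternatively, I would argue the contrapositive, showing that a permanently frozen trace forces two traces into incompatible color sequences, which admits no fair coloring-stuttering and so contradicts $\calT \models \varphi$ (using determinacy of Büchi games to reduce to a single lost play, and carefully avoiding any König-type compactness since $\calT$ may be infinite-state).
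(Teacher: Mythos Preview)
Your plan follows the same high-level route as the paper's—use the maximal safe progress set, maintain the color invariant, and argue fairness—but you go considerably further than the paper does: the paper does not actually \emph{prove} this lemma. It offers a one-paragraph informal justification (``each trace is progressed as long as possible and $\ldots$ automatically waits until it can change its color together with the other traces'') and then simply states the result. Your decomposition into obligations (i) fairness and (ii) the invariant \textbf{(A)}, and the inductive maintenance of \textbf{(A)} via the safety of $M^{\mathit{max}}_v$, is precisely the argument one wants for (ii) and is correct.

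The retraction obstacle you flag for (i) is genuine and is \emph{not} addressed by the paper; its phrase ``on any paths $p_1,\ldots,p_n$'' silently treats the refuter as committed to fixed paths, which the $Z{=}1$ trimming rule \textbf{(3.2)} does not enforce (a non-progressed window is cut back to $[s_i]$, so the refuter may offer a different successor next round). Your first two resolutions do not escape the circularity you diagnosed: the ``committed offer equals committed next state'' observation is exactly what you are trying to establish, and a pointwise-maximal witness need not exist. The contrapositive is the one to pursue, and it goes through without compactness. The key extra observation that makes it clean: once the stuck set $I$ is fixed and $t\ge t_0$, invariant \textbf{(A)} forces $c_{i,j}(s_j^{(t)})=c_{i,j}(s_i)$ for every $j$, so every committed path in $J$ has eventually-constant $P_{i,j}$-color. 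Since \emph{no} nonempty $M\subseteq I$ is safe at $t_0$ (else $M\subseteq M^{\mathit{max}}_v$, contradicting $M\cap M^{\mathit{max}}_v=\emptyset$), extending each stuck $\hat p_i$ by the time-$t_0$ offer yields concrete paths in $\calT$ on which one can exhibit a pair whose reduced $P_{i,j}$-color sequences diverge (immediately when the blocker lies in $J$; via the unsafety of $I$ itself when all blockers lie in $I$), contradicting $\calT\models\varphi$. That last case split deserves care in the write-up, but the structure you outline is sound.
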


\fi

\end{document}
\endinput